\renewcommand\paragraph{\@startsection{paragraph}{4}{\z@}%
            {-2.5ex\@plus -1ex \@minus -.25ex}%
            {1.25ex \@plus .25ex}%
            {\normalfont\normalsize\bfseries}}
\newtheorem{proposition}{Proposition}[section]
\newtheorem{theorem}{Theorem}[section]
\newtheorem{lemma}{Lemma}[section]
\newtheorem{assumption}{Assumption}[section]
\newtheorem{remark}{Remark}[section]
\let\oldmarginpar\marginpar
\renewcommand{\marginpar}[2][rectangle,draw,fill=black, text=white,text width= 2cm,rounded corners]{
    \oldmarginpar{
    \tiny \tikz \node at (0,0) [#1]{#2};}
    }
\newcommand*{\addFileDependency}[1]{
\typeout{(#1)}
%
%
\@addtofilelist{#1}
%
\IfFileExists{#1}{}{\typeout{No file #1.}}
}\makeatother
\begin{document}
\author{Amilcar Velez\\
Department of Economics \\ 
Cornell University \\ 
\url{amilcare@cornell.edu}
}

\bigskip

\title{The Local Projection Residual Bootstrap for AR(1) Models%
\footnote{I am deeply grateful to Ivan Canay, Federico Bugni, and Joel Horowitz for their guidance and support and for the extensive discussions that have helped shape the paper. 
I thank the editor, co-editor, and two anonymous referees for their comments and suggestions that have significantly helped to improve this paper. 
I am also thankful to Federico Crippa, Bruno Fava, Danil Fedchenko, Diego Huerta, Eleftheria Kelekidou, Pepe Montiel-Olea, Filip Obradovic, Mikkel Plagborg-Møller, Sebastian Sardon, and Ke-Li Xu for valuable comments and suggestions. Financial support from the Robert Eisner Memorial Fellowship and the Dissertation Year Fellowship is gratefully acknowledged.
}
}
\date{August 27, 2025}
\maketitle

\begin{spacing}{1.2}
\begin{abstract}
This paper proposes a local projection residual bootstrap method to construct confidence intervals for impulse response coefficients of AR(1) models. Our bootstrap method is based on the local projection (LP) approach and involves a residual bootstrap procedure applied to AR(1) models. We present theoretical results for our bootstrap method and proposed confidence intervals. First, we prove the uniform consistency of the LP-residual bootstrap over a large class of AR(1) models that allow for a unit root, conditional heteroskedasticity of unknown form, and martingale difference shocks. Then, we prove the asymptotic validity of our confidence intervals over the same class of AR(1) models. Finally, we show that the LP-residual bootstrap provides asymptotic refinements for confidence intervals on a restricted class of AR(1) models relative to those required for the uniform consistency of our bootstrap. 
\end{abstract}
\end{spacing}

\noindent KEYWORDS: Bootstrap, Local Projection, Uniform Inference, Asymptotic Refinements.

\thispagestyle{empty} 

\newpage
\setcounter{page}{1}
\onehalfspacing 

\section{Introduction}
%
This paper contributes to a growing literature on confidence interval construction for impulse response coefficients based on the local projection (LP) approach (\cite{Jorda2005}). In this literature, the LP approach estimates an impulse response coefficient as one of the slope coefficients in a linear regression of a future outcome on current or lag-augmented covariates (\cite{ramey2016macroeconomic}; \cite{NakamuraSteinsson2018}; \cite{MO-PM2020}). Recent theoretical results exist for the asymptotic validity of the confidence intervals constructed around the LP estimator, which hold over a large class of vector autoregressive models (\cite{xu2022}). Since these confidence intervals have small-sample coverage distortions (e.g., coverage probability is lower than expected), their bootstrap versions are recommended for practical use. However, theoretical results for these bootstrap versions are unknown, even for the AR(1) model. This paper proposes a different bootstrap method to construct LP confidence intervals with theoretical guarantees for a class of AR(1) models that allow for a unit root, conditional heteroskedasticity of unknown form, and martingale difference shocks.

We propose an LP-residual bootstrap method to construct confidence intervals for impulse response coefficients of AR(1) models. Our bootstrap method is based on the LP approach and involves a residual bootstrap procedure applied specifically to AR(1) models.\footnote{Section \ref{sec:lp-residual-bootstrap_VAR} presents the LP-residual bootstrap for VAR(p) models, but its theoretical properties are unknown and left for future research; see Remarks \ref{rem:var-stationary} and \ref{rem:var-stationar-asymp-refinement} for further discussion. Appendix \ref{appendix:simulations-var} reports a Monte Carlo simulation for the LP-residual bootstrap for vector autoregressive models.} Our bootstrap confidence intervals are centered at the LP estimator and use heteroskedasticity-consistent (HC) standard errors and a bootstrap critical value. Section \ref{sec:lp-residual-bootstrap} presents the details.

We rely on the asymptotic distribution theory initially developed in \cite{MO-PM2020} and generalized in \cite{xu2022}. 
In their framework, a root $R_n(h)$ based on the LP approach can be defined for a given horizon $h$ and a sample size $n$. 
Here, by a root we refer to a real-valued function depending on the data and an impulse response coefficient. Their results guarantee the root  $R_n(h)$ is asymptotically distributed as a standard normal distribution for a class of VAR models that allow for multiple unit roots and conditional heteroskedasticity of unknown form\textcolor{black}{, and even at intermediate horizons, i.e., horizons $h$ that are allowed to grow with $n$, e.g., $h = h_n \propto   n^{\zeta}$, $\zeta \in [0,1)$}. As a result, the root $R_n(h)$  can be used to construct a confidence interval $C_n(h, 1-\alpha)$ for an impulse response coefficient using a normal critical value (quantile of the asymptotic distribution).
Furthermore,  $C_n(h, 1-\alpha)$ has asymptotic coverage equal to the nominal level $1-\alpha$ uniformly over the parameter space (VAR model coefficients) and \textcolor{black}{a wide range of} intermediate horizons (e.g., \textcolor{black}{uniform over $h \le h_n$, where $h_n$ is any fixed sequence such that $h_n = o(n)$}). 
Nevertheless, Monte Carlo simulations report \textcolor{black}{that} $C_n(h, 1-\alpha)$ has a lower coverage probability than expected.  

We propose the LP-residual bootstrap method to approximate the distribution of the root $R_n(h)$ as an alternative to the asymptotic distribution. We use our approximation to calculate bootstrap-based critical values; see Section \ref{sec:bootstrap_cv} for the step-by-step procedure. Specifically, we construct a confidence interval $C_n^*(h,1-\alpha)$ for an impulse response coefficient using the root $R_n(h)$ and a bootstrap critical value; see Section \ref{sec:lp-residual-bootstrap} for details.   

Our first result proves the uniform consistency of the LP-residual bootstrap. More concretely, we demonstrate in Section \ref{sec:uniform_cons} that the distribution of the root $R_n(h)$ can be approximated by its bootstrap version uniformly over the parameter space (e.g., $\rho \in [-1,1]$) and \textcolor{black}{a wide range of} intermediate horizons (e.g., \textcolor{black}{uniform over $h \le h_n$, where $h_n$ is any fixed sequence such that $h_n = o(n)$}). 
Our result applies to a large class of AR(1) models that allow for a unit root, conditional heteroskedasticity of unknown form as in \cite{gonccalves2004bootstrapping}, which includes ARCH and GARCH shocks, and a sequence of shocks that satisfy the martingale difference assumption. To obtain this result, we prove the root  $R_n(h)$ is asymptotically distributed as a standard normal distribution for sequences of AR(1) models with i.i.d. shocks (Theorem \ref{thm:sequence_models}). In particular, we prove that a high-level assumption (Assumption 3 in \cite{MO-PM2020corrigendum} and Assumption 4 in \cite{xu2022}) necessary for the theoretical properties of $C_n(h,1-\alpha)$ can be verified for sequences of AR(1) models with i.i.d. shocks (Proposition \ref{proposition:A3}).


Our first result implies that the LP-residual bootstrap method provides asymptotically valid confidence intervals over a large class of AR(1) models that allow for a unit root, conditional heteroskedasticity of unknown form (e.g., GARCH shocks), and martingale difference shocks. Moreover, our confidence interval $C_n^*(h,1-\alpha)$ has an asymptotic coverage equal to the nominal level $1-\alpha$ uniformly over $\rho \in [-1,1]$ and \textcolor{black}{a wide range of} intermediate horizons. 

Our second set of results shows that the LP-residual bootstrap provides asymptotic refinements to the confidence intervals on a more restricted class of AR(1) models (e.g., \textcolor{black}{$|\rho|\le 1-a$, where $a \in (0,1)$,} \textcolor{black}{and} i.i.d. shocks \textcolor{black}{with} positive continuous density), that is,  the size of the error in coverage probability (ECP) of $C_n^*(h,1-\alpha)$ is $o(n^{-1})$, whereas the size of the ECP of $C_n(h,1-\alpha)$ is $O(n^{-1})$. 
More concretely, Theorem \ref{thm:rates_bootstrap} shows the ECP of $C_n^*(h,1-\alpha)$ is $o(n^{-(1+\epsilon)})$ for some $\epsilon \in (0,1/2)$. 
To obtain these results, we derive Edgeworth expansions for the distribution of the root $R_n(h)$ and its bootstrap version for a fixed $h$ and \textcolor{black}{$|\rho| \le 1-a$, where $a \in (0,1)$; that is, the Edgeworth expansions are obtained for stationary AR(1) models and fixed horizons}. An informal discussion to calculate the size of the ECP using Edgeworth expansions appears in Section \ref{sec:why_boots}, while the formal results are established in Section \ref{sec:formal-results}.

Other bootstrap methods to construct confidence intervals for the impulse response coefficients have been considered and recommended based on simulation studies in the growing literature on LP inference. 
\cite{MO-PM2020} use a wild bootstrap procedure to generate new samples and compute critical values, but the theoretical results for their bootstrap method are unknown. \cite{KilianKim2011} present a simulation study including a block-bootstrap method to construct confidence intervals based on the LP approach, but the theory of their block-bootstrap method is unknown; see Remarks \ref{rem:block-bootstrap} and \ref{rem:block-bootstrap-2} for alternative block-bootstrap procedures with theoretical guarantees. Recently, \cite{lusompa2021local} proposes a block wild bootstrap method for confidence interval construction that is point-wise valid for a class of stationary data-generating processes; however, his bootstrap method is not applicable for an AR(1) model with a unit root. In contrast, we present a bootstrap method based on the LP approach with theoretical guarantees for a class of AR(1) models that allow for a unit root, conditional heteroskedasticity of unknown form, and martingale difference shocks.

More broadly, we contribute to the literature on confidence interval construction for impulse response coefficients. For short horizons (fixed $h$), the problem of confidence interval construction has been studied by \cite{andrews1993exactly}, \cite{hansen1999grid}, \cite{inoue2002bootstrapping}, \cite{Jorda2005}, \cite{Mikusheva2007,mikusheva2015second}, among others. For long horizons ($\textcolor{black}{h} \textcolor{black}{=} h_n \textcolor{black}{\propto} (1-b)n$,  $b \in (0,1)$), the problem of confidence interval construction was \textcolor{black}{discussed} and revised by \textcolor{black}{\cite{phillips1998impulse}}, \cite{gospodinov2004asymptotic}, \cite{pesavento2006small}, and \cite{Mikusheva2012} since the \textcolor{black}{standard} methods for short horizons \textcolor{black}{may} produce invalid confidence intervals when the data-generating process allows for unit roots. Recently, the problem of confidence interval construction for intermediate horizons ($h_n = o\left(n\right)$) was addressed in \cite{MO-PM2020} and \cite{xu2022}, which was a case not covered in the literature. In this paper, we propose bootstrap confidence intervals that are asymptotically valid at short and intermediate horizons. 



\textcolor{black}{
We also contribute to the literature on uniform inference in autoregressive models, where the confidence intervals for impulse response coefficients are uniformly valid, that is, they have an asymptotic coverage equal to the nominal level uniformly over the parameter space (e.g., uniformly over $ \rho \in [-1,1]$ for the AR(1) model). 
\cite{Mikusheva2007,Mikusheva2012} shows that the grid bootstrap proposed by \cite{hansen1999grid} provides confidence sets that are uniformly valid for the impulse responses when the sequence of shocks is a martingale difference sequence with constant conditional variance. However, it is unknown if the grid bootstrap is uniformly valid for AR(1) models with GARCH shocks; we report simulations for the grid bootstrap in Section \ref{sec:simulation_study} and Online Supplemental Appendix \ref{sec:appendix_table}. 
\cite{InoueKillian2020} show that confidence intervals based on a lag-augmented autoregressive method are uniformly valid for impulse response coefficients when the sequence of shocks is i.i.d. It is unknown if their results hold for martingale difference shocks. 
\cite{MO-PM2020} and \cite{xu2022} show that confidence intervals based on (lag-augmented) local projections are uniformly valid for impulse response coefficients; nevertheless, Monte Carlo simulations report lower coverage probability than expected. 
In contrast, our bootstrap method produces confidence intervals that are uniformly valid for a larger class of martingale difference shocks with conditional heteroskedasticity of unknown form (allowing for GARCH shocks).  
}

The remainder of the paper is organized as follows. In Section \ref{section:preliminaries}, we describe the setup and previous results. In Section \ref{sec:lp-residual-bootstrap}, we introduce our bootstrap confidence interval and the LP-residual bootstrap. In Sections \ref{sec:uniform_cons} and \ref{sec:asymp_refin}, we study the theoretical properties of the LP-residual bootstrap: uniform consistency and asymptotic refinements. In Section \ref{sec:simulation_study}, we investigate the numerical performance of the LP-residual bootstrap using a small simulation study. In Section \ref{sec:lp-residual-bootstrap_VAR}, we describe how to implement the LP-residual bootstrap for VAR models. Finally, in Section \ref{sec:conclusion}, we present concluding remarks. All the proofs are presented in Appendices \ref{sec:appendix_a} and \ref{sec:appendix_b}, and \textcolor{black}{Online} Supplemental Appendices \ref{sec:appendix_online1} and \ref{sec:appendix_online2}. Additional simulation results appear in \textcolor{black}{Online} Supplemental Appendix \ref{sec:appendix_table}.

\section{Setup and Previous Results on Local Projection}\label{section:preliminaries}

Consider an AR(1) model, 
\begin{equation}\label{eq:ar1_model}
    y_t = \rho y_{t-1} + u_t, \quad y_0 = 0, \quad \rho \in [-1,1]~.
\end{equation}
Denote the impulse response coefficient at horizon $h \in \mathbf{N}$ by 
\begin{equation}\label{eq:beta}
    \beta(\rho,h) \equiv \rho^h~.
\end{equation}
%
%
An estimator for $\beta(\rho,h)$ based on the LP approach is obtained as the slope coefficient of $y_t$ in the linear regression of $y_{t+h}$ on $y_t$ and $y_{t-1}$,
\begin{equation}\label{eq:la_lp}
    y_{t+h} = \hat{\beta}_n(h) y_t + \hat{\gamma}_n(h) y_{t-1} + \hat{\xi}_t(h), \quad t=1,\ldots,n-h~, 
\end{equation}
where $(\hat{\beta}_n(h),~\hat{\gamma}_n(h))$ and $\{\hat{\xi}_t(h): 1 \le t \le n-h\}$ are the coefficient vector and residuals of the linear regression \eqref{eq:la_lp}, respectively. This \emph{lag-augmented} LP approach was developed in  \cite{MO-PM2020}, where they give conditions under which the coefficient $\hat{\beta}_n(h)$ consistently estimates $\beta(\rho,h)$. 
Equation \eqref{eq:la_lp} is a lag-augmented  LP regression since the coefficient on $y_{t-1}$ is known to be zero under \eqref{eq:ar1_model}; see Remark \ref{rem:la-lp-purpose} for additional details on this LP approach.

Let $\hat{s}_n(h)$ be the heteroskedasticity-consistent (HC) standard error of $\hat{\beta}_n(h)$ in the lag-augmented LP regression \eqref{eq:la_lp}, which can be computed as follows

\begin{equation}\label{eq:HC_se}
    \hat{s}_n(h) \equiv \left( \sum_{t=1}^{n-h} \hat{u}_t(h)^2 \right)^{-1/2} \left(\sum_{t=1}^{n-h} \hat{\xi}_t(h)^2 \hat{u}_t(h)^2 \right)^{1/2} \left( \sum_{t=1}^{n-h} \hat{u}_t(h)^2 \right)^{-1/2}  ~,
\end{equation}
where $\hat{u}_t(h) \equiv y_t- \hat{\rho}_n(h) y_{t-1}$ and
\begin{equation}\label{eq:rho_hat_h}
    \hat{\rho}_n(h) \equiv \left(\sum_{t=1}^{n-h} y_{t-1}^2\right)^{-1} \left( \sum_{t=1}^{n-h} y_t y_{t-1} \right) ~.
\end{equation}

For a given $h \in \textbf{N}$, we consider the following real-valued root for the parameter $\beta(\rho,h)$:
\begin{equation}\label{eq:root}
    R_n(h) \equiv \frac{\hat{\beta}_n(h) - \beta(\rho,h)}{\hat{s}_n(h)}~,
\end{equation}
where $\beta(\rho,h)$ is as in \eqref{eq:beta}, $\hat{\beta}_n(h)$ is computed as in \eqref{eq:la_lp}, and $\hat{s}_n(h)$ is as in \eqref{eq:HC_se}. We denote the distribution of the root $R_n(h)$ by
\begin{equation}\label{eq:cdf_root}
    J_n(x,h,P,\rho) \equiv P_{\rho} \left( R_n(h) \le x \right)~,
\end{equation}  
where  $x \in \mathbf{R}$, $h \in \textbf{N}$, $P$ is the distribution of the shocks $\{u_t: t \ge 1 \}$, $\rho \in \textbf{R}$, and $P_{\rho}$ denote the probability distribution of the sequence $\{y_t: t \ge 1\}$, which is defined jointly by the distribution  $P$ and the parameter $\rho$ in \eqref{eq:ar1_model}. 

Let $c_n(h,1-\alpha)$ be the $1-\alpha$ quantile of $|R_n(h)|$ under the distribution $P_{\rho}$,
\begin{equation}\label{eq:cv_finite}
    c_n(h,1-\alpha)  \equiv \inf \left \{ u \in \textbf{R} : P_{\rho} \left( |R_n(h)| \le u \right) \ge 1-\alpha  \right\}~.
\end{equation}
Ideally, we would use the root $R_n(h)$ and the critical value $c_n(h,1-\alpha)$ to construct confidence sets for $\beta(\rho,h)$ with a coverage probability of 1-$\alpha$. That is collecting all the parameters $\beta(\rho,h)$ such that $|R_n(h)| \le c_n(h,1-\alpha)$, which is equivalent to defining the next confidence interval
\begin{align*}
    \tilde{C}_n(h,1-\alpha) \equiv  \left[ \hat{\beta}_n(h) - c_n(h,1-\alpha)~\hat{s}_n(h),~\hat{\beta}_n(h) + c_n(h,1-\alpha)~ \hat{s}_n(h)\right] ~.
\end{align*}
However, the critical value $c_n(h,1-\alpha)$ is unknown since the distribution of the root is unknown in general. As a result, the confidence interval $\tilde{C}_n(h,1-\alpha) $ is infeasible. For this reason, it is common to approximate the distribution of the root $R_n(h)$ relying on asymptotic distribution theory or bootstrap methods to approximate the infeasible $c_n(h,1-\alpha)$.

\subsection{Previous Results}\label{sec:prev_result}

The asymptotic distribution theory developed in \cite{MO-PM2020} and \cite{xu2022} implies that the distribution $J_n(x,h,P,\rho)$ converges to the standard normal distribution $\Phi(x)$ whenever certain assumptions on the distribution of the shocks $P$ hold. Moreover, this convergence is uniform over the values of $\rho \in [-1,1]$ and \textcolor{black}{a wide range of} intermediate horizons, that is 
\begin{equation}\label{eq:cdf_AA}
        \sup_{|\rho|\le 1} ~ \sup_{h \le h_n} ~  \sup_{x\in \mathbf{R}} |J_n(x,h,P,\rho) - \Phi(x)|    \to 0 \quad \text{as} \quad n \to \infty~,
\end{equation}
where \textcolor{black}{$h_n$ is any fixed sequence such that} $h_n \le n$ and $h_n = o\left(n\right)$. Assumptions \ref{xu_assumptions} and \ref{A3:MO-PM} in Section \ref{sec:uniform_cons} are sufficient conditions on the distribution $P$ to obtain \eqref{eq:cdf_AA} due to Theorem 2 in \cite{xu2022}. 

The confidence interval for $\beta(\rho,h)$ based on asymptotic distribution theory is defined as
\begin{equation}\label{eq:CI_MOPM}
        C_n(h,1-\alpha) \equiv \left[ \hat{\beta}_n(h) - z_{1-\alpha/2}~\hat{s}_n(h),~\hat{\beta}_n(h) + z_{1-\alpha/2}~ \hat{s}_n(h)\right] ~,
\end{equation}
where $z_{1-\alpha/2} \equiv \Phi^{-1}(1-\alpha/2)$ is the $1-\alpha/2$ quantile of the standard normal distribution. The result in \eqref{eq:cdf_AA} implies that \textcolor{black}{the} confidence interval $C_n(h,1-\alpha)$ is uniformly asymptotically valid in the sense that its asymptotic coverage probability is equal to the nominal level $1-\alpha$ uniformly over $\rho$ and \textcolor{black}{a wide range of intermediate horizons} $h$,
\begin{equation*}
        \sup_{|\rho|\le 1} ~ \sup_{h \le h_n} \left| P_{\rho} \left( \beta(\rho,h) \in C_n(h,1-\alpha) \right) - (1-\alpha) \right|   \to 0 \quad \text{as} \quad n \to \infty~,  
\end{equation*}
where \textcolor{black}{$h_n$ is any fixed sequence such that} $h_n \le n$ and $h_n = o\left(n\right)$. Three features of $C_n(h,1-\alpha)$ deserve further discussion. First, it is simpler to compute than \textcolor{black}{the} available alternatives in the sense that it does not require any tuning parameter. It is common to use heteroskedasticity- and autocorrelation-robust (HAR) standard errors for inference whenever we have dependent data. The major complication of HAR standard errors is the choice of the (truncation) tuning parameter; see  \cite{lazarus2018har}. In contrast, the HC standard errors $\hat{s}_n(h)$ defined in \eqref{eq:HC_se} are simple to compute and sufficient for inference under certain conditions on the distribution $P$; see Remark \ref{rem:hc_se} for further explanation. Second, the uniform asymptotic validity of the confidence interval $C_n(h,1-\alpha)$ avoids pre-testing procedures about the nature of the data-generating process ($|\rho|<1$ vs $\rho=1$) that can distort inference; see \cite{Mikusheva2007}. In particular, inference using $C_n(h,1-\alpha)$ holds regardless of the value of $\rho \in [-1,1]$. Third, the confidence interval $C_n(h,1-\alpha)$ has theoretical guarantees at intermediate horizons (e.g., \textcolor{black}{$h=h_n \propto n^{\zeta}$, $\zeta \in (0,1)$}). This is an important feature for inference on impulse response coefficients at intermediate horizons. Other methods to construct confidence intervals that work at short horizons ($h$ fixed) may have problems at long and intermediate horizons; see \textcolor{black}{\cite{phillips1998impulse}}, \cite{gospodinov2004asymptotic}, \cite{pesavento2006small}, \cite{Mikusheva2012}, and \cite{MO-PM2020} for additional discussion.

\begin{remark}\label{rem:hc_se}
The HC standard errors $\hat{s}_n(h)$ defined in \eqref{eq:HC_se} are sufficient for the construction of valid confidence intervals under certain conditions on the distribution $P$. In particular, as it was pointed out by \cite{xu2022}, it is sufficient and necessary that the \emph{scores} $\{ \xi_t(\rho,h) u_t : 1 \le t \le n-h \}$ be serially uncorrelated, where $\xi_t(\rho,h) \equiv \sum_{\ell=1}^{h} \rho^{h-\ell} u_{t+\ell}$. To explain the sufficiency of this condition, we use the derivations presented on page 1811 in \cite{MO-PM2020} that imply that the root $R_n(h)$ defined in \eqref{eq:root} can be written as follows
$$ \frac{\left((n-h)^{-1/2} \sum_{t=1}^{n-h} \xi_t(\rho,h) u_t \right)}{  E\left[ \xi_t(\rho,h)^2 u_t^2\right]^{1/2}} \times \frac{ \left[(n-h)^{-1} \sum_{t=1}^{n-h} \hat{\xi}_t(h)^2 \hat{u}_t(h)^2\right]^{-1/2} }{E\left[ \xi_t(\rho,h)^2 u_t^2\right]^{-1/2}} + \varepsilon_n(\rho,h)~,$$  
where $\varepsilon_n(\rho,h)$ is a remainder error term. We derive three implications under Assumptions \ref{xu_assumptions} and \ref{A3:MO-PM}, presented in Section \ref{sec:uniform_cons}. First, the term in parentheses converges to a normal distribution with variance correctly scaled by the denominator when the scores are serially uncorrelated. This condition is guaranteed by part (ii) of Assumption \ref{xu_assumptions}. Second, the term between brackets converges in probability to its denominator due to serially uncorrelated scores. Third, the remainder error term $\varepsilon_n(\rho,h)$ converges in probability to zero. Importantly, \cite{xu2022} proposed alternative standard errors for the construction of confidence intervals under serially correlated scores.
\end{remark}

\begin{remark}\label{rem:la-lp-purpose}
    The lag-augmented LP regression has the purpose of making the effective regressor of interest stationary. To see this, let us use the AR(1) model in \eqref{eq:ar1_model} to obtain $ y_{t+h} = \beta(\rho,h) y_t + \xi_t(\rho,h)$, where $\xi_t(\rho,h) = \sum_{\ell=1}^h \rho^{h-\ell} u_{t+h}$, which can be rewritten as 
    $$y_{t+h} = \beta(\rho,h) u_t + \rho \beta(\rho,h) y_{t-1} + \xi_t(\rho,h)~. $$
    Based on the previous equality, an estimator for $\beta(\rho,h)$ is defined as the slope coefficient of $u_t$ in the linear regression of $y_{t+h}$ on $u_t$ and $y_{t-1}$. This estimator is ideal since the effective regressor is stationary (by assumption). However, this regression is unfeasible since $u_t$ is not observed. Nevertheless, the estimator can also be obtained in the lag-augmented LP regression of $y_{t+h}$ on $y_t$ and $y_{t-1}$ since $y_t$ is a linear combination of $u_t$ and $y_{t-1}$ due to \eqref{eq:ar1_model}.
\end{remark}

\section{The LP-Residual Bootstrap}\label{sec:lp-residual-bootstrap}

This paper proposes an LP-residual bootstrap for confidence interval construction. Our confidence interval for the impulse response coefficient $\beta(\rho,h)$ is defined as
\begin{equation}\label{eq:bootstrap_CI}
    C_n^{*}(h,1-\alpha) \equiv \left[ \hat{\beta}_n(h) - c_n^*(h,1-\alpha)~\hat{s}_n(h),~\hat{\beta}_n(h) + c_n^*(h,1-\alpha)~ \hat{s}_n(h)\right] ~,
\end{equation}
where $\hat{\beta}_n(h)$ is an estimator for $\beta(\rho,h)$ defined in \eqref{eq:la_lp}, $\hat{s}_n(h)$ is its heteroskedasticity-consistent (HC) standard error defined in \eqref{eq:HC_se}, and $c_n^*(h, 1-\alpha)$ is a bootstrap critical value defined in \eqref{eq:cv_bootstrap}. %

\subsection{Bootstrap Critical Value}\label{sec:bootstrap_cv}

Let $Y^{(n)} \equiv \{y_t : 1 \le t \le n\}$ be data generated by \eqref{eq:ar1_model}. Let $c_n^*(h, 1-\alpha)$ be the bootstrap critical value involving the following steps:

\begin{itemize}
   \item[\textbf{Step 1}:]   Estimate $\rho$ in the AR(1) model defined in \eqref{eq:ar1_model} with the data $Y^{(n)}$ using linear regression, denoted by 
   \begin{equation}\label{eq:rho_hat}
       \hat{\rho}_n \equiv \left(\sum_{t=1}^n y_{t-1}^2\right)^{-1}\left(\sum_{t=1}^n y_{t-1}y_t\right)~,
   \end{equation}
   and compute the centered residuals
   \begin{equation}\label{eq:centered_residuals}
         \{ \Tilde{u}_t \equiv \hat{u}_t - n^{-1}\sum_{t=1}^n \hat{u}_t : 1 \le t \le n   \}~,
   \end{equation}
   where $\hat{u}_t \equiv y_{t} - \hat{\rho}_n y_{t-1}$. 

   \item[\textbf{Step 2}:] Generate a new sample of size $n$ using \eqref{eq:ar1_model}, \eqref{eq:rho_hat}, and \eqref{eq:centered_residuals}. Define the sample as 
   \begin{equation*}
       y_{b,t}^* = \hat{\rho}_n y_{b,t-1}^* + u_{b,t}^*~, \quad y_{b,0}^* = 0~,~ t=1,\ldots,n~,
   \end{equation*}
   where $\{u_{b,t}^* : 1 \le t \le n\}$ is a random sample from the empirical distribution of the centered residuals defined in \eqref{eq:centered_residuals}. 
   The new sample $\{ y_{b,t}^* : 1 \le t \le n \} $ is called the bootstrap sample.

   \item[\textbf{Step 3}:] Compute $\hat{\beta}_{b,n}^*(h)$ and $\hat{s}_{b,n}^*(h)$ as in \eqref{eq:la_lp} and \eqref{eq:HC_se} using the lag-augmented LP regression and the bootstrap sample $\{ y_{b,t}^* : 1 \le t \le n \} $. Define the bootstrap version of the root
   \begin{equation}\label{eq:root_bootstrap}
       R_{b,n}^*(h) =   \frac{\hat{\beta}_{b,n}^*(h) - \beta(\hat{\rho}_n,h)}{\hat{s}_{b,n}^*(h)} ~,
   \end{equation}
   where $\beta(\rho,h)$ and $\hat{\rho}_n$ are as in \eqref{eq:beta} and \eqref{eq:rho_hat}, respectively.

   \item[\textbf{Step 4}:] Define the bootstrap critical value as the $1-\alpha$ quantile of $|R_{b,n}^*(h)|$ conditional on the data $Y^{(n)}$, denoted by
   \begin{equation}\label{eq:cv_bootstrap}
       c_n^{*}(h, 1-\alpha) \equiv \inf \left \{ u \in \textbf{R} :  P_{\rho}\left ( |R_{b,n}^*(h)| \le u \mid Y^{(n)} \right) \ge 1-\alpha  \right\}~.
   \end{equation}
\end{itemize} 

We named this procedure the LP-residual bootstrap due to steps 2 and 3. Step 2 generates bootstrap samples based on the estimated model and a residual bootstrap procedure. Step 3 computes the bootstrap version of the root based on the lag-augmented LP regression. To our knowledge, this bootstrap procedure is new; see Remark \ref{rem:wild-bootstrap} and \ref{rem:block-bootstrap} for other bootstrap procedures involving roots based on LP estimators. 

We use the bootstrap critical value $c_n^{*}(h, 1-\alpha)$ in the construction of the confidence interval defined in \eqref{eq:bootstrap_CI}. The explicit formula in \eqref{eq:cv_bootstrap} has two implications. First, the bootstrap critical value $c_n^{*}(h, 1-\alpha)$ depends on the data, the sample size $n$, and the horizon $h$. Second, we can compute $c_n^{*}(h, 1-\alpha)$ with perfect accuracy whenever we use the exact empirical distribution of the centered residuals defined in \eqref{eq:centered_residuals}. However, the computation of an exact distribution can be computationally demanding; therefore, it is common to approximate it using Monte Carlo procedures as we describe in Remark \ref{rem:cv_bootstrap}, which has a theoretical justification due to the Glivenko–Cantelli theorem.

\begin{remark}\label{rem:cv_bootstrap}
It is a common practice to approximate the bootstrap critical value $c_n^{*}(h, 1-\alpha)$ using a Monte Carlo procedure (\cite{Horowitz2001, Horowitz2019}). We generate $B$ bootstrap samples of size \textcolor{black}{$n$}, where each $b$-th bootstrap sample $\{y_{b,t}^*: 1 \le t \le n\}$ is generated as in step 2. We then obtain $\{ |R_{b,n}^*(h)| : 1 \le b \le B \}$, where each $R_{b,n}^*(h)$ is computed as in step 3. Finally, we approximate the bootstrap critical value $c_n^{*}(h, 1-\alpha)$ by the $1-\alpha$ quantile of $\{ |R_{b,n}^*(h)| : 1 \le b \le B\}$, denoted by
   \begin{equation*}
       c_{b,n}^*(h, 1-\alpha) \equiv \inf \left \{ u \in \textbf{R} : \frac{1}{B} \sum_{b=1}^B I\left\{ |R_{b,n}^*(h)| \le u \right\} \ge 1-\alpha  \right\}~.
   \end{equation*} 
The accuracy of the approximation improves as the number of bootstrap samples $B$ increases. We use $B=1000$ in our simulation study presented in Section \ref{sec:simulation_study}. 
\end{remark}

\begin{remark}\label{rem:wild-bootstrap}
Another bootstrap procedure to approximate the infeasible critical value $c_n(h,1-\alpha)$ is presented in Section 5 of \cite{MO-PM2020}. 
They use the wild bootstrap procedure described in \cite{gonccalves2004bootstrapping}. For this reason, we name their procedure the LP-wild bootstrap. The only difference with respect to the LP-residual bootstrap is in Step 2. The LP-wild bootstrap defines the shocks as follows: $ u_{b,t}^* =   \tilde{u}_t z_{b,t}$ for all $t = 1, \ldots, n$, where $\{\tilde{u}_t: 1 \le t \le n\}$ are the centered residuals defined in \eqref{eq:centered_residuals} and $\{z_{b,t} : 1 \le t \le n \}$ is an i.i.d. sequence of standard normal random variables independent of the data $Y^{(n)}$. To our knowledge, the theoretical properties of the LP-wild bootstrap are unknown. We include the LP-wild bootstrap in our simulation study presented in Section \ref{sec:simulation_study}. 
\end{remark}

\begin{remark}\label{rem:bootstrap-equal-tailed}
An alternative to the symmetric percentile-t confidence interval defined in \eqref{eq:bootstrap_CI} is the equal-tailed percentile-t confidence interval. The latter is proposed and recommended in Section 5 of \cite{MO-PM2020}, while the former has been found to perform better in simulations reported by \cite{gonccalves2004bootstrapping}. Furthermore, symmetric confidence intervals are known to perform better asymptotically in terms of coverage error in the case of i.i.d. data; see Section 3.6 in \cite{Hall1992TheBA}. For these reasons, we focus on and study the properties of the symmetric percentile-t confidence interval in the next sections.  
Remark \ref{rem:percentile-t_bootstrap} presents additional discussion of the equal-tailed percentile-t confidence interval based on the LP-residual bootstrap. We include equal-tailed percentile-t confidence intervals based on both LP-residual and LP-wild bootstrap in our simulation study in Section \ref{sec:simulation_study}.
\end{remark}
 
\begin{remark}\label{rem:miss-specification}
We propose the LP-residual bootstrap method for constructing confidence intervals, aiming to provide a more accurate asymptotic approximation than the first-order asymptotic distribution for conducting inference. In Sections \ref{sec:uniform_cons} and \ref{sec:asymp_refin}, we study the validity of this bootstrap method and its theoretical properties under assumptions on the distribution of the shocks and under the assumption of correct specification, i.e., the data are generated from the AR(1) model in \eqref{eq:ar1_model}. To our knowledge, the theoretical properties of the root $R_n(h)$ for general forms of misspecification are unknown. Recent work by \cite{olea2024double} imply that $R_n(h)$ is still asymptotically pivotal under a specific form of local misspecification. 
The analysis of the theoretical properties of the LP-residual bootstrap under misspecification is \textcolor{black}{outside} the scope of this paper. 
%
\end{remark}

\section{Uniform Consistency}\label{sec:uniform_cons}

We show the uniform consistency of the LP-residual bootstrap (Theorem \ref{thm:consistency}) and that our proposed bootstrap confidence interval $C_n^*(h,1-\alpha)$ defined in  \eqref{eq:bootstrap_CI} is uniformly asymptotically valid (Theorem \ref{thm:CI_bootstrap}). In what follows, we first present and discuss the assumptions, and we then establish the results. 
 
The following assumption imposes restrictions on the distributions of the shocks $P$. These assumptions are based on the general framework developed by \cite{xu2022} that generalized the work of \cite{MO-PM2020}. 

\begin{assumption}\label{xu_assumptions} 
\hspace{1cm}\vspace{-0.25cm}

\begin{enumerate}
    \item[i)] $\{u_t: 1 \le t \le n\}$ is covariance-stationary and satisfies $E[u_t \mid \{u_s \}_{s < t}] = 0$ almost surely. 

    \item[ii)] $E[u_t^2u_{t-s}u_{t-r}] = 0$ for all $s \neq r$, for all $t,r,s \ge 1$.    
    
    \item[iii)] $\{u_t: 1 \le t \le n\}$  is strong mixing with mixing numbers $\{ \alpha(j): j \ge 1\}$. There exists $\zeta > 2$, $\epsilon>1$, and $C_{\alpha}<\infty$, such that $\alpha(j) \le C_{\alpha}j^{-2\zeta \epsilon/(\zeta-2)}$, for all $j$.
    
    \item[iv)] For $\zeta$ defined in (iii), $E[u_t^{8\zeta}] \le C_8 < \infty$, and $E[u_t^2 \mid \{ u_s \}_{s<t}] \ge C_\sigma $ almost surely.
    
\end{enumerate}
\end{assumption}

Part (i) of Assumption \ref{xu_assumptions} assumes the shocks are a martingale difference sequence. This assumption allows for uncorrelated dependent shocks and implies that the shock $u_t$ is uncorrelated with $y_{t-1}$. Part (ii) in Assumption \ref{xu_assumptions} includes a large class of conditional heteroskedastic autoregressive models (e.g., ARCH and GARCH shocks), and it has been common in the literature; for instance, \cite{gonccalves2004bootstrapping} use a similar assumption (Assumption A') to prove the asymptotic consistency of the wild bootstrap for autoregressive processes. Moreover, this assumption implies that the process $\{ \xi_t(\rho,h) u_t : 1 \le t \le n-h \}$ is serially uncorrelated, where $\xi_t(\rho,h) \equiv \sum_{\ell=1}^{h} \rho^{h-\ell} u_{t+\ell}$, which is important for the use of HC standard errors as we discussed in Remark \ref{rem:hc_se}. Part (iii) and (iv) of Assumption \ref{xu_assumptions} are mild regularity conditions on the distribution of the shocks $P$ to establish uniform bounds of approximation errors, which can be relaxed if stronger assumptions are imposed over the serial dependence of the shocks; see Assumption \ref{appendix:AV_assumptions} in Appendix \ref{sec:appendix_b}.

The next assumption is a high-level assumption and imposes additional restrictions on the distributions of the shocks $P$.
 
\begin{assumption}\label{A3:MO-PM}  \hspace{1cm}\vspace{-0.25cm}

\begin{equation*}
     \lim_{M \to \infty} ~ \lim_{n \to \infty} ~ \inf_{ |\rho| \le 1 } 
 ~ P_{\rho} \left(~  g(\rho,n)^{-2} ~n^{-1}\sum_{t=1}^{n}y_{t-1}^2  \ge 1/M ~\right) = 1~,  
\end{equation*}
where $ g(\rho,k) = \left(\sum_{\ell=0}^{k-1} ~ \rho^{2 \ell}\right)^{1/2}~.$ 
\end{assumption} 

This assumption implies that the estimator $\hat{\rho}_n(h)$ defined in \eqref{eq:rho_hat_h} is well-behaved in the sense that its denominator after scaled by the factor $g(\rho,n-h)$ converges to a strictly positive limit. As a result, we can replace the residual $\hat{u}_t(h) \equiv y_t - \hat{\rho}_n(h)y_{t-1}$ by the shock $u_t$, which implies the second and third implication discussed in Remark \ref{rem:hc_se}. We show in Proposition \ref{proposition:A3} that Assumption \ref{A3:MO-PM} can be verified if the shocks are i.i.d. and satisfied mild regularity conditions (Assumption \ref{appendix:AV_assumptions}). In Appendix C of \cite{MO-PM2020}, this assumption is verified for AR(1) models whenever a \emph{contiguity condition} holds.

Assumptions \ref{xu_assumptions} and  \ref{A3:MO-PM} guarantee that the distribution $J_n(\cdot,h,P,\rho)$ defined in \eqref{eq:cdf_root} can be approximated by the standard normal distribution $\Phi(\cdot)$ uniformly on $\rho \in [-1,1]$ and \textcolor{black}{a wide range of horizons} $h$  as in \eqref{eq:cdf_AA}. Let $\hat{P}_n$ be the empirical distribution of the centered residuals defined in \eqref{eq:centered_residuals} and let $\hat{\rho}_n$ be the estimator of $\rho$ defined in \eqref{eq:rho_hat}. Using this notation $J_n(\cdot,h,\hat{P}_n,\hat{\rho}_n)$ is the distribution of the bootstrap root $R_{b,n}^*(h)$ defined in \eqref{eq:root_bootstrap} conditional on the data $Y^{(n)}$. The next theorem shows that the distribution $J_n(\cdot,h,P,\rho)$ can be approximated by the bootstrap distribution $J_n(\cdot,h,\hat{P}_n,\hat{\rho}_n)$ uniformly on  $\rho \in [-1,1]$ and \textcolor{black}{a wide range of}  intermediate horizons (e.g., \textcolor{black}{uniform over $h \le h_n$, where $h_n$ is any fixed sequence such that $h_n = o(n)$}), i.e., the LP-residual bootstrap is uniformly consistent. 

\begin{theorem}\label{thm:consistency}
    Suppose Assumptions \ref{xu_assumptions} and \ref{A3:MO-PM} hold. Then, for any $\epsilon>0$ and for any sequence $h_n$ \textcolor{black}{such that $h_n \le n$ and} $h_n = o\left(n\right)$, we have
    \begin{equation}\label{eq:thm_consistency}
        \sup_{|\rho|\le 1} P_{\rho}\left( \sup_{h \le h_n} ~ \sup_{x\in \mathbf{R}} | J_n(x,h,P,\rho) - J_n(x,h,\hat{P}_n,\hat{\rho}_n) | > \epsilon \right) \to 0 \quad \text{as} \quad n \to \infty~,
    \end{equation}    
    where $J_n(x,h,\cdot,\cdot)$ is as in \eqref{eq:cdf_root}, $\hat{P}_n$ is the empirical distribution of the centered residuals defined in \eqref{eq:centered_residuals}, and $\hat{\rho}_n$ is as in \eqref{eq:rho_hat}.
\end{theorem}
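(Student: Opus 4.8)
The plan is to pass through the common normal limit $\Phi$. Adding and subtracting $\Phi(x)$ and applying the triangle inequality gives
\[
\bigl|J_n(x,h,P,\rho) - J_n(x,h,\hat{P}_n,\hat{\rho}_n)\bigr| \le \bigl|J_n(x,h,P,\rho)-\Phi(x)\bigr| + \bigl|\Phi(x) - J_n(x,h,\hat{P}_n,\hat{\rho}_n)\bigr|.
\]
After taking $\sup_{h\le h_n}\sup_{x}$, the first term is nonrandom and, by \eqref{eq:cdf_AA}, falls below $\epsilon/2$ for all large $n$ uniformly in $\rho$. Consequently the event that the total difference exceeds $\epsilon$ is eventually contained in the event that the second term exceeds $\epsilon/2$, and it suffices to establish the bootstrap analogue of \eqref{eq:cdf_AA},
\[
\sup_{|\rho|\le 1} P_\rho\!\left(\sup_{h\le h_n}\sup_{x\in\mathbf{R}}\bigl|\Phi(x)-J_n(x,h,\hat{P}_n,\hat{\rho}_n)\bigr| > \epsilon/2\right) \to 0 ,
\]
i.e. that the bootstrap distribution of the root converges to $\Phi$ in probability, uniformly over $\rho$.

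The reduced claim concerns the root of a sequence of AR(1) models whose parameter $\hat{\rho}_n$ and shock distribution $\hat{P}_n$ are data-dependent and vary with $n$, so I would prove it by invoking Theorem \ref{thm:sequence_models}, which extends \eqref{eq:cdf_AA} to such sequences. Concretely, I would introduce a data event $\mathcal{G}_n$ on which the bootstrap data-generating process $(\hat{\rho}_n,\hat{P}_n)$ satisfies the hypotheses of Theorem \ref{thm:sequence_models} with constants that do not degrade with $n$; on $\mathcal{G}_n$ the theorem yields $\sup_{h,x}|\Phi(x)-J_n(x,h,\hat{P}_n,\hat{\rho}_n)| \le \epsilon/2$, and the display reduces to showing $\inf_{|\rho|\le 1} P_\rho(\mathcal{G}_n) \to 1$. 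Equivalently, one can argue by contradiction along a subsequence $\rho_n$ for which the probability stays bounded away from zero and derive a violation of Theorem \ref{thm:sequence_models}.

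It then remains to verify the defining conditions of $\mathcal{G}_n$, namely that $(\hat{\rho}_n,\hat{P}_n)$ meets Assumptions \ref{xu_assumptions} and \ref{A3:MO-PM} with high probability, uniformly in $\rho$. The residual bootstrap makes most of this routine: the bootstrap shocks are i.i.d.\ draws from $\hat{P}_n$, so they form a mean-zero (by the centering in \eqref{eq:centered_residuals}) martingale difference sequence, the product-moment condition \ref{xu_assumptions}(ii) factorizes to zero, and the strong-mixing condition \ref{xu_assumptions}(iii) holds trivially with $\alpha(j)=0$. Thus only the moment requirements in \ref{xu_assumptions}(iv) remain, and for the i.i.d.\ bootstrap these reduce to the empirical moments $n^{-1}\sum_t \tilde{u}_t^{8\zeta}$ and $n^{-1}\sum_t \tilde{u}_t^{2}$, which I would control through a uniform-in-$\rho$ law of large numbers for the centered residuals, the uniform consistency of $\hat{\rho}_n$ for $\rho$, and the true-DGP bounds in \ref{xu_assumptions}(iv); these keep the empirical $8\zeta$-th moment below $C_8$ and the empirical variance above $C_\sigma$ on $\mathcal{G}_n$.

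The main obstacle is verifying Assumption \ref{A3:MO-PM} for the bootstrap process. Since the bootstrap shocks are i.i.d., this is where I would lean on Proposition \ref{proposition:A3}, which establishes the A3 condition for AR(1) models with i.i.d.\ shocks, but the application is delicate for two reasons. First, the normalization $g(\hat{\rho}_n,n)$ and the path behavior of $\sum_t (y^*_{t-1})^2$ change character across the stationary and unit-root regimes, and $\hat{\rho}_n$ is itself random and can sit arbitrarily close to, or on either side of, the unit root when $\rho=1$; controlling $g(\hat{\rho}_n,n)^{-2}\,n^{-1}\sum_t (y^*_{t-1})^2$ away from zero therefore demands a bound that is uniform in the location of $\hat{\rho}_n$. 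Second, the condition must survive the outer $\lim_M\lim_n$ and hold uniformly over $\rho$, so I would need the convergence underlying Proposition \ref{proposition:A3} to be uniform over the relevant class of $(\hat{\rho}_n,\hat{P}_n)$ and then combine it with $\inf_{|\rho|\le 1}P_\rho(\mathcal{G}_n)\to 1$. Assembling these uniform statements for the random, near-unit-root bootstrap DGP is the crux of the argument; the remaining pieces follow from the triangle-inequality reduction and the i.i.d.\ structure of the resampled shocks.
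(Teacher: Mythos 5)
Your proposal follows essentially the same route as the paper's proof: a triangle inequality through $\Phi$, a high-probability event (the paper's $E_n$, built from Lemmas \ref{AppendixA:lemma_event_En} and \ref{AppendixA:lemma_Mtilde}) on which $(\hat{\rho}_n,\hat{P}_n)$ lands in the class $\mathbf{P}_{n,0}\times[-1-\tilde{M}/n,\,1+\tilde{M}/n]$ covered by Theorem \ref{thm:sequence_models}, and then two applications of the uniform normal approximation. The only clarification is that the hypotheses to verify for the bootstrap DGP are those of Assumption \ref{appendix:AV_assumptions} (i.i.d.\ shocks with bounded \emph{fourth} empirical moment and variance bounded below), not the full Assumptions \ref{xu_assumptions}--\ref{A3:MO-PM} — so you need not control the empirical $8\zeta$-th moment, and the Assumption-\ref{A3:MO-PM}-type condition you identify as the crux is exactly what Proposition \ref{proposition:A3} supplies uniformly over that class, so it is not a separate obstacle.
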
 

Theorem \ref{thm:consistency} shows that the LP-residual bootstrap is uniformly consistent, i.e., the bootstrap distribution $J_n(\cdot,h,\hat{P}_n,\hat{\rho}_n)$ approximates the distribution $J_n(\cdot,h,P,\rho)$ uniformly over the parameter space ($\rho \in [-1,1]$) and \textcolor{black}{a wide range of} intermediate horizons  ($h \le h_n)$. Two features of this uniform approximation result deserve further discussion. First, uniform consistency of bootstrap methods over the parameter spaces of autoregressive models is not just a technical detail but a crucial property to guarantee reliable inference methods; see \cite{Mikusheva2007}. Otherwise, it is possible to obtain for any sample size $n$ a parameter $\rho_n$ such that the distance between the distributions $J_n(\cdot,h,\hat{P}_n,\hat{\rho}_n)$ and $J_n(\cdot,h,P,\rho)$ is far from zero. Second, the uniform approximation over the horizons is necessary for inference purposes at intermediate horizons. Other valid methods for a fixed $h$ do not necessarily work for $h$ growing with the sample size.

The proof of Theorem \ref{thm:consistency} is presented in Appendix \ref{appendix:thm:consistency}. It has two main ideas. First, we show that the approximation result presented in \eqref{eq:cdf_AA} also holds for sequences of AR(1) models with i.i.d. shocks (Theorem \ref{thm:sequence_models}), 
\begin{equation*}
  \sup_{P \in \mathbf{P}_{n,0}}   ~ \sup_{h \le h_n} ~  \sup_{|\rho| \le 1} ~ \sup_{x \in \mathbf{R}} ~ \left|  J_n(x,h,P,\rho) - \Phi(x)   \right| \to 0 \quad \text{as} \quad n \to \infty~,
\end{equation*}
where $\mathbf{P}_{n,0}$ denotes the set of all distributions that satisfy Assumption \ref{appendix:AV_assumptions} in Appendix \ref{sec:appendix_uniform}, $h_n$ is as in Theorem \ref{thm:consistency},  $J_n(\cdot,h,P,\rho)$ is as in \eqref{eq:cdf_root} and $\Phi(\cdot)$ is the standard normal distribution. Assumption \ref{appendix:AV_assumptions} imposes stronger restrictions on the dependence of the shocks (i.i.d.) and some mild regularity conditions. The formal result is presented in Appendix \ref{sec:appendix_uniform} as Theorem \ref{thm:sequence_models}. Second, we show that Assumptions \ref{xu_assumptions} and \ref{A3:MO-PM} imply the existence of a sequence of events $E_n$ with probability approaching 1 such that the empirical distributions $\hat{P}_n $ conditional on the event $E_n$ verify Assumption \ref{appendix:AV_assumptions}. In other words, we show that $\hat{P}_n \in \mathbf{P}_{n,0}$ holds with a probability approaching 1. The construction of the events $E_n$ relies on Lemma \ref{AppendixA:lemma_event_En} in Appendix \ref{appendixB:lemmas}. We use the previous two ideas to approximate the distribution $J_n(\cdot,h,\hat{P}_n,\hat{\rho}_n)$ by the standard normal distribution $\Phi(\cdot)$ conditional on the event $E_n$. Finally, we conclude that the distributions $J_n(\cdot,h,\hat{P}_n,\hat{\rho}_n)$ and $J_n(\cdot,h,P,\rho)$ are asymptotically close since both have the same asymptotic limit.

The next result shows that the confidence interval $C_n^*(h,1-\alpha)$ defined in \eqref{eq:bootstrap_CI} is uniformly asymptotically valid in the sense that its asymptotic coverage probability is equal to $1-\alpha$ uniformly over $\rho$ and \textcolor{black}{a wide range of horizons} $h$.

\begin{theorem}\label{thm:CI_bootstrap}
Suppose Assumptions \ref{xu_assumptions} and \ref{A3:MO-PM} hold. Then, for any sequence $h_n$ \textcolor{black}{such that $h_n \le n$ and} $h_n = o\left(n\right)$, we have
\begin{equation}\label{eq:thm_CI_bootstrap}
    \sup_{|\rho|\le 1} ~  \sup_{h \le h_n} \left| P_{\rho} \left( \beta(\rho,h) \in C_n^*(h,1-\alpha)  \right) - (1-\alpha) \right|   \to 0 \quad \text{as} \quad n \to \infty~,
\end{equation}
where $\beta(\rho,h)$ and $C_n^*(h,1-\alpha)$ are as in \eqref{eq:beta} and \eqref{eq:bootstrap_CI}, respectively.
\end{theorem}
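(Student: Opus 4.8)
The plan is to derive Theorem \ref{thm:CI_bootstrap} directly from the uniform consistency result in Theorem \ref{thm:consistency} together with the uniform normal approximation in \eqref{eq:cdf_AA}. The key observation is that the coverage event $\{\beta(\rho,h) \in C_n^*(h,1-\alpha)\}$ is, by the definition of $C_n^*(h,1-\alpha)$ in \eqref{eq:bootstrap_CI} and of the root $R_n(h)$ in \eqref{eq:root}, exactly the event $\{|R_n(h)| \le c_n^*(h,1-\alpha)\}$. Thus the coverage probability equals $P_\rho(|R_n(h)| \le c_n^*(h,1-\alpha))$, and the whole argument reduces to showing that evaluating the distribution of $|R_n(h)|$ at the random bootstrap quantile $c_n^*(h,1-\alpha)$ yields $1-\alpha$ up to a uniformly vanishing error. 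I would work with the symmetrized cdf, writing $P_\rho(|R_n(h)| \le u) = J_n(u,h,P,\rho) - J_n(-u,h,P,\rho) + P_\rho(R_n(h) = -u)$, and similarly for the bootstrap distribution, so that the critical value $c_n^*(h,1-\alpha)$ is the $1-\alpha$ quantile of the symmetrized bootstrap cdf.

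First I would establish that the limiting distribution $\Phi$ makes the target quantile well-behaved: since $|Z|$ for $Z \sim \Phi$ has a continuous, strictly increasing cdf on $[0,\infty)$, the population $1-\alpha$ quantile is the unique point $z_{1-\alpha/2}$, and the symmetrized normal cdf has a strictly positive derivative in a neighborhood of it. This continuity and strict monotonicity of the limit is what lets me convert closeness of distribution functions into closeness of quantiles. Second, I would combine Theorem \ref{thm:consistency} with \eqref{eq:cdf_AA} via the triangle inequality to conclude that the bootstrap cdf $J_n(\cdot,h,\hat P_n,\hat\rho_n)$ is uniformly close to $\Phi(\cdot)$ — uniformly in $x$, in $h \le h_n$, and with inner probability (over the data) tending to one uniformly in $\rho$. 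Feeding this into the standard quantile-convergence argument (closeness of monotone cdfs to a cdf that is continuous and strictly increasing at its $1-\alpha$ quantile forces convergence of the corresponding quantiles), I would show that $c_n^*(h,1-\alpha)$ is within any fixed $\delta$ of $z_{1-\alpha/2}$ on an event of probability approaching one, uniformly over $\rho$ and $h$.

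The third step is to transfer this back to the true distribution of the root. On the high-probability event where $|c_n^*(h,1-\alpha) - z_{1-\alpha/2}| \le \delta$, I would sandwich the coverage probability using monotonicity of $P_\rho(|R_n(h)| \le \cdot)$:
\begin{equation*}
P_\rho\bigl(|R_n(h)| \le z_{1-\alpha/2} - \delta\bigr) \le P_\rho\bigl(|R_n(h)| \le c_n^*(h,1-\alpha) \mid \mathcal{F}\bigr) \le P_\rho\bigl(|R_n(h)| \le z_{1-\alpha/2} + \delta\bigr),
\end{equation*}
where conditioning is on the sigma-field generated by the data (so $c_n^*$ is treated as a constant). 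By \eqref{eq:cdf_AA}, the outer bounds converge uniformly to the symmetrized normal probabilities $2\Phi(z_{1-\alpha/2} \pm \delta) - 1$, which can be made arbitrarily close to $1-\alpha$ by shrinking $\delta$. Averaging over the conditioning and accounting for the vanishing-probability complement event, I would obtain that the unconditional coverage probability lies within an arbitrarily small band around $1-\alpha$, uniformly over $|\rho| \le 1$ and $h \le h_n$, which is \eqref{eq:thm_CI_bootstrap}.

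The main obstacle is the uniformity bookkeeping rather than any single hard estimate. Because the closeness in Theorem \ref{thm:consistency} holds only on a data-dependent event of probability approaching one (and with the inner supremum over $h$ and $x$ inside the probability), I must be careful to interleave the conditioning on the data, the quantile-to-cdf conversion, and the two layers of suprema ($\rho$ outside a probability, $h$ and $x$ inside) so that the final bound is genuinely uniform in both $\rho$ and $h$ and does not implicitly fix $h$. The delicate point is that the continuity argument converting cdf-closeness to quantile-closeness must be applied uniformly: the derivative of the symmetrized $\Phi$ near $z_{1-\alpha/2}$ is bounded below by a constant free of $h$ and $\rho$, so the conversion rate is uniform, but I would state this explicitly to ensure the $\delta$-to-error relationship does not degrade as $h$ grows with $n$.
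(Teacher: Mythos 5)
Your proposal is correct and follows essentially the same route as the paper: the paper also reduces coverage to $P_{\rho}(|R_n(h)|\le c_n^*(h,1-\alpha))$, shows via its Lemma \ref{lemma:quantiles_high-prob} (itself a consequence of Theorem \ref{thm:consistency}, \eqref{eq:cdf_AA}, and the continuity of $\Phi$) that $c_n^*(h,1-\alpha)$ lies in $[z_{1-\alpha/2-\epsilon/2},\,z_{1-\alpha/2+\epsilon/2}]$ with probability approaching one uniformly in $\rho$ and $h$, and then concludes by the same monotone sandwich plus \eqref{eq:cdf_AA}. One small caveat: your displayed sandwich, stated conditionally on the data $\sigma$-field, is not literally valid since $R_n(h)$ is itself data-measurable (the middle term degenerates to an indicator); the correct bookkeeping is the unconditional decomposition $P_{\rho}(|R_n(h)|\le c_n^*) \le P_{\rho}(c_n^* > z_{1-\alpha/2}+\delta) + P_{\rho}(|R_n(h)|\le z_{1-\alpha/2}+\delta)$ and its lower-bound analogue, which is what your verbal description of averaging over the high-probability event amounts to.
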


Theorem \ref{thm:CI_bootstrap} provides the theoretical justification to conduct inference on the impulse response coefficient $\beta(\rho,h)$ using our bootstrap confidence interval $C_n^*(h,1-\alpha)$. Note that the only difference with respect to the confidence interval $C_n(h,1-\alpha)$ defined in \eqref{eq:CI_MOPM} is the critical value, which was equal to $z_{1-\alpha/2}$. The critical value $z_{1-\alpha/2}$ was the same for different sample sizes $n$ and horizons $h$. Instead, we now use a critical value $c_n^*(h,1-\alpha)$ that depends on the data, the sample size, and the horizon. We evaluate the difference in coverage probability between the confidence intervals $C_n(h,1-\alpha)$  and $C_n^*(h,1-\alpha)$ using simulations in Section \ref{sec:simulation_study}. The simulation results provide evidence that the coverage probability of our proposed confidence interval $C_n^*(h,1-\alpha)$  is closer to $1-\alpha$ than that of $C_n(h,1-\alpha)$. 

The proof of Theorem \ref{thm:CI_bootstrap} is presented in Appendix \ref{appendix:thm:CI_bootstrap}. It only relies on the uniform consistency of the bootstrap procedure. We next sketch the main arguments of the proof. We first note that \eqref{eq:thm_CI_bootstrap} is equivalent to
$$ \sup_{|\rho|\le 1} ~ \sup_{h \le h_n} \left| P_{\rho} \left( |R_n(h)| \le c_n^*(h,1-\alpha)   \right) - (1-\alpha) \right|   \to 0 \quad \text{as} \quad n \to \infty~. $$
We then use that the bootstrap critical value $c_n^*(h,1-\alpha)$ is included in $[z_{1-\alpha/2 -\epsilon},~z_{1-\alpha/2 +\epsilon}]$ with a probability approaching 1 for arbitrary $\epsilon>0$; see Lemma \ref{lemma:quantiles_high-prob} in Appendix \ref{appendixB:lemmas}. This result is possible because the root $R_n(h)$ is asymptotically normal and the LP-residual bootstrap is uniformly consistent. Third, we can conclude using algebra manipulation and the asymptotic normality of the root $R_n(h)$ that 
$$ \limsup_{n \to \infty} \sup_{|\rho|\le 1}  \sup_{h \le h_n} \left| P_{\rho} \left( |R_n(h)| \le c_n^*(h,1-\alpha)   \right) - (1-\alpha) \right| \le 2\epsilon ~,$$
which implies \eqref{eq:thm_CI_bootstrap} since $\epsilon>0$ was arbitrary.   
 
\begin{remark}\label{rem:percentile-t_bootstrap}
We can use the LP-residual bootstrap to construct equal-tailed percentile-$t$ confidence intervals denoted by $C_{per-t,n}^{*}(h,1-\alpha)$. That is
\begin{equation}\label{eq:CI_RB-per-t}
    C_{per-t,n}^{*}(h,1-\alpha) \equiv \left[ \hat{\beta}_n(h) - q_n^*(h,1-\alpha/2)~\hat{s}_n(h),~\hat{\beta}_n(h) - q_n^*(h,\alpha/2)~ \hat{s}_n(h)\right] ~,
\end{equation}
where $\hat{\beta}_n(h)$ is as in \eqref{eq:la_lp}, $\hat{s}_n(h)$ is as in \eqref{eq:HC_se}, and $q_n^*(h,\alpha_0)$ is the $\alpha_0$-quantile of the bootstrap root $R_{b,n}^*(h)$ defined in \eqref{eq:root_bootstrap}. Three features of $C_{per-t,n}^{*}(h,1-\alpha)$ deserve further discussion. First, the bootstrap quantiles $q_n^*(h,\alpha_0)$ can be approximated using Monte Carlo procedures in a similar way as we discussed in Remark \ref{rem:cv_bootstrap}. Second, the confidence interval $C_{per-t,n}^{*}(h,1-\alpha)$ can be asymmetric around $\hat{\beta}_n(h)$ by construction, which is not the case of $C_n^{*}(h,1-\alpha)$ that is a symmetric one. Third, $C_{per-t,n}^{*}(h,1-\alpha)$ is uniformly asymptotically valid,
$$ \sup_{|\rho|\le 1}  \sup_{h \le h_n} \left| P_{\rho} \left( \beta(\rho,h) \in C_{per-t,n}^{*}(h,1-\alpha)  \right) - (1-\alpha) \right|   \to 0 \quad \text{as} \quad n \to \infty~,$$
where \textcolor{black}{$h_n$ is any fixed sequence such that} $h_n \le n$ and $h_n = o\left(n\right)$. The proof of this claim follows directly by Theorem \ref{thm:consistency}, Lemma \ref{lemma:quantiles_high-prob}, and the proof of Theorem \ref{thm:CI_bootstrap}. We include $C_{per-t,n}^{*}(h,1-\alpha)$ in our simulation study in Section \ref{sec:simulation_study}. 
\end{remark}


\begin{remark}\label{rem:grid_bootstrap}
For short horizons (fixed $h$), the available grid bootstrap (\cite{hansen1999grid,Mikusheva2012}) is a valid alternative to our bootstrap confidence interval $C_n^*(h,1-\alpha)$ when the conditional variance of the shocks is constant. The grid bootstrap is a method to construct confidence intervals for the parameter $\beta(\rho,h)$ defined in \eqref{eq:beta} based on test inversion.   \cite{Mikusheva2007,Mikusheva2012} shows that the grid bootstrap provides confidence intervals that are uniformly asymptotically valid in the sense that its asymptotic coverage probability is equal to $1-\alpha$ uniformly on $\rho \in [-1,1]$. Nevertheless, when the conditional variance of the shocks is not constant (e.g., GARCH shocks), it is unknown if the confidence intervals based on the grid bootstrap are valid. In contrast, $C_n^*(h,1-\alpha)$ remains valid for a larger class of AR(1) models. 
%
We include the grid bootstrap presented in \citet[Section 3.3]{Mikusheva2012} in our simulation study presented in Section \ref{sec:simulation_study}.
\end{remark}

\begin{remark}\label{rem:long-horizon}
If we restrict our analysis to data-generating processes with weak dependence (e.g., $|\rho| \le 1-a$ for some $a \in (0,1)$) and consider stronger assumptions in the distribution of the shocks $\{u_t: 1 \le t \le n\}$, then both claims in \eqref{eq:thm_consistency} and \eqref{eq:thm_CI_bootstrap} can hold for long horizons (e.g., $h_n \le (1-b)n$ for some $b \in (0,1)$).
In other words, the confidence interval $C_n^*(h,1-\alpha)$ has theoretical guarantees for long horizons under certain conditions. 
Assumptions 1-2 in \cite{MO-PM2020} are sufficient to guarantee this claim; a formal proof can be derived following the same strategy presented in Appendix \ref{sec:appendix_a} to prove Theorem \ref{thm:consistency} and \ref{thm:CI_bootstrap}.  
In particular, the proof of Theorem \ref{thm:sequence_models} can be adapted for long horizons since $|\rho| \le 1-a$ implies that $g(\rho,h_n)^2/(n-h_n) \to 0$ as $n \to \infty$ for any $h_n \le (1-b) n$, where $g(\rho,h) = \{ \sum_{\ell=1}^{h} \rho^{2(\ell-1)} \}^{1/2}$. This technical condition was satisfied when $|\rho| \le 1$ and $h_n = o(n)$.  
\end{remark}

\begin{remark}\label{rem:var-stationary}
For strictly stationary data, the results in Theorems  \ref{thm:consistency} and \ref{thm:CI_bootstrap} can be extended to vector autoregressive (VAR) models considered in \cite{MO-PM2020} that satisfy their Assumptions 1 and 2. A proof of these extensions may be done using the finite sample inequalities presented in their online appendix and following the approach we presented in Appendixes \ref{sec:appendix_a} and \ref{sec:appendix_b}. We leave the details of a formal proof to future research. For non-stationary data, it is an open question whether the LP-residual bootstrap is consistent for VAR models. Our approach relies on verifying  Assumption \ref{A3:MO-PM} for an appropriate sequence of AR(1) models; therefore, an analogous approach may require a similar step for VAR models, which is \textcolor{black}{outside} the scope of this paper.
\end{remark}

\begin{remark}\label{rem:other-valid-bootstraps}
    We can use Theorem \ref{thm:CI_bootstrap} to show the uniform validity of alternative methods to construct confidence intervals for $\beta(\rho,h)$; however, some alternative confidence intervals can be impractical at the intermediate horizon. For instance, a confidence interval $C_{la-ar}^*(h,1-\alpha)$ for $\beta(\rho,h)$ can be obtained by first constructing a confidence interval for $\rho$ using Theorem \ref{thm:CI_bootstrap} (taking $h=1$) and then by using $\beta(\rho,h) = \rho^h$ (monotone transformation). Unfortunately, the confidence interval $C_{la-ar}^*(h,1-\alpha)$ can be very wide asymptotically for certain data-generation processes and intermediate horizons.  More concretely, for any $L>1$ it can be shown $P_{\rho}\left( [1/L,L] \subseteq  C_{la-ar}^*(h,1-\alpha) \right) \to 1 $ as $n \to \infty$ when $\rho = 1-c_1/n$ (local-to-unit models) and $h \sim \sqrt{n}$. We formally establish this result in Proposition \ref{prop:CI_LA_AR}  in Appendix \ref{sec:appendix_b}.     
    This result is similar to the ones presented in Appendix B.2.2 in \cite{MO-PM2020}  for the lag-augmented AR bootstrap confidence interval of \cite{InoueKillian2020}, which is a bootstrap confidence interval related but different to  $C_{la-ar}^*(h,1-\alpha)$.
\end{remark}

\section{Asymptotic Refinements}\label{sec:asymp_refin}

\textcolor{black}{This section will impose conditions on the data-generating process that further restrict the class of AR(1) models relative to that considered in Section \ref{sec:uniform_cons}, ruling out local-to-unity and unit-root models. These conditions are explicit in Theorems \ref{thm:rates_AA} and \ref{thm:rates_bootstrap}, where we calculate the sizes of the error in coverage probability (ECP) for the confidence intervals $C_n(h,1-\alpha)$ and $C_n^*(h,1-\alpha)$ defined in \eqref{eq:CI_MOPM} and \eqref{eq:bootstrap_CI}, respectively. The results of these theorems show that the LP-residual bootstrap can provide asymptotic refinements for confidence intervals, that is, the ECP of $C_n^{*}(h,1-\alpha)$ is $o(n^{-1})$, whereas the ECP of $C_n(h,1-\alpha)$ is $O(n^{-1})$.}

\textcolor{black}{Section \ref{sec:why_boots} first provides an informal discussion of the elements and challenges involved in obtaining
asymptotic refinements for confidence intervals with the LP-residual bootstrap. Section \ref{sec:formal-results} then formalizes the discussion by giving conditions on the data-generating process (Assumption \ref{AV_assumptions} and $\rho \in [-1+a, 1-a] $ for a given $a \in (0,1)$) that are sufficient to establish these asymptotic refinements (Theorems \ref{thm:rates_AA} and \ref{thm:rates_bootstrap}).} 


\subsection{Informal Discussion on Asymptotic Refinements}\label{sec:why_boots}

\textcolor{black}{This section gives an informal exposition on how a bootstrap method can provide asymptotic refinements for confidence intervals when the root is \emph{asymptotically pivotal}, i.e., the asymptotic distribution of the root does not depend on any unknown parameters.} 
The explanation below is not new; see \cite{hall1996bootstrap}, \cite{Horowitz2001,Horowitz2019}, and \cite{lahiri2003resampling}. It has the purpose of introducing the main elements and challenges that arise to obtain asymptotic refinements \textcolor{black}{in the context of dependent data generated from an AR(1) model. It also describes the approach considered in this paper; see Remark \ref{rem:mikusheva2015} for alternative methods. } 


%
 \textcolor{black}{\textbf{Main elements:} For the sake of exposition, suppose the root $R_n(h)$ has an \emph{Edgeworth expansion} up to an error of size $o(n^{-1})$, that is, the distribution of the root $R_n(h)$ has an asymptotic expansion,
\begin{equation}\label{eq:edgeworth_expansion_sec2}
    J_n(x,h,P,\rho) = \Phi(x) + \sum_{j=1}^2 n^{-j/2} q_j(x,h,P,\rho)\phi(x)  + o\left(n^{-1}\right)~,
\end{equation}
where $q_j(x,h,P,\rho)$ are polynomials in $x \in \textbf{R}$ such that (i) their coefficients are continuous function of moments of $P$ and $\rho$ and (ii) $q_j(x,h,P,\rho) = (-1)^{j+1} q_j(-x,h,P,\rho)$ for $j=1,2$. Similarly, suppose the bootstrap root $R_n^*(h)$ has an Edgeworth expansion, 
\begin{equation}\label{eq:edgeworth_expansion_boots_sec2}
    J_n(x,h,\hat{P}_n,\hat{\rho}_n) = \Phi(x) + \sum_{j=1}^2 n^{-j/2} q_j(x,h,\hat{P}_n,\hat{\rho}_n)\phi(x) + o_p\left(n^{-1}\right)~,
\end{equation}
where $J_n(x,h,\cdot,\cdot)$ is as in  \eqref{eq:cdf_root}, $\hat{P}_n$ is the empirical distribution of the centered residuals defined in \eqref{eq:centered_residuals}, and $\hat{\rho}_n$ is the estimator of $\rho$ defined in \eqref{eq:rho_hat}.} 

\textcolor{black}{
The approximations in \eqref{eq:edgeworth_expansion_sec2} and \eqref{eq:edgeworth_expansion_boots_sec2} are commonly used to show that the bootstrap methods provide more accurate approximations than the asymptotic distribution theory; see \cite{Hall1992TheBA} for a textbook reference for the case of i.i.d. data. We next sketch an informal calculation of the sizes of the ECP of the confidence intervals $C_n(h,1-\alpha)$ and $C_n^{*}(h,1-\alpha)$.}

\textcolor{black}{The coverage probability of $C_n(h,1-\alpha)$ is equal to $P_{\rho} \left(  |R_n(h)| \le z_{1-\alpha/2} \right)$ by the definitions of $C_n(h,1-\alpha)$ and $R_n(h)$ in \eqref{eq:CI_MOPM} and \eqref{eq:root}, respectively. Note that \eqref{eq:edgeworth_expansion_sec2} and the properties of $q_j(\cdot,h,P,\rho)$ imply that for any $x>0$, we have
\begin{equation}\label{eq:aux1_sec2}
    P_{\rho} \left(  |R_n(h)| \le x \right) = 2\Phi(x) - 1 + n^{-1} 2q_2(x,h,P,\rho) \phi(x) + o\left(n^{-1}\right)~.
\end{equation}
Taking $x = z_{1-\alpha/2}$, we conclude the size of the ECP of $C_n(h,1-\alpha)$ is  $O(n^{-1})$.} 

\textcolor{black}{Similarly, the coverage probability of $C_n^*(h,1-\alpha)$ is equal to $P_{\rho} \left(  |R_n(h)| \le c_n^*(h,1-\alpha)  \right)$ by the definitions in \eqref{eq:bootstrap_CI} and \eqref{eq:root}. Now, we will argue that 
\begin{align}
     P_{\rho} \left(  |R_n(h)| \le c_n^*(h,1-\alpha) \right)
     = P_{\rho} \left(  |R_n(h)| \le c_n(h,1-\alpha)\right)  + o\left(n^{-1}\right)~,\label{eq:aux2_sec51}
\end{align}
where $c_n(h,1-\alpha)$ is as in \eqref{eq:cv_finite}. This is sufficient to conclude that the size of the ECP of $C_n^*(h,1-\alpha)$ is $o(n^{-1})$ since $P_{\rho} \left(  |R_n(h)| \le c_n(h,1-\alpha)\right) = 1-\alpha$ by definition. Using the properties of $q_j(\cdot,h,\hat{P}_n,\hat{\rho}_n)$ and  \eqref{eq:edgeworth_expansion_boots_sec2}, we obtain
\begin{align}
    P_{\rho} \left(  |R_{b,n}^*(h)| \le x \mid Y^{(n)}\right) &= 2\Phi(x) - 1 + n^{-1} 2q_2(x,h,\hat{P}_n,\hat{\rho}_n) \phi(x) + o_p\left(n^{-1}\right)~\notag\\
    &= 2\Phi(x) - 1 + n^{-1} 2q_2(x,h,P,\rho) \phi(x) + o_p\left(n^{-1}\right)~,\label{eq:aux2_sec2}
\end{align}
where the last equality uses $q_2(x,h,\hat{P}_n,\hat{\rho}_n)  = q_2(x,h,P,\rho) + o_p\left(1\right) $. Note that \eqref{eq:aux1_sec2} and \eqref{eq:edgeworth_expansion_boots_sec2} looks similar.
Taking $x = c_n(h,1-\alpha)$ in \eqref{eq:aux1_sec2} and $x = c_n^*(h,1-\alpha)$ in \eqref{eq:aux2_sec2}, it can be conclude that $c_n^*(h,1-\alpha) = c_n(h,1-\alpha) + o_p\left(n^{-1}\right)$, which will imply \eqref{eq:aux2_sec51}. }

\textcolor{black}{
The informal explanation presented above suggests that the LP-residual bootstrap can provide asymptotic refinements when there exist valid Edgeworth expansions as in  \eqref{eq:edgeworth_expansion_sec2}-\eqref{eq:edgeworth_expansion_boots_sec2}. We present in Section \ref{sec:formal-results} conditions (Assumption \ref{AV_assumptions} and $\rho \in [-1+a, 1-a] $, where $a \in (0,1)$) under which the previous informal discussion can be formalized.  
}

\textcolor{black}{
\textbf{The challenges:} Edgeworth expansions as in \eqref{eq:edgeworth_expansion_sec2}-\eqref{eq:edgeworth_expansion_boots_sec2} are not always available or valid in the context of AR(1) models. For instance, in the case of the local-to-unity and unit-root models, the Edgeworth expansion for the least-squares estimate of the AR(1) model defined in \eqref{eq:rho_hat} is no longer valid; see \cite{phillips2023estimation}. In this case, alternative asymptotic approximations were developed to prove asymptotic refinements of the bootstrap, e.g., \cite{park2003bootstrap,park2006bootstrap} and \cite{mikusheva2015second}. 
To our knowledge, there are no available theoretical results about valid Edgeworth expansions for the root $R_n(h)$ defined in \eqref{eq:root} that can be applied directly.} 

\textcolor{black}{
Nevertheless, for stationary AR(1) models (when $\rho \in [-1+a,1-a]$,  $a \in (0,1)$), asymptotically valid Edgeworth expansions were obtained; see \cite{phillips1977approximations, phillips1977general}, \cite{bose1988}, among others. Therefore, we will restrict our analysis to stationary AR(1) models to obtain valid Edgeworth expansions for the root $R_n(h)$ and its bootstrap version $R_n^*(h)$ when $\rho \in [-1+a,1-a]$,  $a \in (0,1)$, and  $h$ is fixed.}

\subsection{Formal Conditions and Results}\label{sec:formal-results}
 
This section presents conditions under which the LP-residual bootstrap provides asymptotic refinements to the confidence interval. Under these conditions, we calculate the sizes of the ECP for $C_n(h,1-\alpha)$ and $C_n^*(h,1-\alpha)$ in Theorems \ref{thm:rates_AA} and  \ref{thm:rates_bootstrap}, respectively.  

The following assumption imposes stronger conditions on the distribution of the shocks $P$ than the ones presented in Assumption \ref{xu_assumptions}. We use this assumption \textcolor{black}{and $\rho \in [-1+a, 1-a]$ for some $a \in (0,1)$} to formalize the informal explanation about asymptotic refinements presented in Section \ref{sec:why_boots}. 

\begin{assumption}\label{AV_assumptions}
\hspace{1cm}\vspace{-0.25cm}

\begin{enumerate}
    \item[i)] $\{u_t: 1 \le t \le n\}$ is a sequence of i.i.d. random variables with $E[u_t] =0$.

    \item[ii)] $u_t$ has a positive continuous density.

    \item[iii)] $E[e^{x u_t}] \le e^{x^2 c_u^2}$ for all $|x| \le 1/c_u$ and $E[u_t^2] \ge C_{\sigma} $ for some constants $c_u, C_\sigma>0$.    
\end{enumerate}
\end{assumption}

Part (i) of Assumption \ref{AV_assumptions} imposes stronger conditions over the serial dependence of the shocks. This assumption is common for theoretical analysis of the asymptotic refinement of the bootstrap method in autoregressive models. An incomplete list of previous research that uses this assumption includes \cite{bose1988}, \cite{park2003bootstrap,park2006bootstrap}, and \cite{mikusheva2015second}. Parts (ii) and (iii) of Assumption \ref{AV_assumptions} are sufficient technical conditions on the distribution of the shocks $P$ to establish the existence of the Edgeworth expansions presented in \eqref{eq:edgeworth_expansion_sec2}-\eqref{eq:edgeworth_expansion_boots_sec2}. Part (ii) implies that the distribution $J_n(\cdot,h,P,\rho)$ defined in \eqref{eq:cdf_root} is continuous and guarantees that a data-dependent version of the Cram\'er condition holds, which is a common condition to guarantee the existence of Edgeworth expansions; see Remark \ref{rem:GH1994} for further discussion. 
Part (iii) implies that any sufficiently large number of moments exist and are uniformly bounded by a function of the constant $c_u$, which is important to guarantee the Edgeworth expansion for the bootstrap distribution $J_n(\cdot,h,\hat{P}_n,\hat{\rho}_n)$. Although this condition is strong, it is not atypical in the literature of the asymptotic refinement of the bootstrap method with dependent data; for instance, \cite{hall1996bootstrap} and \cite{inoue2006bootstrapping} assume the existence of 33rd and 36th moments, respectively, while \cite{andrews2002higher} assumes that all the \textcolor{black}{moments exist}.

We rely on Assumption \ref{AV_assumptions}, the approach and results presented in \cite{bhattacharya1978validity} and \cite{bhattacharya1987some}, and the general framework developed by \cite{GotzeHipp1983} to prove the existence of Edgeworth expansions with dependent data. The framework of \cite{GotzeHipp1983} requires weakly dependent data and verifying stronger regularity conditions than the ones needed in the case of i.i.d. data; see \cite{Hall1992TheBA} and \cite{lahiri2003resampling} for textbook references. Therefore, we restrict our analysis to data-generating processes with weak dependence (e.g., $|\rho| \le 1-a$ for some $a \in (0,1)$) in a similar way to previous research on asymptotic refinements involving dependent data that includes \cite{bose1988}, \cite{hall1996bootstrap},  \cite{lahiri1996edgeworth}, \cite{andrews2002higher, andrews2004block}, and \cite{inoue2006bootstrapping}. It is an open question whether there exist Edgeworth expansions as in \eqref{eq:edgeworth_expansion_sec2}-\eqref{eq:edgeworth_expansion_boots_sec2} for the case \textcolor{black}{of local-to-unity or unit-root models}. See Remark \ref{rem:mikusheva2015} for further discussion on alternative methods and available results.

\begin{theorem}\label{thm:rates_AA}
 Suppose Assumption \ref{AV_assumptions} \textcolor{black}{holds}. Fix a given $h \in \textbf{N}$ and $a \in (0,1)$. Then, for any $\rho \in [-1+a,1-a]$, 
 we have
 %
\begin{equation}\label{eq:thm_rates_AA_2}
   | P_{\rho} \left( \beta(\rho,h) \in C_n(h,1-\alpha) \right) - (1-\alpha) | = O(n^{-1}) 
\end{equation}
where $\beta(\rho,h)$ is as in \eqref{eq:beta} and $C_n(h,1-\alpha)$ is as in \eqref{eq:CI_MOPM}.

\end{theorem}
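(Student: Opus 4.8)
The plan is to formalize the informal argument of Section \ref{sec:why_boots} by establishing a valid two-term Edgeworth expansion for the distribution $J_n(\cdot,h,P,\rho)$ of the root and then exploiting the parity of its polynomial coefficients. By the definitions of $C_n(h,1-\alpha)$ in \eqref{eq:CI_MOPM} and of $R_n(h)$ in \eqref{eq:root}, the exact coverage probability equals $P_{\rho}\left(|R_n(h)| \le z_{1-\alpha/2}\right)$, so it suffices to show that this probability equals $(1-\alpha) + O\left(n^{-1}\right)$. Because $h$ is fixed and $\rho$ ranges over the compact set $[-1+a,1-a]$ bounded away from the unit root, all the moments and dependence coefficients that enter the expansion are uniformly controlled; this places us in the classical smooth-function setting for weakly dependent data rather than the near-unit-root regime, so no uniformity over $\rho$ or $h$ is needed here.

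The central step is to verify the existence of the Edgeworth expansion \eqref{eq:edgeworth_expansion_sec2}. Following the representation in Remark \ref{rem:hc_se}, I would write $R_n(h)$ as a smooth function of the sample averages of the score process $\{\xi_t(\rho,h) u_t\}$ and its squares, together with the averages entering $\hat{\rho}_n(h)$ and $\hat{s}_n(h)$; this casts the root as a studentized statistic in the smooth-function model of \cite{Hall1992TheBA}. Under Assumption \ref{AV_assumptions}(i) the shocks are i.i.d., so the score process is $h$-dependent (each term $\xi_t(\rho,h)u_t$ depends only on $u_t,\ldots,u_{t+h}$), a particularly clean form of weak dependence. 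I would then invoke the framework of \cite{GotzeHipp1983}, as developed for such statistics by \cite{bhattacharya1978validity} and \cite{bhattacharya1987some}, to obtain the expansion. The three regularity inputs are supplied by Assumption \ref{AV_assumptions}: part (ii) yields the conditional Cram\'er-type condition needed to rule out lattice behavior and to control the characteristic function away from the origin; part (iii) supplies the sub-exponential moment bound guaranteeing that sufficiently many moments exist and are uniformly bounded; and the $h$-dependence together with $|\rho|\le 1-a$ supplies the required mixing and approximation structure.

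Given the expansion, the parity property $q_j(x,h,P,\rho) = (-1)^{j+1}q_j(-x,h,P,\rho)$ is inherited from the general structure of Edgeworth expansions for studentized statistics, for which $q_1$ is an even polynomial and $q_2$ is an odd polynomial, as in the smooth-function calculations of \cite{Hall1992TheBA}. Writing $P_{\rho}\left(|R_n(h)|\le x\right) = J_n(x,h,P,\rho) - J_n(-x,h,P,\rho)$ for $x>0$ and using $\phi(-x)=\phi(x)$ together with these parities, the $n^{-1/2}$ contribution of $q_1$ cancels and the $n^{-1}$ contribution of $q_2$ doubles, which is exactly \eqref{eq:aux1_sec2}. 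Evaluating at $x = z_{1-\alpha/2}$, so that $2\Phi(z_{1-\alpha/2})-1 = 1-\alpha$, gives
\begin{equation*}
 P_{\rho}\left(|R_n(h)|\le z_{1-\alpha/2}\right) = (1-\alpha) + n^{-1}\,2\,q_2(z_{1-\alpha/2},h,P,\rho)\,\phi(z_{1-\alpha/2}) + o\left(n^{-1}\right)~,
\end{equation*}
and hence the error in coverage probability is $O\left(n^{-1}\right)$, establishing \eqref{eq:thm_rates_AA_2}.

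The main obstacle is the second step: verifying the \cite{GotzeHipp1983} regularity conditions for this particular studentized LP statistic, and in particular the conditional Cram\'er condition, which requires translating the positive continuous density of the i.i.d. shocks (Assumption \ref{AV_assumptions}(ii)) into a usable bound on the conditional characteristic function of the score blocks. The self-normalization by the HC standard error $\hat{s}_n(h)$ also requires care, since one must confirm that the denominator is bounded away from zero with the probability needed for a valid expansion; the lower moment bound in Assumption \ref{AV_assumptions}(iii) and the fact that $\rho$ is bounded away from $\pm 1$ are what make this controllable.
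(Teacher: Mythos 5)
Your proposal follows essentially the same route as the paper's proof: reduce the root to a smooth (polynomial) function of sample averages of the $h$-dependent score vector, invoke the \cite{GotzeHipp1983}/\cite{gotze1994asymptotic} and \cite{bhattacharya1978validity} machinery under Assumption \ref{AV_assumptions} to obtain the two-term Edgeworth expansion, and then use the parity of $q_1$ (even) and $q_2$ (odd) so that the $n^{-1/2}$ term cancels in $P_{\rho}(|R_n(h)|\le z_{1-\alpha/2})$. The obstacles you flag (the conditional Cram\'er condition and control of the self-normalizing denominator) are exactly what the paper resolves via its approximation lemma and Theorem B.2, so the plan is sound and matches the published argument.
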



The ECP of  $C_n(h,1-\alpha)$ has a similar size as the one derived in our informal explanation in Section \ref{sec:why_boots}. Similar sizes of the ECP were obtained for symmetrical confidence intervals in the i.i.d. data case; see \cite{Hall1992TheBA} and \cite{Horowitz2001, Horowitz2019}. 
 
The proof of Theorem \ref{thm:rates_AA} is presented in Appendix \ref{appendix:thm:rates_AA}. It uses two main ideas developed previously in the literature. First, we approximate the distribution $J_n(\cdot,h,P,\rho)$ by another distribution $\Tilde{J}_n(\cdot,h,P,\rho)$ up to an error of size $O\left(n^{-1-\epsilon}\right)$ for a fixed $\epsilon \in (0,1/2)$; similar approach has been used in \cite{hall1996bootstrap} and \cite{andrews2002higher, andrews2004block}. Second, we use that the distribution $\Tilde{J}_n(\cdot,h,P,\rho)$ admits an Edgeworth expansion up to an error of size $O\left(n^{-3/2}\right)$ based on the results of \cite{bhattacharya1978validity} and \cite{GotzeHipp1983,gotze1994asymptotic}; see Theorem \ref{thm:edgeworth_approx_Jtilde} in Appendix \ref{sec:appendix_asympt_ref}. These two ideas guarantee the existence of the Edgeworth expansion presented in \eqref{eq:edgeworth_expansion_sec2}. We then conclude the proof by standard derivations similar to the one derived in our informal explanation presented in Section \ref{sec:why_boots}.

The next theorem shows that the LP-residual bootstrap provides asymptotic refinements to the confidence intervals. In other words, the size of the ECP of our bootstrap confidence interval defined in \eqref{eq:bootstrap_CI} for $\beta(\rho,h)$ is $o(n^{-1})$.
 
\begin{theorem}\label{thm:rates_bootstrap}
 Suppose Assumption \ref{AV_assumptions} \textcolor{black}{holds}. Fix a given $h \in \textbf{N}$ and $a \in (0,1)$. Then, for any $\rho \in [-1+a,1-a]$ and $\epsilon \in (0,1/2)$, we have 
\begin{equation}\label{eq:thm_rates_boots_2}
   | P_{\rho} \left( \beta(\rho,h) \in C_n^*(h,1-\alpha) \right) - (1-\alpha) | = o \left( n^{-(1+\epsilon)} \right)  ~,
\end{equation}
where $\beta(\rho,h)$ is as in \eqref{eq:beta} and $C_n^*(h,1-\alpha)$ is as in \eqref{eq:bootstrap_CI}. 
\end{theorem}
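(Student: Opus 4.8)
The plan is to mirror the proof of Theorem \ref{thm:rates_AA}, replacing the population Edgeworth expansion of $R_n(h)$ by a data-dependent Edgeworth expansion for the bootstrap root $R_{b,n}^*(h)$, and then to convert the resulting closeness of the bootstrap and population critical values into a deterministic bound on coverage. Throughout I fix $h$ and $\rho \in [-1+a,1-a]$ and work on a sequence of events $E_n$ with $P_\rho(E_n^c)$ negligible (smaller than any fixed power of $n^{-1}$), on which the bootstrap data-generating process is well-behaved. Since the coverage of $C_n^*(h,1-\alpha)$ equals $P_\rho(|R_n(h)| \le c_n^*(h,1-\alpha))$ by the definitions in \eqref{eq:bootstrap_CI} and \eqref{eq:root}, the target \eqref{eq:thm_rates_boots_2} is a statement about this last probability.

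First I would show that, on $E_n$, the bootstrap AR(1) model defined by $(\hat{\rho}_n,\hat{P}_n)$ inherits the hypotheses needed for an Edgeworth expansion. Since $\hat{\rho}_n-\rho = O_p(n^{-1/2})$, on $E_n$ the bootstrap autoregressive coefficient stays in $[-1+a/2,\,1-a/2]$, so the bootstrap process is weakly dependent and the same truncation-plus-expansion device used for Theorem \ref{thm:rates_AA} (approximating $J_n(\cdot,h,\hat{P}_n,\hat{\rho}_n)$ by a smoother $\tilde J_n(\cdot,h,\hat{P}_n,\hat{\rho}_n)$ to which Theorem \ref{thm:edgeworth_approx_Jtilde} applies) is available. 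The positive continuous density in Assumption \ref{AV_assumptions}(ii) is what delivers the data-dependent Cram\'er condition for $\hat{P}_n$, while the exponential-moment bound in Assumption \ref{AV_assumptions}(iii) ensures that all empirical moments entering the expansion exist and are bounded on $E_n$, by standard concentration. Using the parity $q_1(-x,\cdot)=q_1(x,\cdot)$ and $q_2(-x,\cdot)=-q_2(x,\cdot)$ to cancel the $n^{-1/2}$ term exactly as in \eqref{eq:aux2_sec2}, this yields, on $E_n$,
\begin{equation*}
    P_\rho\!\left(|R_{b,n}^*(h)| \le x \mid Y^{(n)}\right) = 2\Phi(x) - 1 + 2n^{-1} q_2(x,h,\hat{P}_n,\hat{\rho}_n)\phi(x) + O\!\left(n^{-1-\epsilon}\right)~.
\end{equation*}

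Next I would compare the bootstrap polynomial to the population one. The coefficients of $q_2$ are smooth functions of $\rho$ and of finitely many moments of the shocks; plugging in $\hat{\rho}_n$ and the empirical moments of the centered residuals and invoking their $\sqrt{n}$-consistency gives $q_2(x,h,\hat{P}_n,\hat{\rho}_n) = q_2(x,h,P,\rho) + O_p(n^{-1/2})$ uniformly for $x$ near $z_{1-\alpha/2}$, which sharpens the $o_p(1)$ statement of Section \ref{sec:why_boots}. Inverting both the population expansion \eqref{eq:aux1_sec2} and its bootstrap analogue by Cornish--Fisher then gives
\begin{equation*}
    c_n^*(h,1-\alpha) = c_n(h,1-\alpha) - n^{-1}\!\left[q_2(z_{1-\alpha/2},h,\hat{P}_n,\hat{\rho}_n) - q_2(z_{1-\alpha/2},h,P,\rho)\right] + O_p\!\left(n^{-1-\epsilon}\right)~,
\end{equation*}
so the leading discrepancy is $O_p(n^{-1}\cdot n^{-1/2})=O_p(n^{-3/2})$, which is dominated by the $O_p(n^{-1-\epsilon})$ remainder for $\epsilon<1/2$. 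Hence $c_n^* = c_n + O_p(n^{-1-\epsilon})$, where $c_n$ is the population $1-\alpha$ quantile of $|R_n(h)|$, for which $P_\rho(|R_n(h)| \le c_n) = 1-\alpha$ by continuity.

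The final and most delicate step is to turn this into a deterministic coverage bound; the obstacle is that $c_n^*$ and $R_n(h)$ are computed from the same sample and are therefore dependent, so one cannot simply substitute $c_n^*$ into the population law of $|R_n(h)|$. I would argue by sandwiching instead. Using Assumption \ref{AV_assumptions}(iii) to control the fluctuation of the empirical moments, I would enlarge $E_n$ so that $|c_n^*-c_n| \le C n^{-1-\epsilon}$ holds on $E_n$ with $P_\rho(E_n^c)=o(n^{-1-\epsilon})$; this is where high-order tail control matters, since $O_p$-closeness alone is insufficient. On $E_n$ one has the inclusions $\{|R_n(h)|\le c_n^*\}\cap E_n \subseteq \{|R_n(h)|\le c_n + Cn^{-1-\epsilon}\}$ and, from below, $\{|R_n(h)|\le c_n - Cn^{-1-\epsilon}\}\cap E_n \subseteq \{|R_n(h)|\le c_n^*\}$. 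Combining these with the bound $P_\rho(|R_n(h)| \le c_n \pm Cn^{-1-\epsilon}) = (1-\alpha) + O(n^{-1-\epsilon})$, which follows from the population Edgeworth expansion of Theorem \ref{thm:rates_AA} because the density of $|R_n(h)|$ is bounded near $c_n$, yields $|P_\rho(|R_n(h)| \le c_n^*) - (1-\alpha)| = O(n^{-1-\epsilon})$ and hence \eqref{eq:thm_rates_boots_2}. I expect the principal difficulty to lie in establishing the data-dependent Cram\'er condition and the uniform moment and remainder control for the random bootstrap process inside the G\"otze--Hipp machinery; the coverage-transfer argument, while requiring the high-order tail bound in the enlargement of $E_n$, is comparatively routine once that expansion is in hand.
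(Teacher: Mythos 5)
Your proposal is correct and follows essentially the same route as the paper's proof: establish an Edgeworth expansion for the bootstrap distribution conditional on an event $E_n$ with $P_\rho(E_n^c) = O(n^{-1-\epsilon})$ (using the density assumption for the Cram\'er condition and the exponential-moment bound for the empirical-moment control), deduce $|c_n^*(h,1-\alpha) - c_n(h,1-\alpha)| \le C n^{-1-\epsilon}$ on $E_n$ by matching the two expansions, and then transfer this to coverage via the same sandwiching inclusions combined with the population Edgeworth expansion from Theorem \ref{thm:rates_AA}. The only cosmetic difference is that the paper controls the empirical moments at the coarser rate $n^{-\epsilon}$ (which already suffices, since only $n^{-1}\cdot n^{-\epsilon}$ is needed) rather than your $O_p(n^{-1/2})$ refinement.
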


Theorem \ref{thm:rates_bootstrap} presents the size of the ECP of the confidence interval $C_n^*(h,1-\alpha)$ in \eqref{eq:thm_rates_boots_2}. This is similar to the one derived in our informal explanation in Section \ref{sec:why_boots}, but it is typically larger than those obtained for the ECP of symmetrical confidence intervals using bootstrap methods in the i.i.d. data case; see \cite{Hall1992TheBA} and \cite{Horowitz2001, Horowitz2019}.
 
The proof of Theorem \ref{thm:rates_bootstrap} is presented in Appendix \ref{appendix:thm:rates_bootstrap}. It relies on two claims: the existence of the Edgeworth expansion for the distribution $J_n(\cdot,h,P,\rho)$ and the existence of constants $C_1$ and $C_2$ such that $P_{\rho}(|\Delta_n| > C_1 n^{-(1+\epsilon)} ) \le C_2 n^{-(1+\epsilon)} $, where $\Delta_n = c_n^*(h,1-\alpha)-c_n(h,1-\alpha)$, and $c_n(h,1-\alpha)$ and $c_n^*(h,1-\alpha)$ are defined in \eqref{eq:cv_finite} and \eqref{eq:cv_bootstrap}, respectively. We next sketch the proof based on those two claims. We can derive
\begin{align*}
    P_{\rho} \left( \beta(\rho,h) \in C_n^*(h,1-\alpha) \right)  &= P_{\rho} \left( |R_n(h)| \le c_n^*(h,1-\alpha) \right) \\
    &= P_{\rho} \left( |R_n(h)| \le c_n(h,1-\alpha) + \Delta_n, \left| \Delta_n \right| \le C_1 n^{-(1+\epsilon)} \right) + O\left(  n^{-(1+\epsilon)} \right)~\\
    &= 1-\alpha +  O\left(  n^{-(1+\epsilon)} \right)~,    
\end{align*}
where the last equality follows from the existence of the Edgeworth expansion for the distribution $J_n(\cdot,h,P,\rho)$ (our first claim), which implies
$$ P_{\rho} \left( |R_n(h)| \le c_n(h,1-\alpha) + O\left(  n^{-(1+\epsilon)} \right) \right) = 1-\alpha +  O\left(  n^{-(1+\epsilon)} \right)~.$$
Note that the first claim follows from Theorem \ref{thm:rates_AA}. To prove our second claim, we first show that there is an event $E_n$ such that (i) $J_n(\cdot,h,\hat{P}_n,\hat{\rho}_n)$ has an Edgeworth expansion as in \eqref{eq:edgeworth_expansion_boots_sec2} conditional on $E_n$ and (ii) the probability of the complement of $E_n$ is equal to $O\left(n^{-(1+\epsilon)}\right)$ for any $\epsilon \in (0,1/2)$; see Lemma \ref{lemma:high-prob-event} in Appendix \ref{appendixB:lemmas}. We then follow standard arguments in the literature to prove this claim. Finally, note that $O(n^{-(1+\epsilon)})$ for any $\epsilon \in (0,1/2)$ is equivalent to $o(n^{-(1+\epsilon)})$ for any $\epsilon \in (0,1/2)$, which is the error stated in Theorem \ref{thm:rates_bootstrap}.

\begin{remark}\label{rem:block-bootstrap}
The bootstrap methods proposed in \cite{hall1996bootstrap} and \cite{andrews2002higher} can be adapted for the construction of confidence intervals for the impulse response $\beta(h,\rho)$ defined in \eqref{eq:beta}. Four points based on their framework and results deserve further discussion. First, their bootstrap method consists of the nonoverlapping block bootstrap scheme (\cite{carlstein1986use}) and overlapping block bootstrap (\cite{kunsch1989jackknife}). Second, they show that their bootstrap methods provide asymptotic refinements to the critical values of $t$-tests based on generalized method of moments (GMM) estimators $\hat{\theta}_T$ and weakly dependent data $\{Z_t:1 \le t \le n\}$. One of their main conditions is that the series of moment functions $\{g(Z_t,\theta): t \ge 1 \}$ are uncorrelated beyond some finite lags, i.e. for some $\kappa >0$ we have $E[g(Z_t,\theta)g(Z_s,\theta)'] = 0$ for any $t,s\ge 1$ such that $|t-s|>\kappa$. %
Third, the LP estimator $\hat{\beta}_n(h)$ defined in \eqref{eq:la_lp} can be presented as a GMM estimator using the following dependent data $\{Z_t = (y_{t-1}, y_t, y_{t+h}) : 1 \le t \le n\}$ and moment function: $g(y_{t+h},x_t,\theta)=(y_{t+h}-\theta x_t)x_t$, where $x_t = (y_t, y_{t-1})'$. Then, we can invoke their results and use their bootstrap methods but only for the case of $|\rho|<1$ and under additional assumptions. Note that their main condition can be verified with $\kappa = h$. Fourth, we can construct confidence intervals for $\beta(\rho,h)$ based on their asymptotic distribution theory.
\end{remark} 
 
\begin{remark}\label{rem:block-bootstrap-2}
    As we mentioned in Remark \ref{rem:block-bootstrap}, we can use the bootstrap methods presented in \cite{hall1996bootstrap} and \cite{andrews2002higher, andrews2004block} to construct confidence intervals for $\beta(\rho,h)$ since the LP estimator $\hat{\beta}_n(h)$ defined in \eqref{eq:la_lp} can be presented as a GMM estimator. Their results provide sizes of the ECP of these confidence intervals that are qualitatively similar to the one found in Theorem \ref{thm:rates_bootstrap}. 
\end{remark}

\begin{remark}
The size of the ECP of $C_{per-t,n}^{*}(h,1-\alpha)$ is $O(n^{-1})$. We presented and discussed  the equal-tailed percentile-t confidence interval $C_{per-t,n}^{*}(h,1-\alpha)$ in Remark \ref{rem:percentile-t_bootstrap}. To compute the size of its ECP, we can use the existence of the Edgeworth expansions presented in \eqref{eq:edgeworth_expansion_sec2}-\eqref{eq:edgeworth_expansion_boots_sec2} and Theorem 5.2 in \cite{Hall1992TheBA}. The size of the  ECP of $C_{per-t,n}^{*}(h,1-\alpha)$ is similar to the one obtained in \eqref{eq:thm_rates_AA_2} for the ECP of $C_n(h,1-\alpha)$; therefore, the LP-residual bootstrap does not provide asymptotic refinement for equal-tailed percentile-t confidence intervals. Similar conclusions were obtained for the case of i.i.d. data; see \cite{Hall1992TheBA} and \cite{Horowitz2001, Horowitz2019}.
\end{remark}

\begin{remark}\label{rem:GH1994}
We use part (ii) of Assumption \ref{AV_assumptions} to verify that a dependent-data version of the Cramer condition required in \cite{GotzeHipp1983} holds, which is an important condition for the existence of the Edgeworth expansion in the dependent-data case. However, verifying that condition is quite difficult in general, as pointed out by \cite{hall1996bootstrap} and \cite{gotze1994asymptotic}, among others. Therefore, we proceed in two steps based on the results by \cite{gotze1994asymptotic} that propose simple and verifiable conditions to guarantee the conditions required by \cite{GotzeHipp1983}, including the dependent-data version of the Cramer condition. We first approximate the distribution $J_n(\cdot,h,P,\rho)$ by a distribution $\Tilde{J}_n(\cdot,h,P,\rho)$. We then use part (ii) of Assumption 5.1 to verify the conditions required in Theorem 1.2 of \cite{gotze1994asymptotic}, which guarantee the existence of Edgeworth expansion for the distribution $\Tilde{J}_n(\cdot,h,P,\rho)$.
\end{remark}  

\begin{remark}\label{rem:var-stationar-asymp-refinement}
    For strictly stationary data-generating processes, the results in Theorems \ref{thm:rates_AA} and \ref{thm:rates_bootstrap} can be extended to the family of vector autoregressive (VAR) models that satisfy similar assumptions to the ones presented in Assumption \ref{AV_assumptions}, which are stronger than Assumptions 1 and 2 in \cite{MO-PM2020}. These extensions can be shown by verifying the conditions required in \cite{gotze1994asymptotic}. We leave the details of a formal proof for the VAR models for future research.
\end{remark}

\begin{remark}\label{rem:mikusheva2015}
    An alternative method for asymptotically approximating a finite sample distribution is the stochastic embedding and strong approximation principle used in \cite{park2003bootstrap,park2006bootstrap} and \cite{mikusheva2015second}. Using this method in the local-to-unit asymptotic framework for the AR(1) model, \cite{mikusheva2015second} showed that the grid bootstrap version of the t-statistic approximates its finite sample distribution up to an error of size $o(n^{-1/2})$. It is an open question whether these techniques can be adapted to show that LP-residual bootstrap provides asymptotic refinements to the confidence intervals when $\rho=1$. 
\end{remark}

\section{Simulation Study}\label{sec:simulation_study}
 
We examine the finite sample performance of $C_n^*(h,1-\alpha)$ defined in \eqref{eq:bootstrap_CI} using different data-generating processes. We consider a sample size $n = 95$, which is the median sample size based on 71 papers that have utilized the LP approach; see \cite{herbst2024bias}. Additionally, we examine other confidence intervals presented in the paper.

\subsection{Monte Carlo Design}\label{sec:sim_design}

We use four designs for the distribution of the shocks $\{u_t:1 \le t \le n\}$ and two values for the parameter $\rho \in \{0.95, 1\}$ in our Monte Carlo simulation. The shocks are defined according to the GARCH(1,1) model:
\begin{equation*}
    u_t = \tau_t v_t, \quad \tau_t^2 = \omega_0 + \omega_1 u_{t-1}^2 + \omega_2  \tau_{t-1}^2, \quad v_t \text{ are } i.i.d.~,
\end{equation*}
where the distribution of $v_t$ and the parameter vector $(\omega_0,\omega_1,\omega_2)$ are specified as follows:
\begin{itemize}
\item [\textbf{Design 1:}]  $v_t \sim N(0,1)$, $\omega_0=1$, and $\omega_1=\omega_2=0$. 

\item [\textbf{Design 2:}]  $v_t \sim N(0,1)$, $\omega_0=0.05$, $\omega_1= 0.3$, and $\omega_2=0.65$.

\item [\textbf{Design 3:}]  $v_t \sim t_4/\sqrt{2}$, $\omega_0=1$, and $\omega_1=\omega_2=0$. 

\item [\textbf{Design 4:}]  $v_t|B_t=j \sim N(m_j,\sigma_j^2)$, where $B_t \in \{0,1\}$, $B_t = 1$ with probability $p = 0.25$, $m_0 = 2/\sigma_2$, $m_1 =-6/\sigma_2$, $\sigma_0 = 0.5/\sigma_2$, $\sigma_1 = 2/\sigma_2$, and $\sigma_2^2 = p(m_1^2+\sigma_1)+(1-p)(m_0^2+\sigma_0)$,  $\omega_0=0.05$, $\omega_1= 0.3$, and $\omega_2=0.65$.
\end{itemize}

We consider nine different confidence intervals for each design and each value of $\rho$. All our confidence intervals use the HC standard errors $\hat{s}_n(h)$ defined in \eqref{eq:HC_se}. Additionally, we consider alternative HC standard errors $\hat{s}_{j,n}(h)$ defined as
\begin{equation*}
     \hat{s}_{j,n}(h) \equiv \left( \sum_{t=1}^{n-h} \hat{u}_t(h)^2 \right)^{-1/2}  \left(\sum_{t=1}^{n-h} \hat{\xi}_{j,t}(h)^2 \hat{u}_t(h)^2 \right)^{1/2} \left( \sum_{t=1}^{n-h} \hat{u}_t(h)^2 \right)^{-1/2}  ~,
\end{equation*}
for $j=2,3$, where $\hat{\xi}_{2,t}(h)^2 = \hat{\xi}_{t}(h)^2/(1-\mathbb{P}_{h,tt})$ and $\hat{\xi}_{3,t}(h)^2 = \hat{\xi}_{t}(h)^2/(1-\mathbb{P}_{h,tt})^2$. We use the projection matrix $\mathbb{P}_h = \mathbb{X}_h (\mathbb{X}_h'\mathbb{X}_h)^{-1}\mathbb{X}_h'$, where $\mathbb{X}_h$ is a matrix with row elements equal to $(\hat{u}_t(h),~ y_{t-1})$ for $t=1,\ldots,n-h$. The confidence intervals that we use are listed below\textcolor{black}{.}

\begin{enumerate}
\item \textbf{RB:} confidence interval as in \eqref{eq:bootstrap_CI} based on the LP-residual bootstrap.

\item \textbf{$\text{RB}_{per-t}$:}  equal-tailed percentile-t confidence interval as in \eqref{eq:CI_RB-per-t}. It is based on the LP-residual bootstrap and discussed in Remark \ref{rem:percentile-t_bootstrap}.

\item \textbf{$\text{RB}_{hc3}$:}  confidence interval as in \eqref{eq:bootstrap_CI} but using $\hat{s}_{3,n}(h)$ and $c_{3,n}^*(h,1-\alpha)$ instead of $\hat{s}_{n}(h)$ and $c_n^*(h,1-\alpha)$, where $c_{3,n}^*(h,1-\alpha)$ is computed as in Section \ref{sec:bootstrap_cv} but using $\hat{s}_{3,n}^*(h)$ instead of $\hat{s}_{n}^*(h)$.

\item \textbf{WB:}  confidence interval as in \eqref{eq:bootstrap_CI} but using $c_n^{wb,*}(h,1-\alpha)$ instead of $c_n^*(h,1-\alpha)$, where $c_n^{wb,*}(h,1-\alpha)$ is based on the LP-wild bootstrap; see Remark \ref{rem:wild-bootstrap}.

\item \textbf{$\text{WB}_{per-t}$:}  equal-tailed percentile-t confidence interval as in \eqref{eq:CI_RB-per-t} but using $q_n^{wb,*}(h,\alpha_0)$ instead of $q_n^*(h,\alpha_0)$, where $q_n^{wb,*}(h,\alpha_0)$ is based on the LP-wild bootstrap discussed in Remark \ref{rem:wild-bootstrap}.

\item \textbf{$\text{GB}_{LR}$:} confidence interval based on the grid bootstrap presented in Section 3.3 in \cite{Mikusheva2012}. It uses the LR statistic.

\item \textbf{AA:} standard confidence interval as in \eqref{eq:CI_MOPM}. 

\item \textbf{$\text{AA}_{hc2}$:} standard confidence interval as in \eqref{eq:CI_MOPM} but using $\hat{s}_{2,n}(h)$ instead of $\hat{s}_{n}(h)$.

\item \textbf{$\text{AA}_{hc3}$:} standard confidence interval as in \eqref{eq:CI_MOPM} but using $\hat{s}_{3,n}(h)$ instead of $\hat{s}_{n}(h)$.

\end{enumerate}

\subsection{Discussion and Results}

In all the designs, the shocks have zero mean and variance one. Designs 1-2 verify Assumption \ref{xu_assumptions} presented in Section \ref{sec:uniform_cons}. Design 1 also verifies Assumption \ref{A3:MO-PM} due to Proposition \ref{proposition:A3} in Appendix \ref{sec:appendix_uniform}. Assumption \ref{A3:MO-PM} can be tedious to verify in general since it involves computing a probability for all the parameters $\rho$ in the parameter space and taking their infimum. In contrast, designs 3-4 do not verify all the parts of Assumption \ref{xu_assumptions}. Design 3 considers shocks without a fourth moment, i.e., it does not verify part (iv) of Assumption \ref{xu_assumptions}, which was a regularity condition. Design 4 considers a distribution of the shocks (GARCH errors with asymmetric $v$ and nonzero skewness) that lie outside the class of conditional heteroskedastic processes that we consider in this paper, i.e., it does not verify part (ii) of Assumption \ref{xu_assumptions}. As we discussed in Remark \ref{rem:hc_se}, part (ii) of Assumption \ref{xu_assumptions} was a sufficient condition for the validity of the HC standard errors $\hat{s}_n(h)$ in the construction of confidence intervals.  
  
\begin{table}
\begin{center}
\setlength{\tabcolsep}{6.0pt}  
\renewcommand{\arraystretch}{0.9}  
\begin{tabular}{cc ccccccccc}
\hline 
\multicolumn{1}{c}{$\rho$} &  $h$ & $\text{RB}$ & $\text{RB}_{per-t}$ & $\text{RB}_{hc3}$ & $\text{WB}$ & $\text{WB}_{per-t}$ & $\text{GB}_{LR}$ & $\text{AA}$ & $\text{AA}_{hc2}$ & $\text{AA}_{hc3}$  \\
\hline
 \multicolumn{11}{c}{Design 1: Gaussian i.i.d. shocks}  \\
\hline
 0.95 &     1 & 90.04 & 89.60 & 90.08 & 90.38 & 90.32 & 90.38 & 88.26 & 89.12 & 89.60\\
  &     6 & 89.36 & 88.98 & 89.38 & 90.46 & 90.22 & 90.38 & 85.00 & 85.58 & 86.44\\
  &    12 & 88.12 & 86.96 & 88.08 & 89.60 & 88.28 & 90.38 & 83.78 & 84.44 & 85.34\\
  &    18 & 87.96 & 86.08 & 87.88 & 89.46 & 88.08 & 90.38 & 84.44 & 85.16 & 85.86\\
\hline
 1.00 &     1 & 90.20 & 89.80 & 90.30 & 90.48 & 90.34 & 89.46 & 88.30 & 88.90 & 89.66\\
  &     6 & 89.80 & 89.44 & 89.80 & 90.68 & 90.22 & 89.46 & 83.54 & 84.42 & 85.28\\
  &    12 & 87.92 & 87.60 & 87.90 & 88.78 & 89.02 & 89.46 & 80.32 & 81.30 & 81.94\\
  &    18 & 86.22 & 84.76 & 86.22 & 87.02 & 86.36 & 89.46 & 78.34 & 79.16 & 79.98\\
\hline
 \multicolumn{11}{c}{Design 2: Gaussian GARCH shocks}  \\
\hline
 0.95 &     1 & 88.86 & 89.00 & 89.40 & 90.18 & 90.02 & 85.94 & 86.84 & 88.10 & 89.16\\
  &     6 & 87.94 & 88.00 & 88.26 & 90.12 & 90.74 & 85.94 & 83.64 & 84.52 & 85.60\\
  &    12 & 87.08 & 85.72 & 87.28 & 88.72 & 88.18 & 85.94 & 82.96 & 83.90 & 84.88\\
  &    18 & 86.36 & 84.36 & 86.40 & 87.98 & 86.94 & 85.94 & 82.76 & 83.44 & 84.38\\
\hline
 1.00 &     1 & 88.64 & 88.82 & 89.14 & 89.96 & 89.94 & 88.38 & 86.72 & 87.84 & 88.90\\
  &     6 & 88.96 & 88.52 & 89.08 & 90.76 & 90.96 & 88.38 & 82.34 & 83.76 & 84.52\\
  &    12 & 86.64 & 86.08 & 86.60 & 88.56 & 88.68 & 88.38 & 79.14 & 80.46 & 81.32\\
  &    18 & 84.90 & 83.74 & 84.78 & 86.56 & 86.52 & 88.38 & 76.64 & 77.74 & 78.70\\
\hline
\end{tabular}
 \caption{ \small Coverage probability (in \%) of confidence intervals for $\beta(\rho,h)$ with a nominal level of 90\% and $n = 95$. 5,000 simulations and 1,000 bootstrap iterations.}\vspace{-0.5cm}\label{table1}
\end{center} 
\end{table}

Tables \ref{table1} and \ref{table2} report the coverage probabilities (in \%) of our simulations. Columns are labeled as the confidence intervals we specified in Section \ref{sec:sim_design}. For all the designs on the distribution of the shock and values of $\rho$, we use 5000 simulations to generate data with a sample size $n = 95$ based on the AR(1) model \eqref{eq:ar1_model}. In each simulation, we compute the nine confidence intervals described above for horizons $h \in \{1,6,12,18\}$. The confidence intervals have a nominal level equal to $1-\alpha = 90\%$. The bootstrap critical values are computed using $B=1000$ as described in Remark \ref{rem:cv_bootstrap}. We summarize our findings from the simulations below.

\begin{table}[!t]
\begin{center}
\setlength{\tabcolsep}{6.0pt} 
\renewcommand{\arraystretch}{0.9} 
\begin{tabular}{cc ccccccccc}
\hline 
\multicolumn{1}{c}{$\rho$} &  $h$ & $\text{RB}$ & $\text{RB}_{per-t}$ & $\text{RB}_{hc3}$ & $\text{WB}$ & $\text{WB}_{per-t}$ & $\text{GB}_{LR}$ & $\text{AA}$ & $\text{AA}_{hc2}$ & $\text{AA}_{hc3}$  \\
\hline
 \multicolumn{10}{c}{Design 3: t-student i.i.d. shocks}  \\
\hline
 0.95 &     1 & 90.00 & 90.08 & 90.36 & 90.52 & 90.32 & 90.06 & 88.04 & 89.24 & 90.26\\
  &     6 & 89.08 & 88.48 & 89.28 & 89.76 & 89.64 & 90.06 & 84.04 & 85.40 & 86.66\\
  &    12 & 87.74 & 86.18 & 87.90 & 88.46 & 87.42 & 90.06 & 82.78 & 84.24 & 85.46\\
  &    18 & 88.08 & 85.38 & 88.26 & 89.12 & 87.52 & 90.06 & 83.36 & 84.80 & 86.20\\
\hline
 1.00 &     1 & 89.96 & 89.88 & 90.16 & 90.36 & 89.98 & 90.44 & 87.74 & 88.82 & 90.16\\
  &     6 & 89.78 & 88.60 & 89.84 & 90.52 & 89.84 & 90.44 & 82.88 & 84.54 & 85.78\\
  &    12 & 87.56 & 86.82 & 87.64 & 88.40 & 88.22 & 90.44 & 79.04 & 80.30 & 81.56\\
  &    18 & 85.64 & 84.40 & 86.00 & 86.80 & 86.24 & 90.44 & 77.50 & 78.84 & 80.22\\
\hline
 \multicolumn{10}{c}{Design 4: mixture-of-gaussian GARCH shocks}  \\
\hline
 0.95 &     1 & 89.00 & 89.86 & 89.32 & 88.80 & 89.60 & 87.82 & 86.38 & 87.20 & 87.88\\
  &     6 & 87.90 & 90.62 & 88.14 & 89.12 & 92.04 & 87.82 & 84.30 & 85.30 & 86.18\\
  &    12 & 84.14 & 86.64 & 84.00 & 85.58 & 87.98 & 87.82 & 80.70 & 81.52 & 82.32\\
  &    18 & 83.48 & 84.70 & 83.66 & 85.32 & 86.88 & 87.82 & 80.46 & 81.40 & 82.56\\
\hline
 1.00 &     1 & 88.84 & 90.24 & 89.04 & 88.98 & 89.70 & 88.02 & 86.60 & 87.24 & 88.00\\
  &     6 & 88.24 & 91.26 & 88.50 & 89.62 & 92.66 & 88.02 & 82.78 & 83.82 & 84.64\\
  &    12 & 84.96 & 88.54 & 85.08 & 86.74 & 89.86 & 88.02 & 77.40 & 78.32 & 79.50\\
  &    18 & 82.30 & 84.62 & 82.34 & 83.90 & 86.30 & 88.02 & 74.18 & 75.28 & 76.14\\
\hline
\end{tabular}
 \caption{\small Coverage probability (in \%) of confidence intervals for $\beta(\rho,h)$ with a nominal level of 90\% and $n = 95$. 5,000 simulations and 1,000 bootstraps iterations.}\vspace{-0.5cm}\label{table2}
\end{center} 
\end{table}

Five features of Table \ref{table1} deserve discussion. First, it shows that our recommended confidence interval \textbf{RB} has a coverage probability closer to 90\% than the confidence intervals \textbf{AA}, \textbf{$\text{AA}_{hc2}$}, and \textbf{$\text{AA}_{hc3}$} for all the designs 1-2,  values of $\rho$, and horizons $h$, with some few exceptions. The lowest coverage probability of \textbf{RB}, \textbf{AA}, \textbf{$\text{AA}_{hc2}$}, and \textbf{$\text{AA}_{hc3}$} are $85\%$, $77\%$, $78\%$, and $79\%$, respectively, and occur when $\rho=1$ and horizon $h=18$. Second, \textbf{RB} and \textbf{$\text{RB}_{hc3}$} have better performance than \textbf{$\text{RB}_{per-t}$}, especially when $\rho = 1$ and the horizon is a significant fraction of the sample size ($h \in \{12,18\}$). Third, \textbf{WB} and \textbf{$\text{WB}_{per-t}$} have larger coverage probability than \textbf{RB} for all the designs 1-2,  values of $\rho$, and horizons $h$, with some few exceptions. The larger coverage of \textbf{WB} and \textbf{$\text{WB}_{per-t}$} is associated with a larger median length of their confidence intervals, as we reported in Table \ref{table1a} in \textcolor{black}{Online Supplemental} Appendix \ref{sec:appendix_table}. Fourth, \textbf{$\text{AA}_{hc3}$} presents a coverage probability closer to $90\%$ and larger than \textbf{AA} and \textbf{$\text{AA}_{hc2}$} for all the designs 1-2,  values of $\rho$, and horizons $h$. This finding suggests that using $\hat{s}_{3,n}(h)$ instead of $\hat{s}_n(h)$ can improve the coverage probability of the confidence interval; however, confidence intervals based on bootstrap methods (e.g., \textbf{RB} and \textbf{$\text{WB}_{per-t}$}) report coverage probability closer to $90\%$. Fifth, \textbf{$\text{GB}_{LR}$} has a coverage probability close to 90\% on design 1  (i.i.d. shocks), while it has some distortions on design 2 that are larger on $\rho = 0.95$. As we mentioned in Remark \ref{rem:grid_bootstrap}, it is unknown if the grid bootstrap is valid for design 2. The coverage probability of \textbf{$\text{GB}_{LR}$} is constant across horizons because the LR statistic is invariant to monotonic transformations; see Section 4.3 and footnote 6 on \cite{Mikusheva2012} for more details.
 
Table \ref{table2} presents results for designs 3-4. Our findings for design 3 are qualitatively similar to Table \ref{table1}, which was discussed above. This suggests that failing part (iv) of Assumption \ref{xu_assumptions} (a regularity condition) does not have a major effect on the coverage probability of the confidence intervals that we considered. In contrast, design 4 shows that some of our qualitative findings can change if we fail to verify part (ii) of Assumption \ref{xu_assumptions}. This result is consistent with existing theory since this assumption was a sufficient condition for the validity of confidence intervals that use HC standard errors $\hat{s}_n(h)$; see Remark \ref{rem:hc_se}. In particular, \textbf{$\text{RB}_{per-t}$} has a coverage probability closer to $90\%$ and larger than  \textbf{RB} and \textbf{$\text{RB}_{hc3}$}. The small sample size ($n=95$) does not explain the findings for design 4. We obtain similar results for a sample size $n=240$ in Table \ref{table3a} in \textcolor{black}{Online Supplemental} Appendix \ref{sec:appendix_table}.

Finally, Table \ref{table4a} in \textcolor{black}{Online Supplemental} Appendix \ref{sec:appendix_table} reports the statistical power of the confidence intervals specified in Section \ref{sec:sim_design}. Here, we refer by statistical power to the coverage probabilities (in \%) of (size-adjusted) confidence intervals for parameters different than the true one. In this sense, a low coverage probability of a confidence interval is desirable. We find all the confidence intervals have coverage probability around $80\%$ on horizon $h=1$ and designs 1, 2, and 3, which suggests they have statistical power at $h=1$. We also notice that \textbf{$\text{RB}_{per-t}$}, \textbf{$\text{WB}_{per-t}$} and \textbf{$\text{GB}_{LR}$} have a coverage probability strictly lower than $90\%$ for horizon $h=6$ and designs 1, 2, and 3. Moreover, they have a lower coverage \textcolor{black}{probability} than all the other confidence intervals. Finally, all the confidence intervals have coverage \textcolor{black}{probability} above $90\%$ on design 4, with the exception of \textbf{$\text{GB}_{LR}$} for horizon $h=1$.

\section{LP-Residual Bootstrap for VAR Models}\label{sec:lp-residual-bootstrap_VAR}


This section describes the LP-residual bootstrap method to construct confidence intervals for a scalar function of impulse responses of VAR(p) models, where $p$ denotes the number of lags. More concretely, we propose the confidence interval in \eqref{eq:bootstrap_CI_VAR} for  $\nu' \beta_{h,i}$, where $\beta_{h,i} \in \mathbf{R}^k$ is the vector containing all the impulse response coefficients of the reduced-form shocks in the variable $i$  at $h$ periods in the future. Here, $\nu \in \mathbf{R}^k\setminus \{ 0 \}$ is a user-specified vector, e.g., $\nu = e_j$ (the j-th unit vector) implies $\nu' \beta_{h,i}$ is the impact of the $j$-th reduced-form shock in the variable $i$ at $h$ periods in the future. 

%
%
%



The confidence interval for $\nu' \beta_{h,i}$ is defined as
\begin{equation}\label{eq:bootstrap_CI_VAR}
    C_n^{*}(h,1-\alpha) \equiv \left[ \nu'\hat{\beta}_{i,n}(h) - c_n^*(h,1-\alpha)~\hat{s}_{i,n}(h,\nu),~\nu' \hat{\beta}_{i,n}(h) + c_n^*(h,1-\alpha)~ \hat{s}_{i,n}(h,\nu)\right] ~,
\end{equation}
where $\hat{\beta}_{i,n}(h)$, $\hat{s}_{i,n}(h,\nu)$, and $c_n^*(h,1-\alpha)$ are defined in \eqref{eq:la_lp_VAR}, \eqref{eq:HC_se_VAR}, and \eqref{eq:cv_bootstrap_VAR}, respectively.  

Let $\{y_t \in \mathbf{R}^k: 1 \le t \le n\}$ be the available time-series data. Suppose the data have been demeaned. Denote $X_t = (y_{t-1}',\ldots,y_{t-p}')'$ for all $t=p+1,\ldots, n$. 
Let $\hat{\beta}_{i,n}(h)$ be obtained from an OLS regression between $y_{i,t+h}$ and $(y_t',X_t')$, 
\begin{equation}\label{eq:la_lp_VAR}
   y_{i,t+h} = \hat{\beta}_{i,n}(h)' y_t + \hat{\gamma}_{i,n}(h) X_t + \hat{\xi}_{i,t}(h) ~.
\end{equation} 
Let $ \hat{s}_{i,n}(h,\nu)$ be the standard error for $\nu'\hat{\beta}_{i,n}(h)$ defined by
\begin{equation}\label{eq:HC_se_VAR}
    \hat{s}_{i,n}(h,\nu) = \frac{1}{n-h-p} \left \{  \nu' \hat{\Sigma}(h)^{-1} \left( \sum_{t=p+1}^{n-h} \hat{\xi}_{i,t}(h)^2 \hat{u}_t(h) \hat{u}_t(h)'   \right) \hat{\Sigma}(h)^{-1} \nu \right\}^{1/2}~\textcolor{black}{,}
\end{equation}
where
$$ \hat{u}_t(h) = y_t - \hat{A}(h) X_t~, \quad \hat{A}(h) = \left( \sum_{t=p+1}^{n-h} y_t X_t' \right) \left( \sum_{t=p+1}^{n-h} X_t X_t' \right)^{-1} $$
and
$$  \hat{\Sigma}(h) = \frac{1}{n-h-p} \sum_{t=p+1}^{n-h}  \hat{u}_t(h) \hat{u}_t(h)'~ \textcolor{black}{.}$$
Finally, let $c_n^*(h, 1-\alpha)$ be the bootstrap critical value involving the following steps:

\begin{itemize}
   \item[\textbf{Step 1}:]   Estimate a VAR(p) model with the data $Y^{(n)}$ using linear regression, 
   $$ y_t = \hat{A}_n X_t + \hat{u}_t~, ~ t = p+1,\ldots, n$$
   where
   \begin{equation}\label{eq:VAR_hat}
       \hat{A}_n = \left( \sum_{t=p+1}^{n} y_t X_t' \right) \left( \sum_{t=p+1}^{n} X_t X_t' \right)^{-1}~,
   \end{equation}
   and compute the centered residuals
   \begin{equation}\label{eq:centered_residuals_VAR}
         \left\{ \Tilde{u}_t \equiv \hat{u}_t - \frac{1}{n-p}\sum_{t=p+1}^n \hat{u}_t : p+1 \le t \le n   \right \}~.
   \end{equation}

   \item[\textbf{Step 2}:] Generate $B$ new samples of size $n$ using \eqref{eq:VAR_hat} and \eqref{eq:centered_residuals_VAR}. Define the sample as 
   \begin{equation*}
       y_{b,t}^* = \sum_{\ell=1}^p  \hat{A}_{n,\ell} ~ y_{b,t-\ell}^* + u_{b,t}^*~, \quad  t= p + 1,\ldots,n~,
   \end{equation*}
   where the initial p observations $(y_{b,1}^*,\ldots,y_{b,p}^*)$ are \textcolor{black}{drawn} at random from the $n-p+1$ blocks of $p$ consecutive observations in the original data. Here, $\hat{A}_n = (\hat{A}_{n,1},\ldots, \hat{A}_{n,p})$ are matrices estimated in \eqref{eq:VAR_hat} and $\{u_{b,t}^* : 1 \le t \le n\}$ is a random sample from the empirical distribution of the centered residuals defined in \eqref{eq:centered_residuals_VAR}. 
   The new sample $\{ y_{b,t}^* : 1 \le t \le n \} $ is called the bootstrap sample.

   \item[\textbf{Step 3}:] Compute $\hat{\beta}_{b,i,n}^*(h)$ and $\hat{s}_{b,i,n}^*(h)$  as in \eqref{eq:la_lp_VAR} and \eqref{eq:HC_se_VAR} using the lag-augmented LP regression and the bootstrap sample $\{ y_{b,t}^* : 1 \le t \le n \} $ for each $b=1,\ldots, B$. Define  
   \begin{equation*}\label{eq:root_bootstrap_VAR}
       R_{b,n}^*(h,\nu) =   \frac{\nu'\hat{\beta}_{b,i,n}^*(h) - \nu'\beta_i(\hat{A}_n,h)}{\hat{s}_{b,i,n}^*(h,\nu)} ~,~ b=1,\ldots, B
   \end{equation*}
   where $\beta_i(A,h) \in \mathbf{R}^k$ is the impulse response of all \textcolor{black}{reduced-form} shocks in the variable $i$ at horizon $h$ implied by the VAR(p) model with coefficients $A= (A_1,\ldots,A_p)$. Here, $\hat{A}_n$ is as in \eqref{eq:VAR_hat}.

   \item[\textbf{Step 4}:] Compute the $1-\alpha$ quantile of the $B$ draws of $R_{b,n}^*(h,\nu) $. Denote this by 
   %
   %
   \begin{equation}\label{eq:cv_bootstrap_VAR}
       c_n^{*}(h, 1-\alpha) \equiv \inf \left \{ u \in \textbf{R} : \frac{1}{B} \sum_{b=1}^B I\{ |R_{b,n}^*(h,\nu)| \le u  \} \ge 1-\alpha   \right\}~.
   \end{equation}
\end{itemize} 




The theoretical properties of the bootstrap confidence interval defined in \eqref{eq:bootstrap_CI_VAR} are unknown for general VAR models. However, Monte Carlo simulations presented in \textcolor{black}{Online Supplemental} Appendix \ref{appendix:simulations-var} suggest that confidence intervals based on the LP-residual bootstrap perform better in terms of coverage probability than those based on first-order asymptotic theory. Remarks \ref{rem:var-stationary} and \ref{rem:var-stationar-asymp-refinement} provide further discussion on how to extend some of the results presented in this paper to general VAR models.
 
\begin{remark}\label{rem:wild-bootstrap-var}
\cite{MO-PM2020} proposed a different bootstrap confidence interval for the impulse response coefficients of VAR(p) models. As we discussed in Remark \ref{rem:wild-bootstrap}, they use a wild bootstrap procedure ---which we refer to as the LP wild bootstrap--- to define the bootstrap shocks used to generate the bootstrap sample with an estimated VAR model (similar to Step 2 above). They use the LP wild bootstrap to construct equal-tailed percentile-t confidence intervals that differ from the symmetric percentile-t confidence intervals defined in \eqref{eq:bootstrap_CI_VAR}, which we recommend for the same reasons presented in Remark \ref{rem:bootstrap-equal-tailed} and based on our theoretical results for the AR(1) model. To our knowledge, the theoretical properties of the LP wild bootstrap procedure and the confidence intervals proposed by \cite{MO-PM2020} remain unknown. We include their recommended confidence intervals in the simulations presented in \textcolor{black}{Online Supplemental} Appendix \ref{appendix:simulations-var}.
\end{remark}
 
\section{Concluding Remarks}\label{sec:conclusion} 

This paper contributes to a growing literature on confidence interval construction for impulse response coefficients based on the local projection approach. Specifically, we propose the LP-residual bootstrap method to construct confidence intervals for the impulse response coefficients of AR(1) models at intermediate horizons. We prove two theoretical properties of this method: uniform consistency and asymptotic refinements. For a large class of AR(1) models that allow for a unit root, conditional heteroskedasticity of unknown form, and martingale difference shocks, we show that the proposed confidence interval $C_n^*(h,1-\alpha)$ defined in \eqref{eq:bootstrap_CI} has an asymptotic coverage probability equal to its nominal level $1-\alpha$ uniformly over the parameter space (e.g., $\rho \in [-1,1]$) and \textcolor{black}{a wide range of} intermediate horizons. For a restricted class of AR(1) models (e.g., $|\rho|\le \textcolor{black}{1-a}$ \textcolor{black}{where $a \in (0,1)$} and i.i.d. shocks with positive continuous density), we demonstrate that the error in coverage probability of $C_n^*(h,1-\alpha)$ has size $o(n^{-1})$, that is, the LP-residual bootstrap provides asymptotic refinements to the confidence intervals.

This paper considered the AR(1) model as the first step in understanding the theoretical properties of the LP-residual bootstrap. Three possible directions exist for future research. First, the uniform consistency of the LP-residual bootstrap method is an open question for the general vector auto-regressive (VAR) model. This bootstrap method is described in Section \ref{sec:lp-residual-bootstrap_VAR}. Second, the asymptotic refinement property of this method is unknown for the unit-root model ($\rho=1$) or general VAR models. Third, future work is needed to prove the uniform consistency of the LP-wild bootstrap discussed in Remark \ref{rem:wild-bootstrap}.

\appendix

\counterwithin{figure}{section}
\counterwithin{table}{section}
\renewcommand{\thetable}{\thesection.\arabic{table}}
\renewcommand{\thefigure}{\thesection.\arabic{figure}}
\renewcommand{\theequation}{A.\arabic{equation}}
\setcounter{equation}{0}

\section{Proofs of Result in Main Text}\label{sec:appendix_a}

\subsection{Proof of Theorem \ref{thm:consistency}}\label{appendix:thm:consistency}

We prove a stronger result:
$$  \sup_{|\rho|\le 1} P_{\rho}\left( \sup_{h \le h_n} ~ \sup_{|\tilde{\rho}|\le 1} ~ \sup_{x\in \mathbf{R}} | J_n(x,h,P,\tilde{\rho}) - J_n(x,h,\hat{P}_n,\hat{\rho}_n) | > \epsilon \right) \to 0 \quad \text{as} \quad n \to \infty~,$$
which is sufficient to conclude \eqref{eq:thm_consistency}. The proof has three steps.

\noindent \textbf{Step 1:} Let $E_{n,1} = \{ g(\rho,n) ~n^{1/2}~ | \hat{\rho}_n - \rho |  > M\}$, $E_{n,2} = \{ \left | n^{-1} \sum_{t=1}^n \Tilde{u}_t^2 - \sigma^2  \right| >  \sigma^2/2 \}$, and $E_{n,3} = \{  n^{-1} \sum_{t=1}^n \Tilde{u}_t^4  > \tilde{K}_4   \} $ be events, where $M$ and $\tilde{K}_4$ are \textcolor{black}{constants} defined next. Fix $\eta>0$. We use Lemma \ref{AppendixA:lemma_event_En} to guarantee the existence of $M$, $\tilde{K}_4$, and $N_0 = N_0(\eta)$ such that $ P_{\rho}(E_{n,j}) < \eta/3$ for $j=1,2,3$, $n \ge N_0$ and $\rho \in [-1,1]$. Define $E_n = E_{n,1}^c \cap E_{n,2}^c \cap E_{n,3}^c$. By construction $P_{\rho}(E_n ) > 1 - \eta$ for $n \ge N_0$ and for any $\rho \in [-1,1]$.
 
\noindent \textbf{Step 2:} Conditional on the event $E_n$, we have $ |\hat{\rho}_n - \rho| \le M n^{-1/2}/g(\rho,n)$ for $n \ge N_0$ and for any $\rho \in [-1,1]$. Therefore, conditional on the event $E_n$, we can use Lemma \ref{AppendixA:lemma_Mtilde} to conclude the existence of $\tilde{M}$ and $N_1 \ge N_0$ such that $ |\hat{\rho}_n| \le 1 + \tilde{M}/n$ for all $n \ge N_1$. Note also that conditional on the event $E_n$, we have that distribution $\hat{P}_n$ of the centered residuals defined in \eqref{eq:centered_residuals} verifies Assumption \ref{appendix:AV_assumptions} taking  $K_4 = M$, $\underline{\sigma} = \sigma^2/2$, and $\overline{\sigma} = 3\sigma^2/2$, i.e., $\hat{P}_n \in \mathbf{P}_{n,0}$, where $\mathbf{P}_{n,0}$ is defined in Appendix \ref{sec:appendix_uniform}.

\noindent \textbf{Step 3:} 
We use Theorem \ref{thm:sequence_models} taking $M = \Tilde{M}$. This implies that for any $\epsilon>0$, there exists $N_2 =N_2(\epsilon,\eta) \ge N_1$ such that
$      \sup_{x \in \mathbf{R}} ~ \left|  J_n(x,h,P_n,\rho) - \Phi(x)   \right| < \epsilon/2~,  $ 
 for any $n \ge N_2$, $|\rho| \le 1 + \tilde{M}/n$, $h \le h_n \le n$ and $h_n = o \left(n\right)$, and $P_n \in \mathbf{P}_{n,0}$. Conditional on $E_n$, we have $\hat{P}_n \in \mathbf{P}_{n,0}$ due to Step 2, then 
 \begin{equation}\label{eq:aux0_proof_thm3.1}
     \sup_{h \le h_n} \: \sup_{x \in \mathbf{R}} ~ \left|  J_n(x,h,\hat{P}_n,\hat{\rho}_n) - \Phi(x)   \right| < \epsilon/2~,
 \end{equation}
 for any $n \ge N_2$, $  h_n \le n$ and $h_n = o\left(n\right) $. By \eqref{eq:cdf_AA} there exists $N_3 \ge N_2$ such that
 $$ \sup_{h \le h_n} \: \sup_{ \tilde{\rho} \in [-1,1]} ~   \sup_{x \in \mathbf{R}} ~ \left|  J_n(x,h,P,\tilde{\rho}) - \Phi(x)   \right| < \epsilon/2~,  $$
 for any $n \ge N_3$, $ h_n \le n$, and $h_n = o\left(n\right)$. Therefore, conditional on the event $E_n$ and using triangular inequality, we conclude that 
 $$ \sup_{h \le h_n} \: \sup_{\tilde{\rho} \in [-1,1]} ~  \sup_{x \in \mathbf{R}} ~ \left|  J_n(x,h,P,\tilde{\rho}) -  J_n(x,h,\hat{P}_n,\hat{\rho}_n)   \right| < \epsilon~,$$
 for any $n \ge N_3$, $h_n \le n$, and $h_n = o\left(n\right)$. Since $P_{\rho}(E_n) \ge 1-\eta$ for any $\rho \in [-1,1]$, the previous conclusion is equivalent to
 $$  \sup_{\rho \in [-1,1]} ~ P \left( \sup_{h \le h_n} \: \sup_{\tilde{\rho} \in [-1,1]} ~ \sup_{x\in \mathbf{R}} \: \left|  J_n(x,h,P,\tilde{\rho}) -  J_n(x,h,\hat{P}_n,\hat{\rho}_n)   \right| < \epsilon ~ \right ) \ge 1- \eta~, $$
for any $n \ge N_3$, $h_n \le n$ and $h_n = o\left(n\right)$, which concludes the proof of the theorem.


\subsection{Proof of Theorem \ref{thm:CI_bootstrap}}\label{appendix:thm:CI_bootstrap}
By Lemma \ref{lemma:quantiles_high-prob}, for any $\epsilon>0$, there exists $N_0 = N_0(\epsilon)$  such that
\begin{equation}\label{eq:aux0_thm3_2}
    P_{\rho} \left( z_{1-\alpha/2-\epsilon/2} \le c_n^{*}(h,1-\alpha)  \le z_{1-\alpha/2+\epsilon/2} \right) \ge 1 - \epsilon~,
\end{equation}
for any $n \ge N_0$,  $\rho \in [-1,1]$ and any $h \le h_n \le n $ and $h_n = o\left(n\right)$. Assumptions \ref{xu_assumptions} and \ref{A3:MO-PM} guarantee \eqref{eq:cdf_AA}; therefore, there exist $N_1 \ge N_0$ such that 
\begin{equation}\label{eq:aux1_thm3_2}
    P_{\rho}\left( |R_n(h)| \le z_{1-\alpha/2+\epsilon/2} \right) \le 1-\alpha+2\epsilon \quad \text{and} \quad P_{\rho}\left( |R_n(h)| \le z_{1-\alpha/2-\epsilon/2} \right) \ge 1-\alpha-2\epsilon~,
\end{equation} 
for any $n \ge N_1$,  $\rho \in [-1,1]$ and any $h \le h_n \le n $ and $h_n = o\left(n\right)$. Consider the derivation 
\begin{align*}
    P_{\rho}\left( \beta(\rho,h) \in C_n^*(h,1-\alpha) \right) &= P_{\rho}\left( |R_n(h)| \le c_n^*(h,1-\alpha) \right) \\
      &= P_{\rho}\left( |R_n(h)| \le c_n^*(h,1-\alpha) , c_n^*(h,1-\alpha) > z_{1-\alpha/2 +\epsilon/2} \right) \\
    &~+ P_{\rho}\left( |R_n(h)| \le c_n^*(h,1-\alpha) ,  c_n^*(h,1-\alpha) \le z_{1-\alpha/2+\epsilon/2}  \right) \\
     &\le P_{\rho}\left( c_n^*(h,1-\alpha) > z_{1-\alpha/2 + \epsilon/2}  \right) + P_{\rho}\left( |R_n(h)| \le z_{1-\alpha/2+\epsilon/2} \right) \\
    & \le \epsilon + 1-\alpha + 2\epsilon~,
\end{align*}
where the last inequality follows by \eqref{eq:aux0_thm3_2} and \eqref{eq:aux1_thm3_2}. Similarly, we obtain the inequality
$$ P_{\rho}\left( |R_n(h)| \le z_{1-\alpha/2-\epsilon/2} \right) \le P_{\rho}\left( \beta(\rho,h) \in C_n^*(h,1-\alpha) \right) + P_{\rho}\left( c_n^*(h,1-\alpha) < z_{1-\alpha/2-\epsilon/2}  \right)~,$$
which implies that $P_{\rho}\left( \beta(\rho,h) \in C_n^*(h,1-\alpha) \right) \ge  1-\alpha-2\epsilon - \epsilon$. We conclude that for any $n \ge N_1$, $\rho \in [-1,1]$ and any $h \le h_n \le n $ and $h_n = o\left(n\right)$, we have
$$ | P_{\rho}\left( \beta(\rho,h) \in C_n^*(h,1-\alpha) \right) - (1-\alpha)| \le 3\epsilon~, $$
which completes the proof of Theorem \ref{thm:CI_bootstrap}.


\subsection{Proof of Theorem \ref{thm:rates_AA}}\label{appendix:thm:rates_AA}

We first show that $J_n(x,h,P,\rho)$ admits a valid Edgeworth expansion, that is
\begin{equation}\label{eq:edgeworth_Jn}
  \sup_{x\in \textbf{R}}     \left|J_n(x,h,P,\rho) - \left( \Phi(x) + \sum_{j=1}^2 n^{-j/2}q_j(x,h,P,\rho) \phi(x) \right) \right| = O\left(n^{-1-\epsilon} \right)~ 
\end{equation}
for some $\epsilon \in (0,1/2)$, where $q_j(x,h,P,\rho)$ are polynomials on $x$ with coefficients that are continuous functions of the moments of $P$ (up to order 12) and $\rho$. Furthermore, we have $q_1(x,h,P,\rho) = q_1(-x,h,P,\rho)$ and $q_2(x,h,P,\rho) = -q_2(-x,h,P,\rho)$. 

To show \eqref{eq:edgeworth_Jn}, we first use Lemma  \ref{lemma:approximation_Jn} to approximate $J_n(x,h,P,\rho)$ by $\tilde{J}_n(x,h,P,\rho)$,
$$ \sup_{x\in \textbf{R}} |J_n(x,h,P,\rho) - \tilde{J}_n(x,h,P,\rho)| = D_n + O\left(n^{-1-\epsilon} \right)~,$$
for some $\epsilon \in (0,1/2)$, where
    $$ D_n = \sup_{x \in \mathbf{R}} \left| \tilde{J}_n(x+n^{-1-\epsilon},h,P,\rho) - \tilde{J}_n(x-n^{-1-\epsilon},h,P,\rho)   \right|~.$$ 
Due to Theorem \ref{thm:edgeworth_approx_Jtilde}, we can conclude $D_n = O\left(n^{-1-\epsilon} \right)$. We then use Theorem \ref{thm:edgeworth_approx_Jtilde} to approximate $\tilde{J}_n(x,h,P,\rho)$ by a valid Edgeworth expansion,
\begin{equation*} 
     \sup_{x \in \textbf{R}} \left| \tilde{J}_n(x,h,P,\rho) - \left(  \Phi(x) + \sum_{j=1}^2 n^{-j/2}q_j(x,h,P,\rho) \phi(x) \right) \right| = O\left(n^{-3/2} \right)~.
\end{equation*}
Note that we can use Theorem \ref{thm:edgeworth_approx_Jtilde} since Assumption \ref{AV_assumptions} implies Assumption \ref{appendix:AV_assumptions2} and the distribution $\tilde{J}_n(x,h,P,\rho)$ that we obtain from Lemma  \ref{lemma:approximation_Jn} satisfy the required conditions. We conclude \eqref{eq:edgeworth_Jn} by triangular inequality. The polynomials $q_j$ that appear in \eqref{eq:edgeworth_Jn} are the polynomials in the Edgeworth expansion of $\tilde{J}_n(x,h,P,\rho)$.

Now, we show that $P_{\rho} \left( |R_n(h)| \le x  \right)$ also admits an asymptotic approximation, that is
\begin{equation}\label{eq:edgeworth_Jn_2}
    \sup_{x\in \textbf{R}}  \left| P_{\rho} \left( |R_n(h)| \le x  \right) - \left( 2\Phi(x) - 1 + 2n^{-1}   q_2(x,h,P,\rho)\phi(x) \right)\right| = O\left( n^{-1-\epsilon}\right)~,
\end{equation}
where $q_2(x,h,P,\rho)$ and $\epsilon \in (0,1/2)$ are defined in \eqref{eq:edgeworth_Jn}. Note that \eqref{eq:thm_rates_AA_2} follows from  \eqref{eq:edgeworth_Jn_2} since we can write \eqref{eq:thm_rates_AA_2} as follows
$$ \left| P_{\rho} \left( |R_n(h)| \le z_{1-\alpha/2}  \right) - (1-\alpha) \right| = O \left( n^{-1} \right)~, $$
and the previous expression is what we obtain taking $x=z_{1-\alpha/2}$ in \eqref{eq:edgeworth_Jn_2}, where we used that $1-\alpha = 2\Phi(z_{1-\alpha/2}) - 1$ holds by definition of $z_{1-\alpha/2}$.

To show \eqref{eq:edgeworth_Jn_2}, we first write 
$$ P_{\rho} \left( |R_n(h)| \le x  \right) = J_n(x,h,P,\rho) -J_n(-x,h,P,\rho) + r_n(x)~,$$
where $r_n(x) = P_{\rho} \left( R_n(h) = -x  \right)$. We then use \eqref{eq:edgeworth_Jn} to approximate $J_n(\cdot,h,P,\rho) $ and the properties of the polynomials $q_j(\cdot,h,P,\rho)$ to obtain the following approximation
\begin{equation*} 
    \sup_{x\in \textbf{R}}  \left|  P_{\rho} \left( |R_n(h)| \le x  \right)  - \left( 2\Phi(x) - 1 + 2n^{-1}   q_2(x,h,P,\rho)\phi(x)   + r_n(x) \right) \right|  = O\left( n^{-1-\epsilon}\right)~.
\end{equation*}
Finally, $\sup_{x\in \textbf{R}}  r_n(x) = O\left(n^{-1-\epsilon}\right)$ since $ r_n(x) \le  P_{\rho} \left( R_n(h) \in (-x - n^{-1-\epsilon},-x]  \right) $ and \eqref{eq:edgeworth_Jn} holds. We use this in the previous expression to complete the proof of  \eqref{eq:edgeworth_Jn_2}. 


\subsection{Proof of Theorem \ref{thm:rates_bootstrap}}\label{appendix:thm:rates_bootstrap}

The proof has two parts. In the first part we assume that $P(|\Delta_n| > C_1 n^{-1-\epsilon}) \le C_2 n^{-1-\epsilon}$ for some constants $C_1$ and $C_2$, where $\Delta_n = c_n^*(h,1-\alpha) - c_n(h,1-\alpha)$. We use this assumption to prove the theorem with an error of size $O(n^{-(1+\epsilon)})$ for any $\epsilon \in (0,1/2)$, which is sufficient to conclude. In the second part, we prove the assumption of the first part.

\noindent \textbf{Part 1:} By \eqref{eq:bootstrap_CI}, we have $P_{\rho} \left( \beta(\rho,h) \in C_n^*(h,1-\alpha) \right)  = P_{\rho} \left( |R_n(h)| \le c_n^*(h,1-\alpha) \right) $. We can write this term as the sum of
$ P_{\rho} \left( |R_n(h)| \le c_n(h,1-\alpha) + \Delta_n, \left| \Delta_n \right| \le C_1 n^{-1-\epsilon} \right)$
and
$P_{\rho} \left( |R_n(h)| \le c_n(h,1-\alpha) + \Delta_n, \left| \Delta_n \right| > C_1 n^{-1-\epsilon} \right) $.
We conclude
$ P_{\rho} \left( \beta(\rho,h) \in C_n^*(h,1-\alpha) \right)$
is equal to 
$$ P_{\rho} \left( |R_n(h)| \le c_n(h,1-\alpha) + \Delta_n, \left| \Delta_n \right| \le C_1 n^{-1-\epsilon} \right) +O\left(n^{-1-\epsilon} \right)~.$$
    %
    %
    %
%
By \eqref{eq:edgeworth_Jn_2}  in the proof of Theorem \ref{thm:rates_AA}, we have 
$$  P_{\rho} \left( |R_n(h)| \le x + z n^{-1-\epsilon} \right) =  P_{\rho} \left( |R_n(h)| \le x \right) + O\left(n^{-1-\epsilon} \right)$$
for $z = -C_1,C_1$ and any $x \in \mathbf{R}$. Since 
$$P_{\rho} \left( |R_n(h)| \le x + \Delta_n, \left| \Delta_n \right| \le C_1 n^{-1-\epsilon} \right) \le P_{\rho} \left( |R_n(h)| \le x + C_1 n^{-1-\epsilon} \right)$$
and
$$P_{\rho} \left( |R_n(h)| \le x + \Delta_n, \left| \Delta_n \right| \le C_1 n^{-1-\epsilon} \right) \ge P_{\rho} \left( |R_n(h)| \le x - C_1 n^{-1-\epsilon} \right) + O\left(n^{-1-\epsilon} \right)~,$$
we conclude $P_{\rho} \left( |R_n(h)| \le x + \Delta_n, \left| \Delta_n \right| \le n^{-1-\epsilon} \right) = P_{\rho} \left( |R_n(h)| \le x \right) + O\left(n^{-1-\epsilon} \right)$. Taking $x = c_n(h,1-\alpha) $ and using that $P_{\rho} \left( |R_n(h)| \le c_n(h,1-\alpha) \right) = 1-\alpha$ (due to part 2 in Assumption \ref{AV_assumptions}), we conclude
$  P_{\rho} \left( \beta(\rho,h) \in C_n^*(h,1-\alpha) \right)  = 1-\alpha +  O\left(n^{-1-\epsilon} \right).$

\noindent \textbf{Part 2:} Fix $\epsilon \in (0,1/2)$. Define $E_{n,1} = \{ |\hat{\rho}_n| \le 1- a/2\}$, $E_{n,2} = \{ n^{-1} \sum_{t=1}^n \Tilde{u}_t^2  \ge \Tilde{C}_\sigma  \}$, $E_{n,3} = \{ n^{-1} \sum_{t=1}^n \Tilde{u}_t^{4k} \le M \}$, and $E_{n,4} = \{ \max_{ 1 \le r \le 12 } |n^{-1} \sum_{t=1}^n \Tilde{u}_t^r - E[u_t^r]| \le  n^{-\epsilon}\}$, where $\Tilde{C}_\sigma$ and $M$ are as in Lemma \ref{lemma:high-prob-event}. Define $E_{n} = E_{n,1} \cap E_{n,2} \cap E_{n,3}  \cap E_{n,4}$. By Lemma \ref{lemma:high-prob-event} and Assumption \ref{AV_assumptions}, it follows that $P(E_n^c) \le C_2 n^{-1-\epsilon}$ for some constant $C_2=C_2(a,h,k,C_\sigma, \epsilon,c_u)$. Note that conditional on the event $E_n$, we can use Lemma \ref{lemma:approximation_Jn} for the distribution of the bootstrap root $R_n^*(h)$. That is 
$$ \sup_{x \in \mathbf{R}} |J_n(x,h,\hat{P}_n,\hat{\rho}_n) - \tilde{J}_n(x,h,\hat{P}_n,\hat{\rho}_n)| \le D_n + n^{-1-\epsilon} C\left( n^{-1} \sum_{t=1}^{n} |\tilde{u}_t|^{k} + \Tilde{u}_t^{2k} + \Tilde{u}_t^{4k} \right) ~,$$
for some constant $C $,  where 
$$ D_n = \sup_{x \in \mathbf{R}} \left| \tilde{J}_n(x+n^{-1-\epsilon},h,\hat{P}_n,\hat{\rho}_n) - \tilde{J}_n(x-n^{-1-\epsilon},h,\hat{P}_n,\hat{\rho}_n)   \right|~.$$
By Theorem \ref{thm:edgeworth_approx_Jtilde_bootstrap}, there is an Edgeworth expansion for $\tilde{J}_n(x,h,\hat{P}_n,\hat{\rho}_n)$ conditional on $E_n$. This implies $D_n \le C  n^{-1-\epsilon}$ conditional on $E_n$, for some constant $C $. Similarly, conditional on $E_n$, $ n^{-1} \sum_{t=1}^{n} \left( |\tilde{u}_i|^{k} + \Tilde{u}_t^{2k} + \Tilde{u}_t^{4k} \right) \le C$, for some constant $C$ that depends on $M$. We conclude that, conditional on $E_n$, $J_n(x,h,\hat{P}_n,\hat{\rho}_n)$ has the following Edgeworth expansion,
$$ \sup_{x \in \mathbf{R}} \left|J_n(x,h,\hat{P}_n,\hat{\rho}_n) - \left( \Phi(x) + \sum_{j=1}^2 n^{-j/2}q_j(x,h,\hat{P}_n,\hat{\rho}_n) \phi(x)  \right) \right| \le C n^{-1-\epsilon}~.$$
The properties of $q_j(x,h,\hat{P}_n,\hat{\rho}_n)$ from Theorem \ref{thm:edgeworth_approx_Jtilde_bootstrap} and arguments from the proof of Theorem \ref{thm:rates_AA} imply
$$ \sup_{x \in \mathbf{R}} \left| P_{\rho} \left( |R_n^*(h)| \le x \mid Y^{(n)} \right) - \left( 2\Phi(x) - 1 + 2n^{-1}   q_2(x,h,\hat{P}_n,\hat{\rho}_n)\phi(x) \right)\right|  \le C n^{-1-\epsilon}~.$$
Recall that the coefficients of  $q_2(x,h,\hat{P}_n,\hat{\rho}_n)$ are polynomial of the moments of $\hat{P}_n$ (up-to order 12) and $\hat{\rho}_n$. Conditional on $E_n$, we \textcolor{black}{know} the moments of $\hat{P}_n$ are close to the moments of $P$: $ |n^{-1} \sum_{t=1}^n \Tilde{u}_t^r - E[u_t^r]| \le  n^{-\epsilon}$ for $r=1,\dots,12$. Therefore, conditional on $E_n$, we have
$$ \sup_{x \in \mathbf{R}} \left| P_{\rho} \left( |R_n^*(h)| \le x \mid Y^{(n)} \right) - \left( 2\Phi(x) - 1 + 2n^{-1}   q_2(x,h,P,\rho)\phi(x) \right)\right|  \le C n^{-1-\epsilon}~,$$
for some constant $C$. By \eqref{eq:edgeworth_Jn_2}  in the proof of Theorem \ref{thm:rates_AA}, the previous inequality, and the definition of $c_n^*(h,1-\alpha)$ and  $c_n(h,1-\alpha)$ as quantiles, we conclude that
$$ |c_n^*(h,1-\alpha) - c_n(h,1-\alpha)| \le C_1 n^{-1-\epsilon} $$
for some constant $C_1$. This completes the proof of our assumption in part 1. 
 

 \counterwithin{figure}{section}
\counterwithin{table}{section}
\renewcommand{\thetable}{\thesection.\arabic{table}}
\renewcommand{\thefigure}{\thesection.\arabic{figure}}
\renewcommand{\theequation}{B.\arabic{equation}}
\setcounter{equation}{0}

\section{Auxiliary Results}\label{sec:appendix_b} 

\subsection{Lemmas}\label{appendixB:lemmas}

\begin{lemma}\label{AppendixA:lemma_event_En}
    Suppose Assumptions \ref{xu_assumptions} and \ref{A3:MO-PM} hold. Then, for any fixed $\eta>0$, there exist constants $M>0$, $\tilde{K}_4>0$, and $N_0 = N_0(\eta)$ such that
    \begin{enumerate}
        \item $ P_{\rho} \left( g(\rho,n) ~n^{1/2}~ | \hat{\rho}_n - \rho |  > M \right) < \eta~,$

        \item $ P_{\rho} \left( \left | n^{-1} \sum_{t=1}^n \Tilde{u}_t^2 - \sigma^2  \right| > \sigma^2/2    \right) < \eta ~,$ 

        \item $ P_{\rho} \left(   n^{-1} \sum_{t=1}^n \Tilde{u}_t^4  > \tilde{K}_4    \right) < \eta ~,$ 
    \end{enumerate}
    for $n \ge N_0$ and $\rho \in [-1,1]$, where $g(\rho,k) = \left(\sum_{\ell=0}^{k-1}  \rho^{2 \ell}\right)^{1/2}$, $\hat{\rho}_n$ is as in \eqref{eq:rho_hat}, and $\{ \Tilde{u}_t: 1 \le t \le n\}$ are centered residuals as in \eqref{eq:centered_residuals}.
\end{lemma}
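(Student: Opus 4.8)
The plan is to reduce all three claims to uniform-in-$\rho$ bounds on a small number of rescaled statistics, where the rescaling by $g(\rho,n)$ is what makes the argument work simultaneously on the stationary region and at the unit-root boundary. Writing $\hat\rho_n-\rho=\left(\sum_{t=1}^n y_{t-1}u_t\right)\left(\sum_{t=1}^n y_{t-1}^2\right)^{-1}$ (using $y_t=\rho y_{t-1}+u_t$), I introduce the rescaled score $N_n\equiv g(\rho,n)^{-1}n^{-1/2}\sum_{t=1}^n y_{t-1}u_t$ and the rescaled denominator $D_n\equiv g(\rho,n)^{-2}n^{-1}\sum_{t=1}^n y_{t-1}^2$, so that $g(\rho,n)n^{1/2}(\hat\rho_n-\rho)=N_n/D_n$. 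Assumption \ref{A3:MO-PM} is precisely the statement that $D_n$ is bounded away from zero with probability approaching one, uniformly in $\rho$.

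First I would establish the two auxiliary moment bounds that drive everything. For the score, the martingale-difference property (Assumption \ref{xu_assumptions}(i)) kills the cross terms, so that $E[N_n^2]=g(\rho,n)^{-2}n^{-1}\sum_{t=1}^n E[y_{t-1}^2u_t^2]$; expanding $y_{t-1}=\sum_{j<t}\rho^{t-1-j}u_j$ and invoking Assumption \ref{xu_assumptions}(ii) to annihilate the off-diagonal terms $E[u_ju_ku_t^2]$ with $j\neq k$ leaves $E[y_{t-1}^2u_t^2]=\sum_{j<t}\rho^{2(t-1-j)}E[u_j^2u_t^2]\le C_4\,g(\rho,t-1)^2$, where $C_4\equiv\sup_t E[u_t^4]<\infty$ by Assumption \ref{xu_assumptions}(iv) and Cauchy--Schwarz. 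Summing and using $g(\rho,t-1)^2\le g(\rho,n)^2$ gives $\sup_{|\rho|\le 1}E[N_n^2]\le C_4$. A parallel Rosenthal/Burkholder bound for the martingale sum $y_{t-1}$ yields $E[y_{t-1}^4]\le C\,g(\rho,t-1)^4$, hence $\sup_{|\rho|\le 1}E[g(\rho,n)^{-4}n^{-1}\sum_{t=1}^n y_{t-1}^4]\le C$. Both bounds hold with constants independent of $\rho\in[-1,1]$.

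Part 1 then follows immediately: on the intersection of $\{|N_n|\le M_2\}$ (Chebyshev gives $P_\rho(|N_n|>M_2)\le C_4/M_2^2$) and $\{D_n\ge 1/M_1\}$ (Assumption \ref{A3:MO-PM}), we have $g(\rho,n)n^{1/2}|\hat\rho_n-\rho|=|N_n|/D_n\le M_1M_2=:M$, and choosing $M_1,M_2$ large and $n\ge N_0$ makes both complements smaller than $\eta/2$ uniformly in $\rho$. For Parts 2 and 3 I substitute $\hat u_t=u_t-(\hat\rho_n-\rho)y_{t-1}$ and center, obtaining $\tilde u_t=(u_t-\bar u)-\delta_n(\tilde y_{t-1}-\bar{\tilde y})$, where $\tilde y_{t-1}=y_{t-1}/g(\rho,n)$ and $\delta_n\equiv g(\rho,n)(\hat\rho_n-\rho)=N_n/(D_n n^{1/2})$, so that $|\delta_n|\le Mn^{-1/2}$ on the Part~1 event. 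Expanding $n^{-1}\sum_t\tilde u_t^2$ and $n^{-1}\sum_t\tilde u_t^4$ and applying Cauchy--Schwarz, respectively the elementary inequality $(a-b)^4\le 8a^4+8b^4$, every correction term carries a factor of $\delta_n$ (hence a power of $n^{-1/2}$) multiplied by the $O_p(1)$ averages $n^{-1}\sum_t\tilde y_{t-1}^2=D_n$ and $n^{-1}\sum_t\tilde y_{t-1}^4$, both controlled by the moment bounds above. This reduces Part 2 to the concentration $n^{-1}\sum_t u_t^2\to\sigma^2$ (Chebyshev using the eighth-moment and strong-mixing conditions of Assumption \ref{xu_assumptions}, noting that the shock distribution does not depend on $\rho$, so the bound is trivially uniform) and Part 3 to the boundedness of $n^{-1}\sum_t u_t^4$, with $\tilde K_4$ chosen accordingly.

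The main obstacle is the uniformity across the entire parameter space, including $\rho=\pm1$: the ordinary $n^{1/2}$-normalization degenerates at the unit root, so the whole argument must be carried out in the $g(\rho,n)$-rescaled variables, and the only way to control the denominator $D_n$ uniformly down to the boundary is through Assumption \ref{A3:MO-PM}. Getting the two moment bounds to hold with constants independent of $\rho$—especially the cancellation of cross terms via Assumption \ref{xu_assumptions}(ii) in the score variance, and a $\rho$-free Rosenthal constant in the fourth-moment bound—is the delicate part; once these are in place, Parts 2 and 3 are routine, since each correction term is suppressed by an explicit power of $n^{-1/2}$.
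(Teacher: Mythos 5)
Your proposal is correct and follows essentially the same route as the paper: rescale the score and the denominator by $g(\rho,n)$, invoke Assumption \ref{A3:MO-PM} to bound the denominator away from zero uniformly in $\rho$, apply Chebyshev to the rescaled score (whose variance is controlled by the martingale-difference and fourth-moment conditions), and reduce Parts 2 and 3 to laws of large numbers for $u_t^2$ and $u_t^4$ plus correction terms suppressed by powers of $g(\rho,n)(\hat{\rho}_n-\rho)$. The only cosmetic differences are that you bound $E[y_{t-1}^2u_t^2]$ directly via Assumption \ref{xu_assumptions}(ii) where the paper uses Cauchy--Schwarz together with the fourth-moment bound on $y_{t-1}$ from Lemma MOMT-Y of \cite{xu2022}, and that you expand the centered residuals by hand where the paper cites Lemma SIG of \cite{xu2022}.
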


\begin{proof}
    See Section \ref{sec:proof_lemma_b1} in \textcolor{black}{Online Supplemental} Appendix \ref{sec:appendix_online1}.
\end{proof}

\begin{lemma}\label{AppendixA:lemma_Mtilde}
For any fixed $M>0$. Suppose that for any $\rho \in [-1,1]$ we have
\begin{equation*} 
    |\hat{\rho}_n - \rho| \le  \frac{M}{n^{1/2}g(\rho,n)}~,
\end{equation*}
where $g(\rho,k) = \left(\sum_{\ell=0}^{k-1} ~ \rho^{2 \ell}\right)^{1/2}$. Then, there exist constants $\tilde{M} = \tilde{M}(M)>0$ and $N_0=N_0(M)>0$ such that
$
    |\hat{\rho}_n| \le 1 + \tilde{M}/n~
$
for all $n \ge N_0$.
\end{lemma}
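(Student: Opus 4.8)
The plan is to reduce the statement to a purely deterministic inequality about the function $g$ and then exploit its explicit form $g(\rho,n)^2 = \sum_{\ell=0}^{n-1}\rho^{2\ell}$. First I would apply the triangle inequality to the hypothesis to get $|\hat\rho_n| \le |\rho| + M n^{-1/2} g(\rho,n)^{-1}$. Since $g(\rho,n)$ depends on $\rho$ only through $\rho^2$, I may write $r = |\rho| \in [0,1]$, and it then suffices to produce $\tilde M$ and $N_0$ such that
\[
\frac{M}{n^{1/2} g(r,n)} \le (1-r) + \frac{\tilde M}{n} \qquad \text{for all } r \in [0,1],\ n \ge N_0,
\]
because adding $r$ to both sides yields $|\hat\rho_n| \le 1 + \tilde M/n$.

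The heart of the argument is a two-regime lower bound on $g(r,n)^2 = \sum_{\ell=0}^{n-1} r^{2\ell}$. When $r$ is not too close to $1$, concretely when $r^{2n} \le 1/2$, I would combine $1 - r^{2n} \ge 1/2$ with $1 - r^2 \le 2(1-r)$ in the closed form $g(r,n)^2 = (1-r^{2n})/(1-r^2)$ to obtain $g(r,n)^2 \ge 1/(4(1-r))$, hence $M n^{-1/2} g(r,n)^{-1} \le 2M\sqrt{1-r}\,n^{-1/2}$. When $r$ is very close to $1$, that is when $r^{2n} > 1/2$, every term satisfies $r^{2\ell} \ge r^{2(n-1)} \ge r^{2n} > 1/2$, so $g(r,n)^2 \ge n/2$ and $M n^{-1/2} g(r,n)^{-1} \le \sqrt 2\, M/n$. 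These two cases partition $[0,1]$ at the threshold $r = 2^{-1/(2n)}$ (with $r=1$ falling in the second case), so no value of $r$ is left uncovered.

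To close both cases I would invoke the elementary inequality $a + b \ge 2\sqrt{ab}$. In the first regime, $(1-r) + \tilde M/n \ge 2\sqrt{\tilde M}\,\sqrt{1-r}\,n^{-1/2}$ dominates $2M\sqrt{1-r}\,n^{-1/2}$ provided $\tilde M \ge M^2$; in the second regime $(1-r) + \tilde M/n \ge \tilde M/n$ dominates $\sqrt 2\, M/n$ provided $\tilde M \ge \sqrt 2 M$. Taking $\tilde M = \max\{M^2, \sqrt 2 M\}$ settles both, and in fact any $N_0 \ge 1$ works since no growth condition on $n$ is used. The main obstacle, and the only delicate point, is the behavior of $g(r,n)$ as $r \uparrow 1$, where it transitions from the geometric-series plateau $(1-r^2)^{-1/2}$ to the unit-root value $\sqrt n$; the $\sqrt{1-r}\,n^{-1/2}$ bound produced in that transition is of exactly the size that an AM-GM split of $(1-r) + \tilde M/n$ can absorb, which is what pins down the scaling $\tilde M \asymp M^2$.
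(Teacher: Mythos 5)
Your proof is correct, and it takes a genuinely different route from the paper's. The paper argues by contradiction: it negates the conclusion to extract sequences $\rho_k$, $\tilde M_k \to \infty$, $n_k \to \infty$ violating the bound, introduces $a_k = n_k(1-|\rho_k|)$, establishes the lower bound $g(\rho_k,n_k)^2 \ge \frac{n_k}{a_k}\cdot\frac{1-e^{-a_k}}{2}$, and then splits into the cases $a_k \to \infty$ and $a_k \to a$ finite, deriving a contradiction in each. Your argument replaces this with a direct, quantitative dichotomy on whether $r^{2n} \le 1/2$ or $r^{2n} > 1/2$ (where $r=|\rho|$): in the first regime the geometric-series closed form gives $g(r,n)^2 \ge 1/(4(1-r))$ and AM--GM on $(1-r)+\tilde M/n$ absorbs the resulting $2M\sqrt{1-r}\,n^{-1/2}$ once $\tilde M \ge M^2$; in the second regime $g(r,n)^2 \ge n/2$ and $\tilde M \ge \sqrt2\,M$ suffices. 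The two dichotomies are morally the same (both separate the stationary/slowly-drifting region from the near-unit-root region), but your version buys explicit constants $\tilde M = \max\{M^2,\sqrt2\,M\}$ and $N_0=1$, avoids the subsequence extraction entirely, and, by bounding $|\hat\rho_n|$ rather than $\hat\rho_n$ from above, handles the lower bound $\hat\rho_n \ge -1-\tilde M/n$ in the same stroke, which the paper relegates to a remark that the proof "can be adapted." The paper's compactness-style argument is more in keeping with the uniformity arguments used elsewhere in its appendix, but for this lemma your direct computation is cleaner and strictly more informative.
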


\begin{proof}
    See Section \ref{sec:proof_lemma_b2} in \textcolor{black}{Online Supplemental} Appendix \ref{sec:appendix_online1}.
\end{proof}

\begin{lemma}\label{lemma:quantiles_high-prob}
    Suppose Assumptions \ref{xu_assumptions} and \ref{A3:MO-PM} hold. Fix $\epsilon>0$. Then, for any $\alpha \in (0,1)$ and for any sequence $h_n \le n$ such that $h_n = o\left(n\right)$, we have
    \begin{enumerate}
        \item $\lim_{n \to \infty} ~ \sup_{h \le h_n} ~ \sup_{\rho \in [-1,1]} P_{\rho} \left( z_{1-\alpha/2-3\epsilon/2} \le c_n^*(h,1-\alpha) \le z_{1-\alpha/2+3\epsilon/2}  \right) = 1~,$

        \item $\lim_{n \to \infty} ~ \sup_{h \le h_n} ~ \sup_{\rho \in [-1,1]} P_{\rho} \left( z_{\alpha_0-\epsilon/2} \le q_n^*(h,\alpha_0) \le z_{\alpha_0+\epsilon/2}  \right) = 1~,$
    \end{enumerate}
    where $z_{\alpha_0}$ is the $\alpha_0$-quantiles of the standard normal distribution, $c_n^*(h,1-\alpha) $ is as in \eqref{eq:cv_bootstrap}, and $q_n^*(h,\alpha_0)$ is the $\alpha_0$-quantile of $R_{b,n}^*(h)$ defined in \eqref{eq:root_bootstrap}.
\end{lemma}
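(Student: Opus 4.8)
The plan is to reduce both claims to a single high-probability statement — that the bootstrap distribution $J_n(\cdot,h,\hat{P}_n,\hat{\rho}_n)$ is uniformly close to the standard normal $\Phi(\cdot)$ — and then to convert closeness of distribution functions into closeness of quantiles. First I would combine Theorem \ref{thm:consistency} with \eqref{eq:cdf_AA}: Theorem \ref{thm:consistency} controls $\sup_x|J_n(x,h,\hat{P}_n,\hat{\rho}_n)-J_n(x,h,P,\rho)|$ with probability approaching one, while \eqref{eq:cdf_AA} controls $\sup_x|J_n(x,h,P,\rho)-\Phi(x)|$ deterministically, so by the triangle inequality, for any $\delta>0$ and any $\eta>0$ there is $N=N(\delta,\eta)$ such that for all $n\ge N$,
\[
  \sup_{|\rho|\le 1} P_{\rho}\!\left( \sup_{h\le h_n}\sup_{x\in\mathbf{R}} \left| J_n(x,h,\hat{P}_n,\hat{\rho}_n) - \Phi(x) \right| > \delta \right) < \eta~.
\]
Write $E_n$ for the complement of the event inside this probability, so that $P_{\rho}(E_n^c)<\eta$ uniformly in $\rho$ and, on $E_n$, $J_n(\cdot,h,\hat{P}_n,\hat{\rho}_n)$ is within $\delta$ of $\Phi$ simultaneously for all $h\le h_n$.

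For the second claim I would argue directly on $E_n$. The distribution function of the bootstrap root $R_{b,n}^*(h)$ defined in \eqref{eq:root_bootstrap} is exactly $J_n(\cdot,h,\hat{P}_n,\hat{\rho}_n)$, so on $E_n$ it lies within $\delta$ of $\Phi$. Because $\Phi$ is continuous and strictly increasing, a standard quantile-inversion argument gives $z_{\alpha_0-\delta}\le q_n^*(h,\alpha_0)\le z_{\alpha_0+\delta}$ on $E_n$: right-continuity of the distribution function yields $J_n(q_n^*(h,\alpha_0),h,\hat{P}_n,\hat{\rho}_n)\ge\alpha_0$, hence $\Phi(q_n^*(h,\alpha_0))\ge\alpha_0-\delta$ and $q_n^*(h,\alpha_0)\ge z_{\alpha_0-\delta}$, and symmetrically every $u<q_n^*(h,\alpha_0)$ satisfies $\Phi(u)<\alpha_0+\delta$, forcing $q_n^*(h,\alpha_0)\le z_{\alpha_0+\delta}$. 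Choosing $\delta=\epsilon/2$ and letting $\eta\downarrow 0$ gives claim 2.

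For the first claim the only additional step is to pass from $R_{b,n}^*(h)$ to $|R_{b,n}^*(h)|$. On $E_n$, for $u\ge 0$,
\[
  P_{\rho}\!\left(|R_{b,n}^*(h)|\le u \mid Y^{(n)}\right) = J_n(u,h,\hat{P}_n,\hat{\rho}_n) - J_n\!\left((-u)^{-},h,\hat{P}_n,\hat{\rho}_n\right)~,
\]
and comparing this to $2\Phi(u)-1=\Phi(u)-\Phi(-u)$, the $\delta$-closeness applied at $u$ and at $-u$ (using continuity of $\Phi$ to handle the left limit) gives a uniform bound of $2\delta$ between the conditional distribution function of $|R_{b,n}^*(h)|$ and the distribution function $u\mapsto 2\Phi(u)-1$ of $|Z|$ with $Z\sim N(0,1)$. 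Since $u\mapsto 2\Phi(u)-1$ is continuous and strictly increasing on $[0,\infty)$ with $(1-\alpha)$-quantile equal to $z_{1-\alpha/2}$, the same quantile-inversion argument (the factor $2$ in the distribution bound canceling the factor $1/2$ coming from $\alpha/2$) yields $z_{1-\alpha/2-\delta}\le c_n^*(h,1-\alpha)\le z_{1-\alpha/2+\delta}$ on $E_n$; taking $\delta=3\epsilon/2$ completes claim 1. In both cases the containment holds with probability at least $1-\eta$ uniformly over $\rho\in[-1,1]$ and $h\le h_n$, which gives the stated limits.

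The main obstacle is carrying out the quantile inversion uniformly over $\rho$ and $h$ while the bootstrap distribution $J_n(\cdot,h,\hat{P}_n,\hat{\rho}_n)$ is itself random and, being built from the finite empirical distribution of the centered residuals in \eqref{eq:centered_residuals}, is a step function with atoms. The observation that makes this routine is that the target $\Phi$ (equivalently $2\Phi-1$) is fixed, continuous, and strictly monotone near the relevant quantiles, so the $\pm\delta$ slack simultaneously absorbs the discreteness of the bootstrap distribution and any boundary point mass, and the uniformity over $\rho$ and $h$ is inherited directly from the uniform-in-probability conclusion of Theorem \ref{thm:consistency} and the uniform convergence in \eqref{eq:cdf_AA}. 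I would only need mild care that the infimum defining each quantile is matched with the correct one-sided limit of the distribution function, but beyond the continuity and monotonicity of $\Phi$ no delicate estimate is required.
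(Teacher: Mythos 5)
Your proposal is correct and follows essentially the same route as the paper: establish, on a high-probability event uniform in $\rho$ and $h$, that $\sup_x|J_n(x,h,\hat{P}_n,\hat{\rho}_n)-\Phi(x)|$ is small (the paper reuses the event $E_n$ and inequality \eqref{eq:aux0_proof_thm3.1} from the proof of Theorem \ref{thm:consistency}, while you recover the same bound from the theorem's statement plus \eqref{eq:cdf_AA} by the triangle inequality), and then invert quantiles against the continuous strictly increasing limits $\Phi$ and $2\Phi-1$. The only cosmetic difference is that the paper bounds the possible atom at $-x$ by an explicit third term $I_3\le 2\epsilon$, whereas you absorb it via the left limit and continuity of $\Phi$; both yield the stated containment since $\epsilon$ is arbitrary.
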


\begin{proof}
    See Section \ref{sec:proof_lemma_b3} in \textcolor{black}{Online Supplemental} Appendix \ref{sec:appendix_online1}.
\end{proof}

\begin{lemma}\label{lemma:approximation_Jn}
    Suppose Assumption \ref{AV_assumptions} holds. For any fixed $h \in \mathbf{N}$ and $a \in (0,1)$. Then, for any $\rho \in [-1+a,1-a]$ and $\epsilon \in (0,1/2)$, there exist constant $C = C(a,h,k,C_\sigma)>0$ and a real-valued function 
    $$\mathcal{T}(\cdot;\sigma^2,\psi_4^4,\rho) : \mathbf{R}^8 \to \mathbf{R}~,$$
    such that %
    \begin{enumerate}
        \item $\mathcal{T}(\mathbf{0};\sigma^2,\psi_4^4,\rho) = 0$, 

        \item $\mathcal{T}(x;\sigma^2,\psi_4^4,\rho) $ is a polynomial of degree 3 in $x \in \mathbf{R}^8$   with coefficients that are continuously differentiable functions of $\sigma^2$, $\psi_4^4$, and  $\rho$, 

        \item $ \sup_{x \in \mathbf{R}} |J_T(x,h,P,\rho) - \Tilde{J}_n(x,h,P,\rho)| \le D_n + n^{-1-\epsilon} C\left(E[|u_t|^{k}] + E[u_t^{2k}] + E[u_t^{4k}] \right),$
    \end{enumerate}
    where  $\sigma^2 = E_P[u_1^2]$, $\psi_4^4 = E_P[u_1^4]$, $k \ge 8(1+\epsilon)/(1-2\epsilon)$,
    $$\Tilde{J}_n(x,h,P,\rho) \equiv P_{\rho} \left( (n-h)^{1/2} \mathcal{T} \left( \frac{1}{n-h} \sum_{t=1}^{n-h} X_t ;\sigma^2,\psi_4^4,\rho \right) \le x \right)~,$$ 
    and
    $$ D_n = \sup_{x \in \mathbf{R}} \left| \Tilde{J}_n(x+n^{-1-\epsilon},h,P,\rho) - \Tilde{J}_n(x-n^{-1-\epsilon},h,P,\rho)   \right|~.$$
    The sequence $\{ X_t : 1 \le t \le n-h \}$ is defined in \eqref{eq:Xi-vec}. Furthermore, the asymptotic variance of $(n-h)^{1/2} \mathcal{T}( (n-h)^{-1} \sum_{t=1}^{n-h} X_t ;\sigma^2,\psi_4^4,\rho  )$ is equal to one.
\end{lemma}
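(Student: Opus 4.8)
The plan is to cast the studentized root $R_n(h)$ as a \emph{smooth function of a vector sample mean}, which is the standard setup for the Edgeworth machinery invoked later in Theorem \ref{thm:edgeworth_approx_Jtilde}. First I would use the algebra behind the lag-augmented regression to write $R_n(h)$ as an exact ratio of sample averages. By Frisch--Waugh applied to \eqref{eq:la_lp} and the identity $y_{t+h} = \rho^h y_t + \xi_t(\rho,h)$, the numerator satisfies $\hat\beta_n(h) - \rho^h = (\sum_t \hat u_t(h)\,\xi_t(\rho,h))/(\sum_t \hat u_t(h)^2)$, while the HC standard error \eqref{eq:HC_se} contributes $\sum_t \hat\xi_t(h)^2\hat u_t(h)^2$; substituting $\hat u_t(h) = u_t - (\hat\rho_n(h)-\rho)y_{t-1}$ with $\hat\rho_n(h)-\rho = (\sum_t y_{t-1}u_t)/(\sum_t y_{t-1}^2)$ then expresses both numerator and denominator as polynomials in the components of the mean-zero vector $X_t$ defined in \eqref{eq:Xi-vec}. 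This is the decomposition already sketched in Remark \ref{rem:hc_se}. The outcome is an identity $R_n(h) = (n-h)^{1/2}\,\mathcal{G}(\bar X_n)$ (up to residual-versus-shock corrections handled in the last step), where $\bar X_n = (n-h)^{-1}\sum_t X_t$ and $\mathcal{G}$ is a ratio whose denominator equals $(E_P[\xi_t^2 u_t^2])^{1/2}(1+o(1))$ near $\bar X_n = \mathbf 0$, hence bounded away from zero by Assumption \ref{AV_assumptions}(ii)--(iii).

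I would then define $\mathcal{T}(\,\cdot\,;\sigma^2,\psi_4^4,\rho)$ as the degree-$3$ Taylor polynomial of $\mathcal{G}$ at the origin. Properties (1) and (2) are then immediate: $\mathcal{G}(\mathbf 0)=0$ because $X_t$ is centered so the population value of the studentized numerator vanishes, and the Taylor coefficients are built from the partial derivatives of $\mathcal{G}$ at $\mathbf 0$, which depend continuously differentiably on $\sigma^2 = E_P[u_1^2]$, $\psi_4^4 = E_P[u_1^4]$ and $\rho$ through the population denominator and the entries of $E_P[X_tX_t']$. The unit-variance claim is the usual delta-method computation: the linear term of $(n-h)^{1/2}\,\mathcal{T}(\bar X_n)$ is $\nabla\mathcal{G}(\mathbf 0)'\,(n-h)^{1/2}\,\bar X_n$, whose asymptotic variance $\nabla\mathcal{G}(\mathbf 0)'\,\Sigma\,\nabla\mathcal{G}(\mathbf 0)$ equals one precisely because of the studentization by $\hat s_n(h)$, and this I would verify directly.

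The substantive step is the bound in (3). I would sandwich $J_n(x,h,P,\rho) = P_\rho(R_n(h)\le x)$ between $P_\rho((n-h)^{1/2}\,\mathcal{T}(\bar X_n)\le x\pm n^{-1-\epsilon})$ at the cost of the probability that the approximation error $\Delta := R_n(h) - (n-h)^{1/2}\,\mathcal{T}(\bar X_n)$ exceeds $n^{-1-\epsilon}$ in absolute value; the shift by $n^{-1-\epsilon}$ contributes exactly $D_n$. The error $\Delta$ is the sum of (i) the Taylor remainder of order $\ge 4$, of size $(n-h)^{1/2}\,O(\|\bar X_n\|^4)$, and (ii) the terms generated by replacing $\hat u_t(h),\hat\xi_t(h)$ with $u_t,\xi_t(\rho,h)$, which are of the same or smaller order. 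Writing $W = (n-h)^{1/2}\,\bar X_n$, the event $\{|\Delta|>n^{-1-\epsilon}\}$ is contained, up to exponentially small large-deviation sets on which the denominator of $\mathcal{G}$ could degenerate, in $\{\|W\|^4 > c\,n^{1/2-\epsilon}\}$, and a Markov bound with the $k$-th moment gives $P_\rho(\|W\|>c\,n^{1/8-\epsilon/4}) \le c\,n^{-(1/8-\epsilon/4)k}\,E[\|W\|^k]$. Choosing $k \ge 8(1+\epsilon)/(1-2\epsilon)$ makes the exponent at least $1+\epsilon$, and since the components of $X_t$ are products of shocks of degree up to four, a Rosenthal-type moment inequality for the weakly dependent sum bounds $E[\|W\|^k]$ by a constant times $E[|u_t|^{k}] + E[u_t^{2k}] + E[u_t^{4k}]$, yielding the stated $n^{-1-\epsilon}C(\cdots)$ term.

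The hardest part will be the bookkeeping in step (3): tracking the residual-versus-shock substitution errors so they are provably $O(n^{-1-\epsilon})$ with the advertised moment factors, and organizing the remainder so that the large-deviation region where $\mathcal{G}$'s denominator is near zero is absorbed into the Markov bound rather than left uncontrolled. Because $\rho \in [-1+a,1-a]$ forces geometric decay of the moving-average weights in $y_{t-1} = \sum_{j\ge 0}\rho^j u_{t-1-j}$, truncating this infinite sum to finitely many coordinates, needed both for the moment bounds and for the later Götze--Hipp application, costs only an exponentially small error, so although tedious it is not a genuine obstacle; the moment accounting is.
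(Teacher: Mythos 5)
Your proposal follows essentially the same route as the paper: the paper reduces this lemma via exactly your Bonferroni sandwich (yielding $D_n$ plus the tail probability of the approximation error) to an auxiliary approximation lemma, which in turn writes $R_n(h)=(n-h)^{1/2}V^{-1/2}f_n(1+g_n)^{-1/2}$ with $f_n,g_n$ decomposed into pieces of decreasing stochastic order, takes the degree-3 truncation of the resulting expansion (your Taylor polynomial of $\mathcal{G}$), and controls the remainder by Markov's inequality with $\delta=n^{(1/2-\epsilon)/4}$ and martingale (Rosenthal-type) moment bounds, giving the same choice $k\ge 8(1+\epsilon)/(1-2\epsilon)$ and the same moment factors. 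The only execution detail your generic framing glosses over is that $\sum_t y_{t-1}^2$ is not a coordinate of $X_t$, so the paper rewrites it through a telescoping identity in terms of $\sum_t u_t y_{t-1}$, $\sum_t(u_t^2-\sigma^2)$ and a boundary term $y_{n-h}^2/(n-h)$ before expanding the reciprocal; this falls under the bookkeeping you already flag.
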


\begin{proof}
    See Section \ref{sec:proof_lemma_b4} in \textcolor{black}{Online Supplemental} Appendix \ref{sec:appendix_online2}.
\end{proof}

\begin{lemma}\label{lemma:high-prob-event}
Suppose Assumption \ref{AV_assumptions} holds.  For any fixed $h \in \mathbf{N}$ and $a \in (0,1)$. Then, for any $|\rho|\le 1-a$ and $\epsilon \in (0,1/2)$, there exist $ {C} = C(a,k,h,C_\sigma,\epsilon,c_u)$, $ \tilde{C}_{\sigma}$, and $M$ such that
\begin{enumerate}
    \item $P \left( \left|\hat{\rho}_n \right| > 1-a/2 \right)  \le {C} n^{-1-\epsilon} $

    \item $P \left( \left| n^{-1} \sum_{t=1}^{n} \Tilde{u}_t^{r} - E[u_t^r] \right| > n^{-\epsilon} \right)  \le {C} n^{-1-\epsilon} $  

    \item $P \left(  n^{-1} \sum_{t=1}^{n} \Tilde{u}_t^2  < \tilde{C}_{\sigma} \right)  \le {C} n^{-1-\epsilon}  $

    \item $P \left(   n^{-1} \sum_{t=1}^{n} \Tilde{u}_t^{4k}  > M \right)  \le {C} n^{-1-\epsilon}$  
\end{enumerate}
for fixed $r \ge 1$, $k \ge 8(1+\epsilon)/(1-2\epsilon)$, where $\hat{\rho}_n$ and the centered residuals $\{ \Tilde{u}_t: 1 \le t \le n\}$ are defined in \eqref{eq:rho_hat} and \eqref{eq:centered_residuals}, respectively.
\end{lemma}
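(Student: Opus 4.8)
The plan is to treat all four statements as polynomial-rate concentration inequalities driven by the sub-Gaussian moment generating function bound in part (iii) of Assumption \ref{AV_assumptions}. That bound guarantees that every $u_t$ has all moments finite and uniformly bounded in terms of $c_u$, and, since $|\rho|\le 1-a$ forces $y_t=\sum_{j=0}^{t-1}\rho^{j}u_{t-j}$ to be a geometrically weighted sum of the i.i.d. shocks, it makes each $y_t$ sub-Gaussian with a variance proxy bounded by a constant depending only on $a$ and $c_u$. Consequently any fixed polynomial in $(y_{t-1},u_t)$ is sub-exponential with constants uniform over $|\rho|\le 1-a$. For the target rate $n^{-1-\epsilon}$ it then suffices to extract concentration through Markov's inequality applied to a sufficiently high even moment $p$: a Rosenthal / Marcinkiewicz--Zygmund bound (and its weakly dependent analogue for functionals of the shocks) gives $E\big|n^{-1}\sum_t(W_t-EW_t)\big|^{p}=O(n^{-p/2})$ for $W_t$ a fixed polynomial in the data, so that $P(|n^{-1}\sum_t(W_t-EW_t)|>n^{-\epsilon})\le n^{\epsilon p}O(n^{-p/2})=O(n^{-p(1/2-\epsilon)})$, which is $O(n^{-1-\epsilon})$ once $p\ge (1+\epsilon)/(1/2-\epsilon)$.

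I would establish claim 1 first, since it isolates the behaviour of $\hat{\rho}_n$. Writing $\hat{\rho}_n-\rho=(\sum_t y_{t-1}^2)^{-1}\sum_t y_{t-1}u_t$, the event $\{|\hat{\rho}_n|>1-a/2\}$ is contained in $\{|\hat{\rho}_n-\rho|>a/2\}$ because $|\rho|\le 1-a$. Since $E[y_{t-1}^2]\ge\sigma^2\ge C_\sigma$ for $t\ge 2$, one has $n^{-1}\sum_t E[y_{t-1}^2]\ge\sigma^2(n-1)/n$, so the denominator $n^{-1}\sum_t y_{t-1}^2$ stays bounded below by a positive constant off an event of probability $O(n^{-1-\epsilon})$, while the numerator $n^{-1}\sum_t y_{t-1}u_t$ is a mean-zero martingale-difference average concentrating at the $n^{-1/2}$ scale; combining these through the moment argument above gives $P(|\hat{\rho}_n-\rho|>a/2)\le C n^{-1-\epsilon}$.

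For claims 2--4 the central step is to transfer concentration from the shocks to the centered residuals. Using $\hat{u}_t=u_t-(\hat{\rho}_n-\rho)y_{t-1}$ together with the centering in \eqref{eq:centered_residuals}, I would expand $n^{-1}\sum_t\tilde{u}_t^{\,r}$ as a fixed polynomial in the scalar $(\hat{\rho}_n-\rho)$, the centering term $n^{-1}\sum_s\hat{u}_s$, and sample averages of the form $n^{-1}\sum_t u_t^{a}y_{t-1}^{b}$ with $a+b\le r$. On the high-probability event from claim 1 the factor $(\hat{\rho}_n-\rho)$ is of order $n^{-1/2}$, every such sample average is within $n^{-\epsilon}$ of its mean by the Markov argument, and the leading term $n^{-1}\sum_t u_t^{\,r}$ is within $n^{-\epsilon}$ of $E[u_t^{\,r}]$; a union bound over the finitely many pieces then yields claim 2. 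Claim 3 is the one-sided specialization with $r=2$, taking $\tilde{C}_\sigma=C_\sigma/2$ and using $E[u_t^2]\ge C_\sigma$, and claim 4 is the one-sided specialization with $r=4k$, taking $M=2E[u_t^{4k}]$, which is finite by part (iii) of Assumption \ref{AV_assumptions}.

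The hard part will be the residual-to-shock transfer in claims 2--4, because the cross terms $(\hat{\rho}_n-\rho)^{j}\,n^{-1}\sum_t y_{t-1}^{j}u_t^{\,r-j}$ couple the estimation error $\hat{\rho}_n-\rho$ to sample moments built from the same data and involve the only sub-Gaussian, hence unbounded, regressor $y_{t-1}$. The safe route is to control each factor on a common event and to verify that every individual piece already attains the $n^{-1-\epsilon}$ rate, so that a union bound --- rather than any delicate independence argument --- closes the estimate while keeping all constants uniform over $|\rho|\le 1-a$.
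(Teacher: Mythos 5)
Your proposal is correct and follows essentially the same route as the paper's proof: the inclusion of $\{|\hat{\rho}_n|>1-a/2\}$ into a deviation event for $\hat{\rho}_n-\rho$, the multinomial expansion of $n^{-1}\sum_t \tilde{u}_t^{\,r}$ into powers of $(\hat{\rho}_n-\rho)$, the centering term, and sample averages $n^{-1}\sum_t u_t^{a}y_{t-1}^{b}$, each controlled at rate $n^{-1-\epsilon}$ by Markov's inequality on a high moment of a martingale-difference decomposition, followed by a union bound, with items 3 and 4 as one-sided specializations of item 2. The only cosmetic difference is your sub-Gaussian framing and the constant threshold $a/2$ in item 1, where the paper uses the threshold $n^{1/2-\epsilon}$; both reduce to the same polynomial tail bound for $\hat{\rho}_n-\rho$.
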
 

\begin{proof}
    See Section \ref{sec:proof_lemma_b5} in \textcolor{black}{Online Supplemental} Appendix \ref{sec:appendix_online2}.
\end{proof}

\subsection{Uniform Consistency }\label{sec:appendix_uniform}

For any fixed $M>0$, consider the sequence of models:
$$ y_{n,t} = \rho_n y_{n,t-1} + u_{n,t}~,~ y_{n,0} = 0~,\quad \text{and} \quad \rho_n \in [-1-M/n,1+M/n]~,$$
where $\{u_{n,t} : 1 \le t \le n\}$ is a sequence of shocks with probability distribution denoted by $P_n$. We use $P_n$ and $E_n$ to compute respectively probabilities and expected values of the sequence $\{(y_{n,t},u_{n,t}) : 1 \le t \le n \}$. This appendix presents results for a sequence of AR(1) models.
 
We extend the notation introduced in Section \ref{section:preliminaries} for the sequence of models. For fixed any $h<n$, the coefficients in the linear regression of $y_{n,t+h}$ on $(y_{n,t},y_{n,t-1})$ are defined by
\begin{equation}\label{eq:appendix_A_LP}
  \begin{pmatrix}
    \hat{\beta}_n(h) \\
    \hat{\gamma}_n(h)
\end{pmatrix} = \left( \sum_{t=1}^{n-h} x_{n,t} x_{n,t}' \right)^{-1} \left( \sum_{t=1}^{n-h} x_{n,t} y_{n,t+h} \right)~,
\end{equation}
where $x_{n,t} \equiv (y_{n,t}, y_{n,t-1})'$. And the HC standard error $\hat{s}_n(h)$ is defined by
\begin{equation*}
     \hat{s}_n(h) \equiv \left( \sum_{t=1}^{n-h} \hat{u}_{n,t}(h)^2 \right)^{-1/2} \left(\sum_{t=1}^{n-h} \hat{\xi}_{n,t}(h)^2 \hat{u}_{n,t}(h)^2 \right)^{1/2} \left( \sum_{t=1}^{n-h} \hat{u}_{n,t}(h)^2 \right)^{-1/2}~,
\end{equation*}
where $\hat{\xi}_{n,t}(h) = y_{n,t+h} - \hat{\beta}_n(h) y_{n,t} - \hat{\gamma}_n(h) y_{n,t-1}$,  $\hat{u}_{n,t}(h) = y_{n,t} - \hat{\rho}_n(h) y_{n,t-1}$, and $\hat{\rho}_n(h)$ is defined as 
\begin{equation}\label{eq:appendix_A_OLS}
    \hat{\rho}_n(h) \equiv \left( \sum_{t=1}^{n-h} y_{n,t-1}^2 \right)^{-1} \left(  \sum_{t=1}^{n-h} y_{n,t} y_{n,t-1} \right)~.
\end{equation}

For any fixed positive constants $K_4 >0$ and $ \overline{\sigma} \ge \underline{\sigma} >0$, we consider the next assumption that imposes restrictions on the distribution of the shocks $P_n$.

\begin{assumption}\label{appendix:AV_assumptions} \hspace{1cm} \vspace{-0.25cm}
\begin{enumerate}
    \item[i)] $\{u_{n,t}: 1 \le t \le n \}$ are i.i.d. random variables with mean zero and variance $\sigma_n^2$. 

    \item[ii)] ${E}_n[u_{n,t}^4]< K_4$ and  $\sigma_n^2 \in [\underline{\sigma},\:\overline{\sigma}]$.
     
\end{enumerate}
\end{assumption}

We denote by $\mathbf{P}_{n,0} $ the set of all distributions $P_n$ that verify Assumption \ref{appendix:AV_assumptions}.
Theorem \ref{thm:sequence_models} below shows that the results presented in \cite{xu2022} and \cite{MO-PM2020} also hold for sequences of AR(1) models with i.i.d. shocks.
We adapt their proof and simplify some steps based on our stronger assumptions over the serial dependence of the shocks. For instance, we assume only bounded 4th moments, while they assume bounded at least 8th bounded moments. One remarkable difference is that we do not need to assume a high-level assumption such as Assumption \ref{A3:MO-PM} since this can be verified using Assumption \ref{appendix:AV_assumptions}; we present the claim of this result in the next proposition.

\begin{proposition}\label{proposition:A3}
Suppose Assumption \ref{appendix:AV_assumptions} holds. Then, we have
\begin{equation*}
 \lim_{K \to \infty} ~ \lim_{n \to \infty} ~ \inf_{P_n \in \mathbf{P}_{n,0}}  ~ \inf_{|\rho_n| \le 1+M/n} 
 ~ P_n \left(~  g(\rho,n)^{-2} ~ n^{-1} \sum_{t=1}^{n}y_{n,t-1}^2  \ge 1/K ~\right) = 1~,  
\end{equation*}
where $g(\rho,k) = \left(\sum_{\ell=0}^{k-1} ~ \rho^{2 \ell}\right)^{1/2}$.    
\end{proposition}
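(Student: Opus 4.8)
The plan is to split the statement into a deterministic lower bound for the mean of the statistic and a uniform lower-tail (anti-concentration) argument. Write $Q_n \equiv g(\rho_n,n)^{-2}\,n^{-1}\sum_{t=1}^n y_{n,t-1}^2$ and $a_n\equiv\rho_n^2$, and recall the moving-average representation $y_{n,t-1}=\sum_{j=0}^{t-2}\rho_n^{\,j}u_{n,t-1-j}$. Since the shocks are i.i.d.\ with mean zero and variance $\sigma_n^2$, one gets $E_n[y_{n,t-1}^2]=\sigma_n^2\,g(\rho_n,t-1)^2$ and hence $E_n[Q_n]=\sigma_n^2\,(n\,g(\rho_n,n)^2)^{-1}\sum_{k=1}^{n-1}g(\rho_n,k)^2$. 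Because $k\mapsto g(\rho_n,k)^2=\sum_{\ell=0}^{k-1}a_n^{\ell}$ is nondecreasing, retaining only the indices $k\ge\lceil n/2\rceil$ gives $E_n[Q_n]\ge c\,\sigma_n^2\,g(\rho_n,\lceil n/2\rceil)^2/g(\rho_n,n)^2$ for a universal constant $c>0$. A short computation shows the ratio $g(\rho_n,\lceil n/2\rceil)^2/g(\rho_n,n)^2$ is bounded below uniformly: for $a_n\le1$ it is at least $1/2$ (using $\lceil n/2\rceil\ge n-\lceil n/2\rceil$), and for $1<a_n\le(1+M/n)^2$ it is bounded below by a constant $c_M>0$ via $\rho_n^{\,n}\le e^{M}$. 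With $\sigma_n^2\ge\underline{\sigma}$ this produces a constant $c_0=c_0(M,\underline{\sigma})>0$ with $E_n[Q_n]\ge 2c_0$ for all $P_n\in\mathbf{P}_{n,0}$ and all $|\rho_n|\le1+M/n$; the same monotonicity gives $E_n[Q_n]\le\sigma_n^2$, and expanding $(\sum_t y_{n,t-1}^2)^2$ and bounding the fourth-order shock products with $E_n[u_{n,t}^4]<K_4$ yields $\sup E_n[Q_n^2]\le C_M<\infty$.

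The mean bound alone does not suffice, because in the unit-root regime $Q_n$ converges to the nondegenerate positive limit $\sigma_n^2\int_0^1 W(r)^2\,dr$, so $Q_n$ does not concentrate and any second-moment (Paley--Zygmund) estimate only bounds $P(Q_n\ge1/K)$ away from zero by a constant strictly below one. Since the claim is an iterated limit ($n\to\infty$, then $K\to\infty$), I would argue by contradiction through a diagonal subsequence. If the claim failed, there would be $\delta_0>0$, indices $n_m\to\infty$, and models $(P_{n_m},\rho_{n_m})$ with $P(Q_{n_m}<1/m)\ge\delta_0$. By the sign change $u_{n,t}\mapsto(-1)^t u_{n,t}$, which preserves Assumption \ref{appendix:AV_assumptions} and leaves $Q_n$ unchanged while sending $\rho_n\mapsto-\rho_n$, I may assume $\rho_{n_m}\to1$, and then pass to a further subsequence along which $\sigma_{n_m}^2\to\sigma^2\in[\underline{\sigma},\overline{\sigma}]$ and $n_m(1-\rho_{n_m}^2)\to\kappa\in[-\kappa_0,\infty]$, the lower bound $-\kappa_0$ coming from $\rho_n\le1+M/n$.

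It then remains to identify the subsequential limit of $Q_{n_m}$ and show it is almost surely positive. When $\kappa=\infty$ (the stationary and mildly integrated cases) the geometric decay of the covariances of $y_{n,t-1}^2$ gives $\operatorname{Var}_n(Q_n)=O\big((n(1-\rho_n^2))^{-1}\big)\to0$, so $Q_{n_m}\to\sigma^2\ge\underline{\sigma}>0$ in probability. When $\kappa<\infty$, a functional central limit theorem for the triangular array---available because the bounded fourth moment supplies a Lindeberg condition---together with the continuous mapping theorem gives $Q_{n_m}\Rightarrow\sigma^2 V_\kappa$, where $V_\kappa=\int_0^1 J_\kappa(r)^2\,dr$ for an Ornstein--Uhlenbeck limit $J_\kappa$ (Brownian motion when $\kappa=0$) and $P(V_\kappa>0)=1$. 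In either case the limit law puts no mass at $0$, so for any continuity point $\eta>0$ one has $P(Q_{n_m}<\eta)\to P(\sigma^2 V_\kappa<\eta)$, and letting $\eta\downarrow0$ yields $\liminf_m P(Q_{n_m}<1/m)\le P(\sigma^2 V_\kappa\le0)=0$, contradicting $P(Q_{n_m}<1/m)\ge\delta_0$. This proves the proposition.

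The main obstacle is the lower-tail control across the full range $|\rho_n|\le1+M/n$, especially at the transition between the regime where the limit is the deterministic constant $\sigma^2$ and the regime where it is the random variable $\sigma^2 V_\kappa$. The contradiction device avoids gluing a single limit law across regimes, but it still demands two nontrivial inputs: the triangular-array functional central limit theorem in the local-to-unity regime under only a fourth-moment assumption, and the anti-concentration fact $P(V_\kappa=0)=0$. The latter is exactly where the absence of a density assumption on the shocks is felt---one cannot rely on smoothness of the finite-$n$ law and must instead use that the \emph{limit} process is continuous---while the mean bound from the first step is what guarantees that none of the subsequential limits can be degenerate at zero.
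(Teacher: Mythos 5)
Your proposal is correct and follows essentially the same route as the paper: extract subsequences, split according to whether $n(1-|\rho_n|)$ diverges (where a mean/variance computation gives convergence in probability to $\sigma_0^2>0$) or stays bounded (where a martingale functional CLT plus continuous mapping gives convergence to $\sigma_0^2\int_0^1 J_{\kappa}(r)^2\,dr$, which is almost surely positive), with the $(-1)^t u_{n,t}$ sign flip reducing negative $\rho_n$ to positive. The contradiction framing, the preliminary uniform moment bounds, and the parametrization by $n(1-\rho_n^2)$ rather than $n(1-|\rho_n|)$ are cosmetic differences; the only small imprecision is that the sign-flipped shocks are independent but not literally i.i.d., which is harmless because (as in the paper) the FCLT needed is one for martingale difference arrays.
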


\begin{proof}
    See Section \ref{sec:proof_prop_b1} in \textcolor{black}{Online Supplemental} Appendix \ref{sec:appendix_online1}.
\end{proof} 

\begin{theorem}\label{thm:sequence_models}
Suppose Assumption \ref{appendix:AV_assumptions} holds. Then, for any sequence $h_n \le n$ such that $h_n = o\left(n\right)$, we have
    \begin{equation*}
      \sup_{h \le h_n} ~  \sup_{P_n \in \mathbf{P}_{n,0}}   ~ \sup_{|\rho| \le 1+M/n} ~ \sup_{x \in \mathbf{R}} ~ \left|  J_n(x,h,P_n,\rho) - \Phi(x)   \right| \to 0~, \quad \text{as} \quad n \to \infty,
    \end{equation*}
    where $J_n(\cdot,h,P_n,\rho)$ is as in \eqref{eq:cdf_root} and $\Phi(x)$ is the cdf of the standard normal distribution.
\end{theorem}

\begin{proof}
    See Section \ref{sec:proof_thm_b1} in \textcolor{black}{Online Supplemental} Appendix  \ref{sec:appendix_online1}.
\end{proof}

\begin{proposition}\label{prop:CI_LA_AR}
Suppose Assumption \ref{appendix:AV_assumptions} holds. In addition, assume $\rho_n = 1 - c_1/n$ and $h_n$ is such that $h_n \le n$ and $h_n/\sqrt{n} \to c_2 $ as $n \to \infty$ where $c_1,c_2>0$. Then,
$$ \liminf_{n \to \infty} P_n \left( [1/L,L] \subseteq  C_{la-ar}^*(h_n,1-\alpha) \right) \ge 1-\alpha $$
for any $L > 1 $, where $C_{la-ar}^*(h,1-\alpha)$ is defined in Remark \ref{rem:other-valid-bootstraps}, and presented below
$$ C_{la-ar}^*(h,1-\alpha) = \left[ (\hat{\beta}_n(1) - \hat{s}_n(1) c_n^*(1, 1-\alpha))^{h},~(\hat{\beta}_n(1) + \hat{s}_n(1) c_n^*(1, 1-\alpha))^{h} \right]~.$$
\end{proposition} 

\begin{proof}
    See Section \ref{sec:proof_prop_b2} in \textcolor{black}{Online Supplemental} Appendix \ref{sec:appendix_online1}.
\end{proof}

\subsection{Asymptotic Refinements}\label{sec:appendix_asympt_ref}

Consider the sequence $\{z_t: 1 \le t \le n\}$ defined as 
$$ z_t = \rho z_{t-1} + u_t~, \quad \text{and} \quad z_0 =  \sum_{\ell = 0}^{\infty} \rho^{\ell} u_{-\ell}~,$$
where $\{u_{-\ell} : \ell \ge 0 \}$ is an i.i.d. sequence with the same distribution as $u_1$. This appendix presents asymptotic expansion results for distributions of real value functions based on sample averages of the sequence $\{X_t = F(z_{t-1},z_t,z_{t+h}) : 1 \le t \le n-h \}$, where $F$ is a function that we define below. Our approach in this section relies on the framework and results presented in \cite{gotze1994asymptotic} and \cite{bhattacharya1978validity}.

Let $F(\cdot~; \sigma^2, V, \rho): \mathbf{R}^3 \to \mathbf{R}^8$ be a function defined at $(x,y,z)$ equal to
\begin{align}
      \Big( &(z-\rho^h y) (y-\rho x),~ (y-\rho x)^2-\sigma^2, ~((z-\rho^h y) (y-\rho x))^2-V,~(z-\rho^h y)  (y-\rho x)^3,~ \notag \\
             &(y-\rho x) x,~ (z-\rho^h y)  x,~ (z-\rho^h y) ^2 (y-\rho x) x,~ (z-\rho^h y) (y-\rho x)^2 x  \Big)~,\label{eq:def_F_appendix}
\end{align}
where $\sigma^2 = \sigma^2(P) = E_P[u_1^2]$, $V = V(\rho, h, P) = E_P[\xi_1^2 u_1^2]$, $\xi_1 = \xi_1(\rho,h) \equiv \sum_{\ell=1}^h \rho^{h-\ell} u_{1+\ell}$, and $P$ is the distribution of the shocks that verified Assumption \ref{appendix:AV_assumptions2} that we define below. Using that $u_t = z_t - \rho z_{t-1}$, $\xi_t = z_{t+h} - \rho^h z_t$, and the definition of $F$ in \eqref{eq:def_F_appendix}, we can write the sequence of random vectors $\{X_t = F(z_{t-1},z_t,z_{t+h}; \sigma^2, V, \rho)): 1 \le t \le n-h \}$ as follows
\begin{equation}\label{eq:Xi-vec}
    X_t = (\xi_t u_t, u_t^2-\sigma^2,(\xi_tu_t)^2-V,\xi_t u_t^3, u_t z_{t-1}, \xi_t z_{t-1}, \xi_t^2 u_t z_{t-1}, \xi_t u_t^2 z_{t-1})~.
\end{equation}

We assume in this section that $h \in \mathbf{N}$ is fixed and $|\rho|<1$. Moreover, for any fixed positive constants $C_{18}>0$ and $C_{\sigma}>0$, we consider the next assumption that imposes restrictions on the distribution of the shocks $P$.

\begin{assumption}\label{appendix:AV_assumptions2}
\hspace{1cm}\vspace{-0.25cm}

\begin{enumerate}
    \item[i)] $\{u_t: 1 \le t \le n\}$ is independent and identically distributed with $E[u_t] =0$.

    \item[ii)] $u_t$ has a positive continuous density.

    \item[iii)] $E[u_t^{18}] \le C_{18} < \infty$ and $E[u_t^2] \ge C_{\sigma} $.    
\end{enumerate}

\end{assumption}

Assumption \ref{appendix:AV_assumptions2} implies that the sequence $\{z_t: 1 \le t \le n\}$ is strictly stationary. By construction, $E[X_t] = \mathbf{0} \in \mathbf{R}^8$. Define 
\begin{equation}
    \Sigma = \lim_{n \to \infty} Cov\left(  (n-h)^{-1/2}  \sum_{t=1}^{n-h} X_t \right)~.
\end{equation}
The asymptotic covariate matrix $\Sigma$ is non-singular due to Lemma 2.1 in \cite{gotze1994asymptotic}, Assumption \ref{appendix:AV_assumptions2}, and how we defined the sequence $\{X_t: 1 \le t \le n-h\}$. Let $\mathcal{T}: \mathbf{R}^8 \to \mathbf{R}$ be a polynomial with coefficients depending on $\rho$, $E_P[u_1^2]$, and $E_P[u_1^4]$ such that $\mathcal{T}(\mathbf{0}) = 0$. Define
\begin{equation}\label{eq:tilde_J}
   \Tilde{J}_n(x,h,P,\rho) \equiv P_{\rho} \left( \frac{(n-h)^{1/2}}{ \tilde{\sigma} } \mathcal{T} \left( \frac{1}{n-h} \sum_{t=1}^{n-h} X_t \right) \le x \right) ~,
\end{equation} 
where $\Tilde{\sigma}^2$ is the asymptotic variance of $(n-h)^{1/2} \mathcal{T}( (n-h)^{-1} \sum_{t=1}^{n-h} X_t )$. The next theorem shows that the distribution $ \Tilde{J}_n(\cdot,h,P,\rho) $ admits a valid Edgeworth expansion.

\begin{theorem}\label{thm:edgeworth_approx_Jtilde}
Suppose Assumption \ref{appendix:AV_assumptions2} holds. Fix a given $h \in \mathbf{N}$ and $a \in (0,1)$. Then, for any $\rho \in [-1+a, 1-a]$, we have
$$ \sup_{x \in \mathbf{R}} \left | \Tilde{J}_n(x,h,P,\rho) - \left( \Phi(x) + \sum_{j=1}^2 n^{-j/2}q_j(x,h,P,\rho) \phi(x)  \right) \right| = O \left( n^{-3/2} \right)~,$$ 
where $\Tilde{J}_n(x,h,P,\rho)$ is as in \eqref{eq:tilde_J}, $\Phi(x)$ and $\phi(x)$ are the cdf and pdf of the standard normal distribution, and $q_1(x,h,P,\rho)$ and $q_2(x,h,P,\rho)$ are polynomials on $x$ with coefficients that are continuous function of moments of $P$ (up to order 12) and $\rho$. Furthermore, we have $q_1(x,h,P,\rho) = q_1(-x,h,P,\rho)$ and  $q_2(x,h,P,\rho) = -q_2(-x,h,P,\rho)$.
\end{theorem}

The proof of Theorem \ref{thm:edgeworth_approx_Jtilde} is presented in Section \ref{sec:proof_thm_b2} in \textcolor{black}{Online Supplemental} Appendix \ref{sec:appendix_online2}. It relies on \cite{GotzeHipp1983,gotze1994asymptotic} to guarantee the existence of Edgeworth expansion for sample averages and in the results of \cite{bhattacharya1978validity} to complete the proof. 

For the empirical distribution $\hat{P}_n$ defined in \eqref{eq:centered_residuals} and the estimator $\hat{\rho}_n$ defined in \eqref{eq:rho_hat}, we consider the bootstrap sequence $\{z_{b,t}^* : 1 \le t \le n \}$ defined as
$$ z_{b,t}^* = \hat{\rho}_n z_{b,t-1}^* + u_{b,t}^*~, \quad \text{and} \quad z_{b,0}^* =  \sum_{\ell = 0}^{\infty} \hat{\rho}_n^{\ell} u_{b,-\ell}^*~,$$
where $\{u_{b,j}^* : j \le n \}$ is an i.i.d. sequence draw from the distribution $\hat{P}_n$. Define the sequence of random vectors $\{X_{b,t}^* = F(z_{b,t-1}^*,z_{b,t}^*,z_{b,t+h}^*; \hat{\sigma}_n^2, \hat{V}_n, \hat{\rho}_n) : 1 \le t \le n-h \} $, where $F(\cdot)$ is as in \eqref{eq:def_F_appendix} and $\hat{\sigma}_n^2, \hat{V}_n, \hat{\rho}_n$ are the defined using $\hat{P}_n$ and $ \hat{\rho}_n$. 
 
\begin{theorem}\label{thm:edgeworth_approx_Jtilde_bootstrap}
Suppose Assumption \ref{AV_assumptions} holds. Fix a given $h \in \mathbf{N}$ and $a \in (0,1)$. Then, for any $\rho \in [-1+a, 1-a]$ and $\epsilon \in (0,1/2)$, there exist constants $C_1$ and $C_2$ such that  
$$ P \left( \sup_{x \in \mathbf{R}} \left | \Tilde{J}_n(x,h,\hat{P}_n,\hat{\rho}_n) - \left( \Phi(x) + \sum_{j=1}^2 n^{-j/2}q_j(x,h,\hat{P}_n,\hat{\rho}_n) \phi(x)  \right) \right| >   C_1 n^{-3/2} \right)~,$$  
is lower than $C_2 n^{-1-\epsilon}$, where $\Tilde{J}_n(x,h,\cdot,\cdot)$ is as in \eqref{eq:tilde_J} and $X_{b,t}^*$ is replacing $X_{i}$, $\Phi(x)$ and $\phi(x)$ are the cdf and pdf of the standard normal distribution, and $q_1(x,h,\hat{P}_n,\hat{\rho}_n)$ and $q_2(x,h,\hat{P}_n,\hat{\rho}_n)$ are polynomials on $x$ with coefficients that are continuous function of moments of $\hat{P}_n$ (up to order 12) and $\hat{\rho}_n$. Furthermore, we have $q_1(x,h,\hat{P}_n,\hat{\rho}_n) = q_1(-x,h,\hat{P}_n,\hat{\rho}_n)$ and  $q_2(x,h,\hat{P}_n,\hat{\rho}_n) = -q_2(-x,h,\hat{P}_n,\hat{\rho}_n)$.
\end{theorem}

The proof of Theorem \ref{thm:edgeworth_approx_Jtilde_bootstrap} is presented in Section \ref{sec:proof_thm_b3} in \textcolor{black}{Online Supplemental} Appendix \ref{sec:appendix_online2}. It relies on \cite{GotzeHipp1983,gotze1994asymptotic}, \cite{bhattacharya1978validity}, and Lemma \ref{lemma:high-prob-event}.

\bibliographystyle{ecta.bst}

\bibliography{references.bib}

\begin{thebibliography}{43}
\newcommand{\enquote}[1]{``#1''}
\expandafter\ifx\csname natexlab\endcsname\relax\def\natexlab#1{#1}\fi

\bibitem[\protect\citeauthoryear{Andrews}{Andrews}{1993}]{andrews1993exactly}
\textsc{Andrews, D.~W.} (1993): \enquote{Exactly median-unbiased estimation of first order autoregressive/unit root models,} \emph{Econometrica}, 139--165.

\bibitem[\protect\citeauthoryear{Andrews}{Andrews}{2002}]{andrews2002higher}
---\hspace{-.1pt}---\hspace{-.1pt}--- (2002): \enquote{Higher-order improvements of a computationally attractive k-step bootstrap for extremum estimators,} \emph{Econometrica}, 70, 119--162.

\bibitem[\protect\citeauthoryear{Andrews}{Andrews}{2004}]{andrews2004block}
---\hspace{-.1pt}---\hspace{-.1pt}--- (2004): \enquote{The block--block bootstrap: improved asymptotic refinements,} \emph{Econometrica}, 72, 673--700.

\bibitem[\protect\citeauthoryear{Bhattacharya}{Bhattacharya}{1987}]{bhattacharya1987some}
\textsc{Bhattacharya, R.} (1987): \enquote{Some aspects of Edgeworth expansions in statistics and probability,} \emph{New Perspectives in Theoretical and Applied Statistics}, 157, 171.

\bibitem[\protect\citeauthoryear{Bhattacharya and Ghosh}{Bhattacharya and Ghosh}{1978}]{bhattacharya1978validity}
\textsc{Bhattacharya, R.~N. and J.~K. Ghosh} (1978): \enquote{On the validity of the formal Edgeworth expansion,} \emph{Annals of Statistics}, 6, 434--451.

\bibitem[\protect\citeauthoryear{Bose}{Bose}{1988}]{bose1988}
\textsc{Bose, A.} (1988): \enquote{Edgeworth correction by bootstrap in autoregressions,} \emph{Annals of Statistics}, 1709--1722.

\bibitem[\protect\citeauthoryear{Carlstein}{Carlstein}{1986}]{carlstein1986use}
\textsc{Carlstein, E.} (1986): \enquote{The use of subseries values for estimating the variance of a general statistic from a stationary sequence,} \emph{Annals of Statistics}, 1171--1179.

\bibitem[\protect\citeauthoryear{Gon{\c{c}}alves and Kilian}{Gon{\c{c}}alves and Kilian}{2004}]{gonccalves2004bootstrapping}
\textsc{Gon{\c{c}}alves, S. and L.~Kilian} (2004): \enquote{Bootstrapping autoregressions with conditional heteroskedasticity of unknown form,} \emph{Journal of Econometrics}, 123, 89--120.

\bibitem[\protect\citeauthoryear{Gospodinov}{Gospodinov}{2004}]{gospodinov2004asymptotic}
\textsc{Gospodinov, N.} (2004): \enquote{Asymptotic confidence intervals for impulse responses of near-integrated processes,} \emph{The Econometrics Journal}, 7, 505--527.

\bibitem[\protect\citeauthoryear{Götze and Hipp}{Götze and Hipp}{1983}]{GotzeHipp1983}
\textsc{Götze, F. and C.~Hipp} (1983): \enquote{{ Asymptotic expansions for sums of weakly dependent random vectors},} \emph{Z. Wahrscheinlichkeitstheorie verw Gebiete}, 64, 211–239.

\bibitem[\protect\citeauthoryear{Götze and Hipp}{Götze and Hipp}{1994}]{gotze1994asymptotic}
---\hspace{-.1pt}---\hspace{-.1pt}--- (1994): \enquote{Asymptotic distribution of statistics in time series,} \emph{Annals of Statistics}, 2062--2088.

\bibitem[\protect\citeauthoryear{Hall}{Hall}{1992}]{Hall1992TheBA}
\textsc{Hall, P.} (1992): \emph{The Bootstrap and Edgeworth Expansion}, Berlin: Springer, 1 ed.

\bibitem[\protect\citeauthoryear{Hall and Horowitz}{Hall and Horowitz}{1996}]{hall1996bootstrap}
\textsc{Hall, P. and J.~L. Horowitz} (1996): \enquote{Bootstrap critical values for tests based on generalized-method-of-moments estimators,} \emph{Econometrica}, 891--916.

\bibitem[\protect\citeauthoryear{Hansen}{Hansen}{1999}]{hansen1999grid}
\textsc{Hansen, B.~E.} (1999): \enquote{The grid bootstrap and the autoregressive model,} \emph{Review of Economics and Statistics}, 81, 594--607.

\bibitem[\protect\citeauthoryear{Herbst and Johannsen}{Herbst and Johannsen}{2024}]{herbst2024bias}
\textsc{Herbst, E.~P. and B.~K. Johannsen} (2024): \enquote{Bias in local projections,} \emph{Journal of Econometrics}, 240, 105655.

\bibitem[\protect\citeauthoryear{Horowitz}{Horowitz}{2001}]{Horowitz2001}
\textsc{Horowitz, J.~L.} (2001): \enquote{The Bootstrap,} \emph{Handbook of Econometrics}, 5, 3159--3228.

\bibitem[\protect\citeauthoryear{Horowitz}{Horowitz}{2019}]{Horowitz2019}
---\hspace{-.1pt}---\hspace{-.1pt}--- (2019): \enquote{Bootstrap Methods in Econometrics,} \emph{Annual Review of Economics}, 11, 193--224.

\bibitem[\protect\citeauthoryear{Inoue and Kilian}{Inoue and Kilian}{2002}]{inoue2002bootstrapping}
\textsc{Inoue, A. and L.~Kilian} (2002): \enquote{Bootstrapping autoregressive processes with possible unit roots,} \emph{Econometrica}, 70, 377--391.

\bibitem[\protect\citeauthoryear{Inoue and Kilian}{Inoue and Kilian}{2020}]{InoueKillian2020}
---\hspace{-.1pt}---\hspace{-.1pt}--- (2020): \enquote{The uniform validity of impulse response inference in autoregressions,} \emph{Journal of Econometrics}, 215, 450--472.

\bibitem[\protect\citeauthoryear{Inoue and Shintani}{Inoue and Shintani}{2006}]{inoue2006bootstrapping}
\textsc{Inoue, A. and M.~Shintani} (2006): \enquote{Bootstrapping GMM estimators for time series,} \emph{Journal of Econometrics}, 133, 531--555.

\bibitem[\protect\citeauthoryear{Jorda}{Jorda}{2005}]{Jorda2005}
\textsc{Jorda, B.} (2005): \enquote{Estimation and Inference of Impulse Responses by Local Projections,} \emph{American Economic Review}, 95, 161--182.

\bibitem[\protect\citeauthoryear{Kilian and Kim}{Kilian and Kim}{2011}]{KilianKim2011}
\textsc{Kilian, L. and Y.~J. Kim} (2011): \enquote{How Reliable Are Local Projection Estimators of Impulse Responses?} \emph{The Review of Economics and Statistics}, 93, 1460--1466.

\bibitem[\protect\citeauthoryear{Kunsch}{Kunsch}{1989}]{kunsch1989jackknife}
\textsc{Kunsch, H.~R.} (1989): \enquote{The jackknife and the bootstrap for general stationary observations,} \emph{Annals of Statistics}, 1217--1241.

\bibitem[\protect\citeauthoryear{Lahiri}{Lahiri}{2003}]{lahiri2003resampling}
\textsc{Lahiri, S.} (2003): \emph{Resampling methods for dependent data}, Springer Science \& Business Media.

\bibitem[\protect\citeauthoryear{Lahiri}{Lahiri}{1996}]{lahiri1996edgeworth}
\textsc{Lahiri, S.~N.} (1996): \enquote{On Edgeworth Expansion and Moving Block Bootstrap for StudentizedM-Estimators in Multiple Linear Regression Models,} \emph{Journal of Multivariate Analysis}, 56, 42--59.

\bibitem[\protect\citeauthoryear{Lazarus, Lewis, Stock, and Watson}{Lazarus et~al.}{2018}]{lazarus2018har}
\textsc{Lazarus, E., D.~J. Lewis, J.~H. Stock, and M.~W. Watson} (2018): \enquote{HAR inference: Recommendations for practice,} \emph{Journal of Business \& Economic Statistics}, 36, 541--559.

\bibitem[\protect\citeauthoryear{Lusompa}{Lusompa}{2023}]{lusompa2021local}
\textsc{Lusompa, A.} (2023): \enquote{Local Projections, Autocorrelation, and Efficiency,} \emph{Quantitative Economics}, 14, 1199--1220.

\bibitem[\protect\citeauthoryear{Mikusheva}{Mikusheva}{2007}]{Mikusheva2007}
\textsc{Mikusheva, A.} (2007): \enquote{Uniform Inference in Autoregressive Models,} \emph{Econometrica}, 75, 1411--1452.

\bibitem[\protect\citeauthoryear{Mikusheva}{Mikusheva}{2012}]{Mikusheva2012}
---\hspace{-.1pt}---\hspace{-.1pt}--- (2012): \enquote{{One‐Dimensional Inference in Autoregressive Models With the Potential Presence of a Unit Root},} \emph{Econometrica}, 80, 173--212.

\bibitem[\protect\citeauthoryear{Mikusheva}{Mikusheva}{2015}]{mikusheva2015second}
---\hspace{-.1pt}---\hspace{-.1pt}--- (2015): \enquote{Second Order Expansion of the t-statistic in AR (1) Models,} \emph{Econometric Theory}, 31, 426--448.

\bibitem[\protect\citeauthoryear{Montiel~Olea, Plagborg-M{\o}ller, Qian, and Wolf}{Montiel~Olea et~al.}{2024}]{olea2024double}
\textsc{Montiel~Olea, J.~L., M.~Plagborg-M{\o}ller, E.~Qian, and C.~K. Wolf} (2024): \enquote{Double Robustness of Local Projections and Some Unpleasant VARithmetic,} Tech. rep., National Bureau of Economic Research.

\bibitem[\protect\citeauthoryear{Montiel~Olea and Plagborg-Møller}{Montiel~Olea and Plagborg-Møller}{2021}]{MO-PM2020}
\textsc{Montiel~Olea, J.~L. and M.~Plagborg-Møller} (2021): \enquote{Local Projection Inference is Simpler and More Robust Than You Think,} \emph{Econometrica}, 89, 1789--1823.

\bibitem[\protect\citeauthoryear{Montiel~Olea and Plagborg-Møller}{Montiel~Olea and Plagborg-Møller}{2022}]{MO-PM2020corrigendum}
---\hspace{-.1pt}---\hspace{-.1pt}--- (2022): \enquote{Corrigendum: Local Projection Inference is Simpler and More Robust Than You Think,} \emph{available at \\ \hyperlink{https://scholar.princeton.edu/sites/default/files/lp_inference_corrigendum.pdf}{https://scholar.princeton.edu/sites/default/files/lp\_inference\_corrigendum.pdf}}.

\bibitem[\protect\citeauthoryear{Nakamura and Steinsson}{Nakamura and Steinsson}{2018}]{NakamuraSteinsson2018}
\textsc{Nakamura, E. and J.~Steinsson} (2018): \enquote{Identification in Macroeconomics,} \emph{Journal of Economic Perspectives}, 32, 59--86.

\bibitem[\protect\citeauthoryear{Park}{Park}{2003}]{park2003bootstrap}
\textsc{Park, J.~Y.} (2003): \enquote{Bootstrap unit root tests,} \emph{Econometrica}, 71, 1845--1895.

\bibitem[\protect\citeauthoryear{Park}{Park}{2006}]{park2006bootstrap}
---\hspace{-.1pt}---\hspace{-.1pt}--- (2006): \enquote{A bootstrap theory for weakly integrated processes,} \emph{Journal of Econometrics}, 133, 639--672.

\bibitem[\protect\citeauthoryear{Pesavento and Rossi}{Pesavento and Rossi}{2006}]{pesavento2006small}
\textsc{Pesavento, E. and B.~Rossi} (2006): \enquote{Small-sample confidence intervals for multivariate impulse response functions at long horizons,} \emph{Journal of Applied Econometrics}, 21, 1135--1155.

\bibitem[\protect\citeauthoryear{Phillips}{Phillips}{1977{\natexlab{a}}}]{phillips1977approximations}
\textsc{Phillips, P.~C.} (1977{\natexlab{a}}): \enquote{Approximations to some finite sample distributions associated with a first-order stochastic difference equation,} \emph{Econometrica}, 463--485.

\bibitem[\protect\citeauthoryear{Phillips}{Phillips}{1977{\natexlab{b}}}]{phillips1977general}
---\hspace{-.1pt}---\hspace{-.1pt}--- (1977{\natexlab{b}}): \enquote{A general theorem in the theory of asymptotic expansions as approximations to the finite sample distributions of econometric estimators,} \emph{Econometrica}, 1517--1534.

\bibitem[\protect\citeauthoryear{Phillips}{Phillips}{1998}]{phillips1998impulse}
---\hspace{-.1pt}---\hspace{-.1pt}--- (1998): \enquote{Impulse response and forecast error variance asymptotics in nonstationary VARs,} \emph{Journal of econometrics}, 83, 21--56.

\bibitem[\protect\citeauthoryear{Phillips}{Phillips}{2023}]{phillips2023estimation}
---\hspace{-.1pt}---\hspace{-.1pt}--- (2023): \enquote{Estimation and inference with near unit roots,} \emph{Econometric Theory}, 39, 221--263.

\bibitem[\protect\citeauthoryear{Ramey}{Ramey}{2016}]{ramey2016macroeconomic}
\textsc{Ramey, V.~A.} (2016): \enquote{Macroeconomic shocks and their propagation,} \emph{Handbook of Macroeconomics}, 2, 71--162.

\bibitem[\protect\citeauthoryear{Xu}{Xu}{2023}]{xu2022}
\textsc{Xu, K.-L.} (2023): \enquote{Local Projection Based Inference under General Conditions,} Tech. rep., No. 2023-001, Center for Applied Economics and Policy Research, Department of Economics, Indiana University Bloomington.

\end{thebibliography}


\begin{thebibliography}{11}
\newcommand{\enquote}[1]{``#1''}
\expandafter\ifx\csname natexlab\endcsname\relax\def\natexlab#1{#1}\fi

\bibitem[\protect\citeauthoryear{Andrews, Cheng, and Guggenberger}{Andrews et~al.}{2020}]{andrews2020generic}
\textsc{Andrews, D.~W., X.~Cheng, and P.~Guggenberger} (2020): \enquote{Generic results for establishing the asymptotic size of confidence sets and tests,} \emph{Journal of Econometrics}, 218, 496--531.

\bibitem[\protect\citeauthoryear{Bhattacharya and Ghosh}{Bhattacharya and Ghosh}{1978}]{bhattacharya1978validity}
\textsc{Bhattacharya, R.~N. and J.~K. Ghosh} (1978): \enquote{On the validity of the formal Edgeworth expansion,} \emph{Annals of Statistics}, 6, 434--451.

\bibitem[\protect\citeauthoryear{Davidson}{Davidson}{1994}]{Davidson1994}
\textsc{Davidson, J.} (1994): \emph{Stochastic Limit Theory: An Introduction for Econometricians}, Advanced Texts in Econometrics, Oxford University Press.

\bibitem[\protect\citeauthoryear{Dharmadhikari, Fabian, and Jogdeo}{Dharmadhikari et~al.}{1968}]{Dharmadhikari:Fabian:Jogdeo1968}
\textsc{Dharmadhikari, S.~W., V.~Fabian, and K.~Jogdeo} (1968): \enquote{{Bounds on the Moments of Martingales},} \emph{The Annals of Mathematical Statistics}, 39, 1719 -- 1723.

\bibitem[\protect\citeauthoryear{Götze and Hipp}{Götze and Hipp}{1983}]{GotzeHipp1983}
\textsc{Götze, F. and C.~Hipp} (1983): \enquote{{ Asymptotic expansions for sums of weakly dependent random vectors},} \emph{Z. Wahrscheinlichkeitstheorie verw Gebiete}, 64, 211–239.

\bibitem[\protect\citeauthoryear{Götze and Hipp}{Götze and Hipp}{1994}]{gotze1994asymptotic}
---\hspace{-.1pt}---\hspace{-.1pt}--- (1994): \enquote{Asymptotic distribution of statistics in time series,} \emph{Annals of Statistics}, 2062--2088.

\bibitem[\protect\citeauthoryear{Montiel~Olea and Plagborg-Møller}{Montiel~Olea and Plagborg-Møller}{2021{\natexlab{a}}}]{MO-PM2020}
\textsc{Montiel~Olea, J.~L. and M.~Plagborg-Møller} (2021{\natexlab{a}}): \enquote{Local Projection Inference is Simpler and More Robust Than You Think,} \emph{Econometrica}, 89, 1789--1823.

\bibitem[\protect\citeauthoryear{Montiel~Olea and Plagborg-Møller}{Montiel~Olea and Plagborg-Møller}{2021{\natexlab{b}}}]{MO-PM2020_SM}
---\hspace{-.1pt}---\hspace{-.1pt}--- (2021{\natexlab{b}}): \enquote{Supplement to `Local Projection Inference is Simpler and More Robust Than You Think',} \emph{Econometrica}, 89, 1789--1823.

\bibitem[\protect\citeauthoryear{Phillips}{Phillips}{1987}]{phillips1987towards}
\textsc{Phillips, P. C.~B.} (1987): \enquote{Towards a unified asymptotic theory for autoregression,} \emph{Biometrika}, 74, 535--547.

\bibitem[\protect\citeauthoryear{White}{White}{2000}]{white2014asymptotic}
\textsc{White, H.} (2000): \emph{Asymptotic Theory for Econometricians}, Academic Press.

\bibitem[\protect\citeauthoryear{Xu}{Xu}{2023}]{xu2022}
\textsc{Xu, K.-L.} (2023): \enquote{Local Projection Based Inference under General Conditions,} Tech. rep., No. 2023-001, Center for Applied Economics and Policy Research, Department of Economics, Indiana University Bloomington.

\end{thebibliography}


\end{document}